\newcommand{\BD}{\mathsf{BD}}
\newcommand{\BDcirc}{\mathsf{BD}_\circ}
\newcommand{\BDtriangle}{\mathsf{BD}_\triangle}
\newcommand{\CPL}{\mathsf{CPL}}
\newcommand{\conp}{\mathsf{coNP}}
\newcommand{\np}{\mathsf{NP}}
\newcommand{\DP}{\mathsf{DP}}
\newcommand{\Prop}{\mathsf{Prop}}
\newcommand{\Lit}{\mathsf{Lit}}
\newcommand{\circLit}{\mathsf{Lit}_\circ}
\newcommand{\triangleLit}{\mathsf{Lit}_\triangle}
\newcommand{\Lcirctriangle}{\mathcal{L}_{\circ,\triangle}}
\newcommand{\Lcirc}{\mathcal{L}_\circ}
\newcommand{\Ltriangle}{\mathcal{L}_\triangle}
\newcommand{\LBD}{\mathcal{L}_\BD}
\newcommand{\true}{\mathbf{T}}
\newcommand{\both}{\mathbf{B}}
\newcommand{\neither}{\mathbf{N}}
\newcommand{\false}{\mathbf{F}}
\newcommand{\DNF}{\mathsf{DNF}}
\newcommand{\CNF}{\mathsf{CNF}}
\newcommand{\NNF}{\mathsf{NNF}}
\newcommand{\Hmsf}{{\mathsf{H}}}
\newcommand{\Pmsf}{{\mathsf{P}}}
\newcommand{\Xmbf}{\mathbf{X}}
\newcommand{\Ymbf}{\mathbf{Y}}
\newcommand{\Vmc}{{\mathcal{V}}}
\newcommand{\cl}{\mathsf{cl}}
\newcommand{\four}{\mathbf{4}}
\newcommand{\consvDashBD}{\models^{\mathsf{cons}}_\BD}
\newcommand{\consvDashCPL}{\models^{\mathsf{cons}}_\CPL}
\newcommand{\PropSol}{\mathcal{S}^\mathsf{p}}
\newcommand{\Sol}{\mathcal{S}}
\newcommand{\BDminSol}{\Sol^\BD}
\newcommand{\ThminSol}{\Sol^\mathsf{Th}}
\newtheorem{theorem}{Theorem}
\newtheorem{proposition}{Proposition}
\newtheorem{lemma}{Lemma}
\newtheorem{example}{Example}
\newtheorem{definition}{Definition}
\newtheorem{convention}{Convention}
\title{Abductive Reasoning in a~Paraconsistent Framework}
\author{%
 Author name
 \affiliations
 Affiliation
 \emails
 email@example.com % email
}
\author{%
Meghyn Bienvenu$^{1,2}$\and
Katsumi Inoue$^3$\and
Daniil Kozhemiachenko$^1$ \\
\affiliations
$^1$Universit\'{e} de Bordeaux, CNRS, Bordeaux INP, LaBRI, UMR 5800\\
$^2$ Japanese-French Laboratory for Informatics, CNRS, NII, IRL 2537, Tokyo, Japan\\ 
$^3$National Institute of Informatics, Tokyo, Japan\\
\emails
\{meghyn.bienvenu,daniil.kozhemiachenko\}@u-bordeaux.fr,
inoue@nii.ac.jp
}
\begin{document}
\allowdisplaybreaks
\maketitle
\begin{abstract}
We explore the problem of explaining observations starting from a classically inconsistent theory by adopting a paraconsistent framework. We consider two expansions of the well-known Belnap--Dunn paraconsistent four-valued logic $\BD$: $\BDcirc$ introduces formulas of the form $\circ\phi$ (‘the information about $\phi$ is reliable’), while $\BDtriangle$ augments the language with formulas $\triangle\phi$ (‘there is information that $\phi$ is true’). We define and motivate the notions of abduction problems and explanations in $\BDcirc$ and $\BDtriangle$ and show that they are not reducible to one another. We analyse the complexity of standard abductive reasoning tasks (solution recognition, solution existence, and relevance / necessity of hypotheses) in both logics. Finally, we show how to reduce abduction in $\BDcirc$ and $\BDtriangle$ to abduction in classical propositional logic, thereby enabling the reuse of existing abductive reasoning procedures. 
\end{abstract}
\section{Introduction\label{sec:introduction}}
Logic-based abduction is an important form of reasoning with multiple applications in artificial intelligence, including diagnosis and commonsense reasoning \cite{EiterGottlob1995}. An \emph{abduction problem} can be generally formulated as a pair $\langle\Gamma,\chi\rangle$ consisting of a set of formulas $\Gamma$ (\emph{theory}) and a~formula $\chi$ (\emph{observation}) s.t.\ $\Gamma\not\models\chi$, and the task is to find an \emph{explanation}, i.e., a formula $\phi$ s.t.\ $\Gamma,\phi\models\chi$. Of course, not \emph{every} formula is intuitively acceptable as an explanation which is why there are usually some restrictions on $\phi$. In particular, $\Gamma\cup\{\phi\}$ should be consistent, $\phi$~should not entail $\chi$ by itself nor contain atoms not occurring in $\Gamma\cup\{\phi\}$, $\phi$ should be syntactically restricted so as to be easily understandable, and $\phi$ should constitute a weakest possible (or minimal) explanation, cf.\ discussion in ~\cite[\S4.2]{Marquis2000HDRUMS} or~\cite[\S3.3]{Aliseda2006abductivereasoningbook}. Most commonly, the third desiderata is enforced by requiring abductive solutions to take the form of \emph{terms} (conjunctions of literals), in which case the logically weakest solutions are simply the subset-minimal ones. 
%Note that the syntactic restriction to terms 
%is important, as otherwise there would always be a unique (but hard to interpret) entailment-minimal 
%solution, namely, the disjunction of all solutions. 
% As it is convenient to provide solutions in the form of \emph{terms}, (conjunctions of literals), the last requirement is sometimes substituted with ‘$\phi$~should be \emph{the shortest} explanation’.

Note, however, that in classical propositional logic ($\CPL$) any \emph{contradictory} theory is inconsistent. Thus, there is no explanation of $\chi$ from a~contradictory~$\Gamma$. This can be circumvented in two ways. First, by \emph{repairing} $\Gamma$, i.e., making it consistent and then proceeding as usual (cf.,~e.g.,~\cite{DuWangShen2015}). Second, by moving to a~\emph{paraconsistent} logic. The characteristic feature of such logics is the failure of the explosion principle --- $p,\neg p\not\models q$.
\paragraph{Abduction in Paraconsistent Logics}
The question of how to employ paraconsistent logics to perform abductive reasoning on classically inconsistent theories has already generated interest in the philosophical logic community. For example, \cite{Carnielli2006} considers abduction in a~three-valued logic called \emph{Logic of Formal Inconsistency} ($\mathbf{LFI1}$), obtaining by expanding the language of $\CPL$ (classical propositional logic) with new connectives $\bullet\phi$ and $\circ\phi$ (read ‘$\phi$ has a~non-classical value’ or ‘the information about $\phi$ is unreliable’ and ‘$\phi$ has a~classical value’ or ‘the information about $\phi$ is reliable’, respectively). More recently, \cite{Bueno-SolerCarnielliConiglioRodriguesFilho2017} and~\cite{ChlebowskiGajdaUrbanski2022} considered abductive explanations in the \emph{minimal Logic of Formal Inconsistency} ($\mathbf{mbC}$), and \cite{RodriguesConiglioAntunesBueno-SolerCarnielli2023} consider abduction in a~four-valued \emph{Logic of Evidence and Truth} ($\mathsf{LET_K}$).

These studies showcase the interest of paraconsistent abduction, but as they issue from a different research community, the formulation of abductive solutions and the questions that are explored depart from those typically considered in knowledge representation and reasoning (KR). In particular, these works allow \emph{arbitrary} formulas as solutions (rather than terms). Moreover, to the best of our knowledge, there are no results on the complexity of paraconsistent reasoning tasks (e.g., solution existence). %, nor are there currently any terminating procedures for abductive solution generation. Indeed, while tableaux procedures have been suggested in the existing works, there was no analysis of their computational properties.
Also, in $\mathbf{mbC}$ and $\mathsf{LET_K}$, $\circ$ and $\bullet$ do not have truth-functional semantics, which complicates the comparison with classical logic and the reuse of established techniques. 

%In these papers, however, abduction problems can be solved by \emph{arbitrary} formulas (not terms as usually done in classical logic). Moreover, to the best of our knowledge, problems and computational properties usually considered regarding abductive reasoning (solution recognition, relevance of hypotheses, etc.) did not receive attention in the existing literature on paraconsistent abduction. Finally, in $\mathbf{mbC}$ and $\mathsf{LET_K}$, $\circ$ and $\bullet$ do not have truth-functional semantics which complicates the use of established techniques in the classical abduction and comparison with the classical logic. Thus, in this paper, we revisit paraconsistent abduction and consider the framework of a~well-known paraconsistent propositional logic $\BD$ by~\cite{Dunn1976,Belnap1977fourvalued,Belnap1977computer}.
\paragraph{Abduction in Belnap--Dunn Logic}
The preceding considerations motivate us to revisit paraconsistent abduction by taking a KR perspective and adopting the~well-known paraconsistent propositional logic $\BD$ by~\cite{Dunn1976,Belnap1977fourvalued,Belnap1977computer}. The main idea of $\BD$ is to treat the values of formulas as the information an agent (or a~computer as in~\cite{Belnap1977computer}) might have w.r.t.\ a~given statement $\phi$. This results in four ‘Belnapian’ values:
\begin{itemize}
\item $\true$ --- ‘the agent is only told that $\phi$ is true’;
\item $\false$ --- ‘the agent is only told that $\phi$ is false’;
\item $\both$ --- ‘the agent is told that $\phi$ is false and that it is true’;
\item $\neither$ --- ‘the agent is not told that $\phi$ is false nor that it is true’.
\end{itemize}
The truth and falsity conditions of $\neg$, $\wedge$, and $\vee$ are defined in a~classical manner but assumed to be \emph{independent}.
\begin{center}
\begin{tabular}{c|c|c}
&\textbf{is true when}&\textbf{is false when}\\\hline
$\neg\phi$&$\phi$ is false&$\phi$ is true\\
$\phi_1\wedge\phi_2$&$\phi_1$ and $\phi_2$ are true&$\phi_1$ or $\phi_2$ is false\\
$\phi_1\vee\phi_2$&$\phi_1$ or $\phi_2$ is true&$\phi_1$ and $\phi_2$ are false
\end{tabular}
\end{center}
One can see from the table above that there are no \emph{valid formulas} (that always have value in $\{\true,\both\}$, i.e., \emph{at least true}) over $\{\neg,\wedge,\vee\}$. Likewise, there are no formulas that are always \emph{at least non-true} (have value in $\{\neither,\false\}$). Thus, defining $\phi\models_\BD\chi$ as ‘in every valuation $v$ s.t.\ $v(\phi)\in\{\true,\both\}$, $v(\chi)\in\{\true,\both\}$ as well’, we obtain that $p\wedge\neg p\not\models_\BD q$.

Note also that, as already observed in \cite{RodriguesConiglioAntunesBueno-SolerCarnielli2023}, the $\{\neg,\wedge,\vee\}$-language is too weak for abduction: there is no solution for $\langle\{p\vee q\},q\rangle$ except for $q$ itself because $p\vee q,\neg p\not\models_\BD q$. However, if one assumes that $p$ has value $\false$, this will explain $q$ provided $p\vee q$. To do this, one needs to expand the language of $\BD$ with new connectives.

One option is to formulate abductive solutions in $\BDcirc$, the expansion of $\BD$ with the \emph{truth-functional} version of $\circ$ by~\cite{OmoriWaragai2011,OmoriSano2014TiL}. The connective $\circ$ is interpreted as follows: $\circ\phi$ has value $\true$ if $\phi$ has value $\true$ or $\false$, and has value $\false$, otherwise. In this expanded language, the formula $\neg p\wedge\circ p$, which expresses that 
%This will allow us to say that 
\emph{there is reliable information that $p$ is false}, yields a solution to $\langle\{p\vee q\},q\rangle$. 
%with the formula $\neg p\wedge\circ p$. 
Another possibility is to adopt $\BDtriangle$, the expansion of $\BD$ with $\triangle$ from~\cite{SanoOmori2014}. Here, $\triangle\phi$ can be interpreted as ‘there is information that $\phi$ is true’ and has the following semantics: $v(\triangle\phi)=\true$ if $v(\phi)\in\{\true,\both\}$ and $v(\triangle\phi)=\false$, otherwise. In this case, the formula $\neg p\wedge\neg\triangle p$ (which reads: \emph{$p$ is false and there is no information that it is true}) is a~solution.

%Adopting abductive explanations in $\BDtriangle$ and $\BDcirc$ will also allow us to address the issue that motivated our work, namely, to be able to produce intuitive abductive explanations for %original challenge, that is, %obtain
%solutions to $\BD$ abduction problems that have no solutions over the $\{\neg,\wedge,\vee\}$-language, but also allows us to 
%solve some 
%bduction problems that unsolvable in classical logic.
%Furthermore, 
Importantly, $\BDtriangle$ and $\BDcirc$ will allow us to solve some abduction problems that do not admit any solutions %unsolvable in the 
in classical logic. 
The following example adapted from~\cite[Example~5]{RodriguesConiglioAntunesBueno-SolerCarnielli2023} shows how one can deal with explanations from \emph{classically inconsistent theories}.
\begin{example}\label{example:impossible1}
Assume that a~valuable item was stolen and we have contradictory information about the theft: it was stolen by either Paula or Quinn, but both of them claim to have an alibi. This can be represented with $\Gamma=\{p\vee q,\neg p,\neg q\}$. Classically, there is no explanation for $p$ nor $q$ w.r.t.\ the theory $\Gamma$. However, in $\BDcirc$, we can explain $q$ by assuming that \emph{Paula's alibi was confirmed by a~reliable source}, which can be represented by the formula $\circ p$. This abduction problem can also be solved in $\BDtriangle$ as follows: $q$ can be explained by assuming that \emph{Paula's alibi was not disputed} --- $\neg\triangle p$.
\end{example}
\paragraph{Contributions}
In this paper, we formalize abductive reasoning in $\BD$ and explore its computational properties. We consider abductive solutions defined as terms in $\BDtriangle$ and $\BDcirc$ and compare the classes of abduction problems that can be solved in $\BDtriangle$ and $\BDcirc$. We establish an (almost complete) picture of the complexity of standard abductive reasoning tasks (solution recognition, solution existence, and relevance and necessity of hypotheses) in $\BDtriangle$ and $\BDcirc$, for two entailment-based notions of minimality. %Namely, we tackle solution recognition, solution existence, and relevance and necessity of hypotheses. 
We show how $\BD$ abduction problems can be reduced to abduction in classical propositional logic (and vice-versa), which makes it possible 
%Additionally, we show how 
to apply classical consequence-finding methods to generate abductive solutions 
in $\BDtriangle$ and $\BDcirc$.
%n generation in Belnap--Dunn logic. 
%by embedding $\BD$ abduction problems into the classical ones.
\paragraph{Plan of the Paper}
The remainder of the text is organised as follows. In Section~\ref{sec:BD}, we formally introduce $\BD$, $\BDcirc$, and $\BDtriangle$ and discuss their semantical and computational properties. In Section~\ref{sec:abduction}, we present the notions of abduction problem and explanation in expansions of $\BD$. Sections~\ref{sec:complexity} and~\ref{sec:solvingBDproblems} are dedicated to the complexity of solving $\BD$ abduction problems. Section~\ref{sec:problemembeddings} discusses embeddings of $\BD$ abduction problems into the classical framework. Finally, in Section~\ref{sec:conclusion}, we summarise the paper's results and outline a~plan for future work. Due to limited space, some proofs have been put in the appendix of the extended version~\cite{longversion}.
\section{$\BD$ and its Expansions\label{sec:BD}}
Since $\BD$, $\BDcirc$, and $\BDtriangle$ use the same set of truth values and the same $\neg$, $\wedge$, and $\vee$, we present their syntax and semantics together 
%put together their semantics 
in the following definition.
\begin{definition}\label{def:BDsemantics}
The language $\Lcirctriangle$ is constructed from a~fixed countable set of propositional variables $\Prop$ via the following grammar:
\begin{align*}
\Lcirctriangle\ni\phi&\coloneqq p\in\Prop\mid\neg\phi\mid(\phi\wedge\phi)\mid(\phi\vee\phi)\mid\circ\phi\mid\triangle\phi
\end{align*}
In what follows, we use $\LBD$, $\Lcirc$ and $\Ltriangle$ to denote the fragments of $\Lcirctriangle$ over $\{\neg,\wedge,\vee\}$, $\{\neg,\wedge,\vee,\circ\}$, and $\{\neg,\wedge,\vee,\triangle\}$, respectively. We will also use $\bullet\phi$ as a~shorthand for ${\neg\circ}\phi$ and use $\Prop(\phi)$ to denote the set of all variables occurring in a formula~$\phi$.

We set $\four=\{\true,\both,\neither,\false\}$ and define a \emph{$\BD$ valuation} as a~mapping $v:\Prop\rightarrow\four$ that is extended to complex formulas as follows.
\begin{center}
\normalsize{
\begin{tabular}{ccc}
\begin{tabular}{c|cccc}
$\wedge$ & $\true$ & $\both$ & $\neither$ & $\false$ \\\hline
$\true$ & $\true$ & $\both$ & $\neither$ & $\false$ \\
$\both$ & $\both$ & $\both$ & $\false$ & $\false$ \\
$\neither$ & $\neither$ & $\false$ & $\neither$ & $\false$ \\
$\false$ & $\false$ & $\false$ & $\false$ & $\false$
\end{tabular}
&
\begin{tabular}{c|c}
&$\neg$\\\hline
$\true$&$\false$\\
$\both$&$\both$\\
$\neither$&$\neither$\\
$\false$&$\true$ 
\end{tabular}
&
\begin{tabular}{c|c}
&$\circ$\\\hline
$\true$&$\true$\\
$\both$&$\false$\\
$\neither$&$\false$\\
$\false$&$\true$ 
\end{tabular}
\\
&\\
\begin{tabular}{c|cccc}
$\vee$ & $\true$ & $\both$ & $\neither$ & $\false$ \\\hline
$\true$ & $\true$ & $\true$ & $\true$ & $\true$ \\
$\both$ & $\true$ & $\both$ & $\true$ & $\both$ \\
$\neither$ & $\true$ & $\true$ & $\neither$ & $\neither$ \\
$\false$ & $\true$ & $\both$ & $\neither$ & $\false$
\end{tabular}
&
\begin{tabular}{c|c}
&$\triangle$\\\hline
$\true$&$\true$\\
$\both$&$\true$\\
$\neither$&$\false$\\
$\false$&$\false$ 
\end{tabular}
&
\begin{tabular}{c|c}
&$\bullet$\\\hline
$\true$&$\false$\\
$\both$&$\true$\\
$\neither$&$\true$\\
$\false$&$\false$ 
\end{tabular}
\end{tabular}}
\end{center}

A~formula $\phi$ is \emph{$\BD$-valid}, written $\BD\models\phi$, iff $\forall v:v(\phi)\in\{\true,\both\}$, and $\phi$ is \emph{$\BD$-satisfiable} iff $\exists v:v(\phi)\in\{\true,\both\}$. A~set of formulas $\Gamma$~\emph{entails}~$\chi$, written $\Gamma\models_\BD\chi$, iff
\begin{align*}
\forall v\left[(\forall\phi\in\Gamma~v(\phi)\in\{\true,\both\})\Rightarrow v(\chi)\in\{\true,\both\}\right]
\end{align*}
We will henceforth use $\phi\simeq\chi$ ($\phi$ and $\chi$ are \emph{weakly equivalent}) to denote that $\phi$ and $\chi$ entail one another and $\phi\equiv\chi$ ($\phi$ and $\chi$ are \emph{strongly equivalent}) to denote that $\phi$ and $\chi$ have the same Belnapian value in every $\BD$ valuation.
\end{definition}

Let us briefly discuss the intuitive interpretation of $\circ$ and~$\bullet$. % in more detail.
Since we do not model sources and multiple agents, ‘the information concerning $p$ is reliable’ ($\circ p$) can be construed as an assumption on the part of the reasoning agent that the available information concerning $p$ is trustworthy, meaning that if we have the information that $p$ holds, then we cannot have evidence for its negation, not $p$ (and likewise, if we have information that $p$ doesn't hold, there is no evidence for $p$ holding). The negation, $\bullet p$, which is termed ‘unreliable’ but may be more precisely phrased as ‘unavailable or unreliable’, then covers two cases: the absence of information to support or refute $p$ (so, neither $p$ nor its negation can be inferred) or the presence of contradictory information (both the statement and its negation follow).

It is important to note that there is no $\BD$-valid $\phi\in\LBD$ since $\neither$ is preserved by $\neg$, $\wedge$, and $\vee$. In $\BDcirc$ and $\BDtriangle$, however, we can define
\begin{align*}
\top_{\BDcirc}&\coloneqq{\circ\circ}p&\bot_{\BDcirc}&\coloneqq{\bullet\circ}p\\
\top_{\Ltriangle}&\coloneqq\triangle p\vee\neg\triangle p&\bot_{\Ltriangle}&=\triangle p\wedge\neg\triangle p
\end{align*}
and check that for every valuation $v$,
$$
v(\top_{\BDcirc})\!=\!v(\top_{\BDtriangle})\!=\!\true~\qquad v(\bot_{\BDcirc})\!=\!v(\bot_{\BDtriangle})\!=\!\false
$$
%\begin{align*}
%\forall v:
%v(\top_{\BDcirc})\!=\!v(\top_{\BDtriangle})\!=\!\true~\&~v(\bot_{\BDcirc})\!=\!v(\bot_{\BDtriangle})\!=\!\false
%\end{align*}

One can also notice that $\BDcirc$ is less expressive than $\BDtriangle$. Indeed, $\circ\phi\equiv(\triangle\phi\wedge\neg\triangle\neg\phi)\vee(\triangle\neg\phi\wedge\neg\triangle\phi)$, while on the other hand, $\triangle$ cannot be defined via $\circ$~\cite[Corollaries~6.1 and~6.24]{OmoriSano2015}. It is also easy to check that distributive and De Morgan laws hold w.r.t.\ $\neg$, $\wedge$, and $\vee$, and that the following equivalences hold for $\triangle$: 
\begin{align}\label{equ:triangleequivalence}
\triangle\triangle\phi&\equiv\triangle\phi&\neg\triangle\neg\triangle\phi&\equiv\triangle\phi\nonumber\\
\triangle(\phi\wedge\chi)&\equiv\triangle\phi\wedge\triangle\chi&\triangle(\phi\vee\chi)&\equiv\triangle\phi\vee\triangle\chi
\end{align}
Thus, every $\phi\in\Ltriangle$ can be transformed into a strongly equivalent formula $\NNF(\phi)$ in~\emph{negation normal form}, i.e., built from literals of the form $p$, $\neg p$, $\triangle p$, $\neg\triangle p$, $\triangle\neg p$, and $\neg\triangle\neg p$ using $\wedge$ and $\vee$.
%Likewise, every \todo{$\Ltriangle$}-formula 
%can be put into disjunctive and conjunctive normal forms: 
%can be transformed into a~$\DNF$ and a~$\CNF$\footnote{$\LBD$ formulas can also be transformed into $\DNF$s and $\CNF$s~--- cf.~\cite[Theorem~3.9]{Font1997}.} of the following forms:
%\begin{align}\label{equ:triangleNF}
%\DNF(\phi)&\coloneqq\bigvee\limits_{i\in I}\bigwedge\limits_{j\in J}l^i_j&\CNF(\phi)&\coloneqq\bigwedge\limits_{i\in I}\bigvee\limits_{j\in J}l^i_j\\\nonumber\tag{$l^i_j\in\{p,\neg p,\triangle p,\neg\triangle p,\triangle\neg p,\neg\triangle\neg p\}$}
%\end{align}

One can also show that in $\BDcirc$ contraposition holds w.r.t.\ $\neg$ while in $\BDtriangle$ w.r.t.\ $\neg\triangle$ and that the deduction theorem can be recovered using $\triangle$.
\begin{proposition}\label{prop:contraposition}
Let $\phi,\chi\in\Lcirc$ and $\varrho,\sigma,\tau\in\Ltriangle$. Then the following statements hold.
\begin{enumerate}
\item $\phi\models_\BD\chi$ iff $\neg\chi\models_\BD\neg\phi$.
\item $\varrho,\sigma\models_\BD\tau$ iff $\varrho,\neg\triangle\tau\models_\BD\neg\triangle\sigma$.
\item $\varrho,\sigma\models_\BD\tau$ iff $\varrho\models_\BD\neg\triangle\sigma\vee\tau$.
\end{enumerate}
\end{proposition}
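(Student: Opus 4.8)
The plan is to reduce all three equivalences to statements about the designated set $\{\true,\both\}$ under every valuation. Write $D_v(\psi)$ as shorthand for the Boolean condition ``$v(\psi)\in\{\true,\both\}$'' (``$\psi$ is at least true under $v$''), so that by definition $\Gamma\models_\BD\chi$ unfolds to $\forall v\,[(\bigwedge_{\gamma\in\Gamma}D_v(\gamma))\Rightarrow D_v(\chi)]$. From the truth tables one reads off the recursion $D_v(\neg\psi)=\neg D_v(\psi')$-type clauses; the two facts I will lean on are that $D_v$ distributes over $\vee$ (i.e.\ $D_v(\alpha\vee\beta)=D_v(\alpha)\vee D_v(\beta)$) and that the $t$-bit of $v(\neg\phi)$ equals the $f$-bit of $v(\phi)$, so ``$\neg\phi$ at least true under $v$'' means exactly ``$\phi$ at least false under $v$''.

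For statements~2 and~3 the key observation does all the work: from the tables of $\triangle$ and $\neg$, for every $v$ and every formula $\psi$ we have $D_v(\triangle\psi)=D_v(\psi)$ and $D_v(\neg\triangle\psi)=\neg D_v(\psi)$, and moreover $\triangle\psi,\neg\triangle\psi$ only ever take the values $\true$ or $\false$. Hence, fixing $v$ and abbreviating the Booleans $a=D_v(\varrho)$, $b=D_v(\sigma)$, $c=D_v(\tau)$, statement~2 becomes the pointwise tautology $(a\wedge b\to c)\Leftrightarrow(a\wedge\neg c\to\neg b)$, and statement~3 becomes $(a\wedge b\to c)\Leftrightarrow(a\to(\neg b\vee c))$, using $D_v(\neg\triangle\sigma\vee\tau)=\neg b\vee c$. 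As these equivalences hold for each $v$ separately, quantifying over all $v$ yields~2 and~3. So the content here is pure classical rearrangement once $\triangle$ has collapsed ``at least true'' to a genuine bit.

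The real work is statement~1, where no $\triangle$ is available to linearise truth, so I would instead exploit a semantic symmetry. Let $h:\four\to\four$ fix $\true$ and $\false$ and swap $\both$ and $\neither$. I would check that $h$ is an automorphism of $(\four,\neg,\wedge,\vee,\circ)$: it is the evident automorphism of the truth-order lattice ($\false$ below $\both,\neither$ below $\true$), it commutes with $\neg$, and a four-line table gives $h(\circ x)=\circ h(x)$. Consequently, setting $v^h(p)=h(v(p))$, an induction on $\phi$ gives $v^h(\phi)=h(v(\phi))$ for all $\phi\in\Lcirc$. Writing values as bits $(t,f)$ one has $h(t,f)=(\neg f,\neg t)$, so $h$ sends the $t$-bit of $v(\phi)$ to the negation of its $f$-bit; in the present notation this reads $D_v(\neg\phi)=\neg D_{v^h}(\phi)$.

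Statement~1 then falls out: $\neg\chi\models_\BD\neg\phi$ is $\forall v\,[D_v(\neg\chi)\Rightarrow D_v(\neg\phi)]$, which rewrites via $D_v(\neg\psi)=\neg D_{v^h}(\psi)$ to $\forall v\,[\neg D_{v^h}(\chi)\Rightarrow\neg D_{v^h}(\phi)]=\forall v\,[D_{v^h}(\phi)\Rightarrow D_{v^h}(\chi)]$, and since $v\mapsto v^h$ is a bijection of the valuations this is precisely $\phi\models_\BD\chi$. The one place that can fail --- and the main obstacle I would foreground --- is the requirement that $h$ commute with the unary connective: it does for $\circ$, which is exactly why contraposition holds w.r.t.\ $\neg$ on $\Lcirc$, but it fails for $\triangle$ (e.g.\ $h(\triangle\both)=\true\neq\false=\triangle h(\both)$), which is precisely why over $\Ltriangle$ one must contrapose through $\neg\triangle$ as in statement~2. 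I would therefore spend the care on the $\circ$-compatibility check and explicitly note the $\triangle$ failure to explain the contrast between the two logics.
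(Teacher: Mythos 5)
Your proof is correct and takes essentially the same approach as the paper: your automorphism $h$ is precisely the paper's valuation transform $v^\partial$ (swap $\both\leftrightarrow\neither$, shown by induction to commute with all $\Lcirc$ connectives, with $\circ$-compatibility doing the crucial work), and your pointwise Boolean reduction for statements 2 and 3 is the same designation-level check the paper performs using the fact that $\neg\triangle\psi$ is designated exactly when $\psi$ is not. The differences are only presentational --- the algebraic automorphism/bit packaging and the explicit remark that $h$ fails to commute with $\triangle$ --- and do not change the argument.
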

%\begin{proof}
%Cf.~Section~\ref{ssec:contrapositionproof}.
%\end{proof}

Note that Proposition~\ref{prop:contraposition} entails that $\phi\simeq\chi$ iff $\phi\equiv\chi$ in $\BDcirc$. On the other hand, this is not the case in $\BDtriangle$: $p\simeq\triangle p$ but $p$ is not strongly equivalent to $\triangle p$. Still, we can show that some $\triangle$'s can be removed from formulas while preserving weak equivalence.
% Still, we can show that $\Ltriangle$ formulas are weakly equivalent to $\LBD$ formulas under some conditions.
\begin{definition}\label{def:flatformulas}
Let $\phi\in\Lcirctriangle$. We use $\phi^\flat$ to denote the result of removing $\triangle$'s that are \emph{not in the scope of $\neg$} from $\NNF(\phi)$.
\end{definition}
\begin{proposition}\label{prop:triangleequivalence}
Let $\phi\in\Lcirctriangle$. Then, $\phi\simeq\phi^\flat$. 
\end{proposition}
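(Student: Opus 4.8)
The plan is to exploit the fact that $\phi\simeq\chi$ holds exactly when $\phi$ and $\chi$ have the same ``truth set'' $\{v:v(\phi)\in\{\true,\both\}\}$, together with the observation that this truth predicate is computed \emph{compositionally} through $\wedge$ and $\vee$. Concretely, the first step is to inspect the $\wedge$- and $\vee$-tables and record that $v(\psi_1\wedge\psi_2)\in\{\true,\both\}$ iff $v(\psi_1)\in\{\true,\both\}$ and $v(\psi_2)\in\{\true,\both\}$, and dually $v(\psi_1\vee\psi_2)\in\{\true,\both\}$ iff $v(\psi_1)\in\{\true,\both\}$ or $v(\psi_2)\in\{\true,\both\}$. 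Hence, for any formula built from a collection of ``leaf'' subformulas using only $\wedge$ and $\vee$, whether $v$ makes the whole formula at least true is a fixed monotone Boolean function of which leaves $v$ makes at least true.

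The second step records the single weak equivalence that drives the result: for every $\psi$ we have $\triangle\psi\simeq\psi$, since $v(\triangle\psi)\in\{\true,\both\}$ iff $v(\triangle\psi)=\true$ iff $v(\psi)\in\{\true,\both\}$, directly from the $\triangle$-table. In particular $\triangle p\simeq p$ and $\triangle\neg p\simeq\neg p$.

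Next I would reduce to negation normal form. Since $\circ$ is definable from $\triangle$ via $\circ\phi\equiv(\triangle\phi\wedge\neg\triangle\neg\phi)\vee(\triangle\neg\phi\wedge\neg\triangle\phi)$, every $\phi\in\Lcirctriangle$ is strongly equivalent to a formula of $\Ltriangle$, and hence---using the equivalences in~(\ref{equ:triangleequivalence}) together with De Morgan and distributivity---to its negation normal form $\NNF(\phi)$, an $\{\wedge,\vee\}$-combination of the six literal types $p,\neg p,\triangle p,\neg\triangle p,\triangle\neg p,\neg\triangle\neg p$. By definition, $\phi^\flat$ is obtained from $\NNF(\phi)$ by deleting precisely the $\triangle$'s not under a $\neg$, i.e.\ by rewriting each leaf $\triangle p$ to $p$ and each leaf $\triangle\neg p$ to $\neg p$, leaving the leaves $\neg\triangle p$ and $\neg\triangle\neg p$ untouched. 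By the first step, the truth of $\NNF(\phi)$ under any $v$ depends only on the truth status of its leaves; by the second step, each such rewriting preserves the truth status of the affected leaf. Hence $\NNF(\phi)$ and $\phi^\flat$ have the same truth set, i.e.\ $\NNF(\phi)\simeq\phi^\flat$, and since $\phi\equiv\NNF(\phi)$ gives $\phi\simeq\NNF(\phi)$, we conclude $\phi\simeq\phi^\flat$.

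The point requiring care---and the reason the definition restricts to $\triangle$'s \emph{not} in the scope of $\neg$---is that $\simeq$ is \emph{not} a congruence: weakly equivalent subformulas cannot be freely substituted under arbitrary connectives. For instance $\triangle p\simeq p$ but $\neg\triangle p\not\simeq\neg p$ (take $v(p)=\both$: then $v(\neg p)=\both$ is at least true, whereas $v(\triangle p)=\true$ gives $v(\neg\triangle p)=\false$, which is not). What rescues the argument is that in NNF the only connectives standing above the literals are $\wedge$ and $\vee$, under which, by the first step, the truth predicate is purely a function of the truth predicates of the subformulas; substituting truth-set-preserving leaves is therefore safe. Establishing this compositionality cleanly, and noting that it genuinely fails through $\neg$ (and through $\triangle$), which is exactly why those occurrences are retained, is the crux; the remaining manipulations are the routine normal-form rewriting already licensed by the equivalences in the excerpt.
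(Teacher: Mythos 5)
Your proof is correct and takes essentially the same approach as the paper's: both rest on the two facts that $\triangle\psi\simeq\psi$ and that the ``at least true'' predicate is computed compositionally through $\wedge$ and $\vee$ (this is exactly the paper's congruence step showing $\tau_1\simeq\tau'_1$ and $\tau_2\simeq\tau'_2$ imply $\tau_1\circledast\tau_2\simeq\tau'_1\circledast\tau'_2$), the only cosmetic difference being that you substitute leaves directly in $\NNF(\phi)$ --- which matches Definition~\ref{def:flatformulas} more directly --- while the paper first passes to $\DNF$ and strips $\triangle$'s term by term via $\phi\circledast\triangle\chi\simeq\phi\circledast\chi$. Your explicit warning that $\simeq$ is not a congruence under $\neg$ (e.g.\ $\triangle p\simeq p$ yet $\neg\triangle p\not\simeq\neg p$ at $v(p)=\both$) is a welcome clarification that the paper leaves implicit.
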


We finish the section by establishing faithful embeddings of $\CPL$ into $\BDcirc$, $\BDtriangle$, and $\BD$.
\begin{proposition}\label{prop:CPLtoBDtrianglecirc}
Let $\phi\in\LBD$. We let $\phi^\triangle$ denote the result of replacing each occurrence of each variable $p$ in $\phi$ with $\triangle p$ and $\phi^\circ$ denote the result of replacing $p$'s with $\circ p$'s. Then $\phi$ is $\CPL$-valid iff $\phi^\triangle$ is $\BD$-valid iff $\phi^\circ$ is $\BD$-valid.
\end{proposition}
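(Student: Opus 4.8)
The plan is to exploit the fact that both $\triangle p$ and $\circ p$ always evaluate to a \emph{classical} value, i.e.\ to an element of $\{\true,\false\}$, so that $\phi^\triangle$ and $\phi^\circ$ behave exactly like classical formulas once we read off a suitably induced two-valued valuation from $v$. Since the three displayed equivalences follow from one another by transitivity, it suffices to treat the $\triangle$ case and the $\circ$ case separately, each against $\CPL$-validity.

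First I would record two elementary closure facts from the truth tables of Definition~\ref{def:BDsemantics}: (i) $v(\triangle\psi),v(\circ\psi)\in\{\true,\false\}$ for every valuation $v$ and formula $\psi$; and (ii) the set $\{\true,\false\}$ is closed under $\neg,\wedge,\vee$, on which these connectives restrict exactly to the classical operations under the identification $\true\leftrightarrow 1$, $\false\leftrightarrow 0$. For the $\triangle$ case, given a $\BD$ valuation $v$ I would define the classical valuation $v^\ast$ by $v^\ast(p)=1$ iff $v(p)\in\{\true,\both\}$, equivalently iff $v(\triangle p)=\true$. A straightforward induction on the structure of $\phi\in\LBD$, using (i) and (ii), then shows that $v(\phi^\triangle)=\true$ when $v^\ast(\phi)=1$ and $v(\phi^\triangle)=\false$ otherwise; in particular $v(\phi^\triangle)\in\{\true,\false\}$ always, so that being $\BD$-designated (value in $\{\true,\both\}$) coincides with having value $\true$. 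Hence $\phi^\triangle$ is $\BD$-valid iff $v^\ast(\phi)=1$ for every $v$.

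To close the $\triangle$ case I would observe that the map $v\mapsto v^\ast$ is \emph{onto} the set of all classical valuations: given any $u\colon\Prop\to\{0,1\}$, setting $v(p)=\true$ when $u(p)=1$ and $v(p)=\false$ when $u(p)=0$ yields $v^\ast=u$. Therefore ``$v^\ast(\phi)=1$ for all $v$'' is precisely the $\CPL$-validity of $\phi$, giving the equivalence. The $\circ$ case is identical in form, using $v^\circ(p)=1$ iff $v(\circ p)=\true$ iff $v(p)\in\{\true,\false\}$; for surjectivity one sets instead $v(p)=\true$ when $u(p)=1$ and $v(p)=\both$ when $u(p)=0$, so that $v^\circ=u$.

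There is no serious obstacle here: the entire content sits in the closure observations (i) and (ii) and in the surjectivity of the two induced maps, while the supporting induction is routine. The only point that warrants a moment's care is that $\BD$-validity is defined with designated set $\{\true,\both\}$, whereas $\phi^\triangle$ and $\phi^\circ$ only ever take values in $\{\true,\false\}$; one must therefore note explicitly that $\both$ can never arise for these images, so that validity reduces to the formula always taking the value $\true$, which is what makes the reduction to a two-valued semantics exact.
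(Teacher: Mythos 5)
Your proposal is correct and follows essentially the same route as the paper's own proof: an induction showing that $\phi^\triangle$ and $\phi^\circ$ take only classical values matching the evaluation of $\phi$ under the induced two-valued valuation ($v^\ast(p)=1$ iff $v(\triangle p)=\true$, resp.\ $v(\circ p)=\true$, exactly the paper's $\mathbf{v}_v$), together with explicit $\BD$ valuations realising any classical valuation (the paper's $v_\mathbf{v}$). The only cosmetic difference is your choice of witnessing value in the $\circ$ case ($\both$ where the paper uses $\neither$ for $u(p)=0$), which works equally well since both are mapped to $\false$ by $\circ$.
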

%\begin{proof}
%Cf.~Section~\ref{ssec:CPLtoBDtrianglecircproof}.
%\end{proof}
Finally, $\CPL$ can be embedded even in $\BD$ itself.
\begin{proposition}\label{prop:CPLtoBD}
Let $\phi,\chi\in\LBD$ and $\Prop[\{\phi,\chi\}]=\{p_1,\ldots,p_n\}$. Then $\phi\models_\CPL\chi$ iff $$\phi\wedge\bigwedge\limits^{n}_{i=1}(p_i\vee\neg p_i)\models_\BD\chi\vee\bigvee\limits^{n}_{i=1}(p_i\wedge\neg p_i).$$
\end{proposition}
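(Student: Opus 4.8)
The plan is to show that the two added subformulas act as \emph{guards} that confine the $\BD$ valuations witnessing the entailment to the classical ones, i.e., those assigning each variable a value in $\{\true,\false\}$, on which $\BD$ reduces to $\CPL$. First I would record, by inspecting the truth tables, the designation behaviour of the guards: for any valuation $v$ and variable $p$, the disjunction $p\vee\neg p$ is at least true (value in $\{\true,\both\}$) iff $v(p)\neq\neither$, and the conjunction $p\wedge\neg p$ is at least true iff $v(p)=\both$. Consequently, $v$ makes the left-hand side $\phi\wedge\bigwedge_i(p_i\vee\neg p_i)$ at least true iff $v(\phi)\in\{\true,\both\}$ and $v(p_i)\neq\neither$ for every $i$, and $v$ makes the right-hand side $\chi\vee\bigvee_i(p_i\wedge\neg p_i)$ at least true iff $v(\chi)\in\{\true,\both\}$ or $v(p_i)=\both$ for some $i$.

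The second ingredient is a routine lemma stating that $\BD$ collapses to $\CPL$ on classical valuations. Call $v$ \emph{classical} (on $\{p_1,\dots,p_n\}$) if $v(p_i)\in\{\true,\false\}$ for all $i$; such a $v$ induces a classical assignment $v^c$. By induction on the structure of $\psi\in\LBD$ over these variables, using that the restrictions of the $\BD$ tables for $\neg$, $\wedge$, $\vee$ to $\{\true,\false\}$ coincide with the classical tables (in particular $\{\true,\false\}$ is closed under these connectives), one gets $v(\psi)=\true$ iff $v^c\models\psi$ and $v(\psi)=\false$ otherwise. Hence on classical valuations ``at least true'' coincides with classical truth.

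With these in hand I would argue by contraposition. The $\BD$-entailment fails iff some $v$ makes the left-hand side at least true while the right-hand side is not at least true. By the first paragraph this means $v(\phi)\in\{\true,\both\}$, every $v(p_i)\neq\neither$, $v(\chi)\notin\{\true,\both\}$, and every $v(p_i)\neq\both$. The two conditions on the variables force $v(p_i)\in\{\true,\false\}$ for all $i$, so $v$ is classical; by the lemma, $v(\phi)\in\{\true,\both\}$ becomes $v^c\models\phi$ and $v(\chi)\notin\{\true,\both\}$ becomes $v^c\not\models\chi$. Thus the entailment fails iff there is a classical assignment satisfying $\phi$ but not $\chi$, i.e., iff $\phi\not\models_\CPL\chi$; negating both sides yields the claim.

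The only real subtlety, and where I would be careful, is checking that the guards precisely carve out the classical valuations: the left conjunction excludes the value $\neither$ from every variable and the right disjunction excludes $\both$, and it is exactly the \emph{combination} of ``left at least true'' and ``right not at least true'' that pins each $v(p_i)$ down to $\{\true,\false\}$. Everything else is direct table inspection and a straightforward structural induction, so I expect no genuine obstacle beyond bookkeeping.
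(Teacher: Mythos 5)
Your proof is correct and follows essentially the same route as the paper's: the paper likewise dispatches the ``only if'' direction by noting $p\vee\neg p$ is classically valid and $p\wedge\neg p$ classically unsatisfiable, and proves the ``if'' direction contrapositively by showing that any $\BD$ valuation designating the guarded premise but not the guarded conclusion must assign every $p_i$ a value in $\{\true,\false\}$, whence the classical collapse applies. Your version merely makes explicit the guard characterisations and the collapse lemma that the paper leaves implicit.
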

%\begin{proof}
%Cf.~Section~\ref{ssec:CPLtoBDproof}.
%\end{proof}

Notice that the embeddings of $\CPL$ into $\BDcirc$ and $\BDtriangle$ increase the size of $\phi$ only linearly. Thus, since $\CPL$-validity is $\conp$-complete, $\BD$-validity of $\Lcirc$ and $\Ltriangle$ formulas is also $\conp$-complete. Likewise, $\BD$ entailment is $\conp$-complete even for $\LBD$-formulas since the embedding of $\models_\CPL$ into $\models_\BD$ increases the size of formulas only linearly.\footnote{Membership in $\conp$ is immediate since $\BD$ and the considered expansions have truth-table semantics.}
\section{Abduction in $\BDtriangle$ and $\BDcirc$\label{sec:abduction}}
We begin the presentation of abduction in $\BDtriangle$ and $\BDcirc$ with definitions of literals, terms, and clauses in $\Lcirc$ and $\Ltriangle$. Note that since $\circ$ and $\bullet$ do not distribute over $\wedge$ and $\vee$, we cannot assume that in $\Lcirc$, literals do not contain binary connectives, terms are $\vee$-free and clauses are $\wedge$-free \emph{if we want every formula to be representable as a~conjunction of clauses or a~disjunction of terms}.
\begin{definition}[Literals, terms, and clauses]\label{def:literalstermsclauses}~
\begin{itemize}
\item \emph{Propositional literal} is a variable $p$ or its negation $\neg p$.
\item \emph{$\Ltriangle$-literal} has one of the following forms: $p$, $\neg p$, $\triangle p$, $\neg\triangle p$, $\triangle\neg p$, $\neg\triangle\neg p$ ($p\in\Prop$). \emph{$\Ltriangle$-clause} is a~disjunction of literals; \emph{$\Ltriangle$-term} is a~conjunction of literals.
\item \emph{$\Lcirc$-literal} has one of the following forms: $p$, $\neg p$, $\circ\phi$, $\bullet\phi$ ($\phi\in\Lcirc$). Clauses and terms are defined as above.
\end{itemize}
\end{definition}
The next statement is immediate.
\begin{proposition}\label{prop:DNFCNF}
Let $\phi\in\Lcirc$ and $\chi\in\Ltriangle$. Then there are conjunctions of $\Lcirc$- and $\Ltriangle$-clauses $\DNF(\phi)$ and $\DNF(\chi)$, and disjunctions of $\Lcirc$- and $\Ltriangle$-terms $\CNF(\phi)$ and $\CNF(\chi)$ s.t.\ $\phi\equiv\DNF(\phi)\equiv\CNF(\phi)$ and $\chi\equiv\DNF(\chi)\equiv\CNF(\chi)$.
\end{proposition}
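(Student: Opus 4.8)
The plan is to treat the two languages separately, doing $\Ltriangle$ first since the needed machinery is already available, and then adapting the argument to deal with the one feature that makes $\Lcirc$ awkward. In both cases the whole statement is really a consequence of the fact that, restricted to $\neg,\wedge,\vee$, the De~Morgan laws, the involutivity of $\neg$ (immediate from its truth table), and the distributive laws all hold as \emph{strong} equivalences in $\BD$, so the classical normal-form rewriting can be carried out valuation-by-valuation.

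For $\chi\in\Ltriangle$, I would start from the negation normal form $\NNF(\chi)$ discussed right after~\eqref{equ:triangleequivalence}: using De~Morgan, double negation, and the equivalences in~\eqref{equ:triangleequivalence} (in particular $\triangle\triangle\phi\equiv\triangle\phi$, $\neg\triangle\neg\triangle\phi\equiv\triangle\phi$, and the distribution of $\triangle$ over $\wedge$ and $\vee$), every $\chi$ is strongly equivalent to a formula built from $\Ltriangle$-literals using only $\wedge$ and $\vee$. Since these are all strong equivalences, I would then push $\wedge$ over $\vee$ (resp.\ $\vee$ over $\wedge$) exactly as in classical logic to obtain a disjunction of $\Ltriangle$-terms $\CNF(\chi)$ and a conjunction of $\Ltriangle$-clauses $\DNF(\chi)$, each strongly equivalent to $\chi$.

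For $\phi\in\Lcirc$, the obstacle, flagged in the discussion preceding Definition~\ref{def:literalstermsclauses}, is that $\circ$ and $\bullet$ do not distribute over $\wedge$ and $\vee$, so we cannot push $\circ$-subformulas down to $\circ$ applied to variables. The key observation is that we need not do so: the definition of $\Lcirc$-literal already permits $\circ\psi$ and $\bullet\psi$ for \emph{arbitrary} $\psi\in\Lcirc$, so I would simply treat each maximal subformula whose principal connective is $\circ$ as opaque. Concretely, I would replace every maximal $\circ$-headed occurrence $\circ\psi_i$ by a fresh variable $x_i$; what remains is a $\{\neg,\wedge,\vee\}$-formula $\phi'$ over the $x_i$ together with the variables of $\phi$ occurring outside the scope of any $\circ$. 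I would then apply the classical conversion of $\phi'$ into a disjunction of terms and a conjunction of clauses (sound here because De~Morgan, involutivity of $\neg$, and distributivity are strong $\BD$-equivalences over $\neg,\wedge,\vee$), and finally substitute back, sending each literal $x_i$ to $\circ\psi_i$ and each $\neg x_i$ to $\bullet\psi_i$ (legitimate since $\bullet$ abbreviates $\neg\circ$, so a negated abstracted atom becomes a genuine $\Lcirc$-literal). Because the rewriting only manipulates the Boolean skeleton while leaving the $x_i$ (hence the $\circ\psi_i$) untouched, the resulting $\CNF(\phi)$ and $\DNF(\phi)$ are strongly equivalent to $\phi$ with literals of the required shapes.

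The main point to get right is the $\Lcirc$ abstraction: that ``maximal $\circ$-headed subformula'' is well defined (each $\circ$-occurrence lies in a unique maximal one, and nested $\circ$'s are simply swallowed into the outermost one, e.g.\ $\circ\circ p$ is abstracted as a single atom), and that a negated abstracted atom lands on the literal form $\bullet\psi_i$ rather than outside the literal grammar. Once this is checked, everything reduces to the classical normal-form procedure applied to the Boolean skeleton, which is exactly why the paper can record the statement as immediate.
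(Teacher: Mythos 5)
Your proposal is correct and matches the paper, which records the proposition as immediate precisely because $\Lcirc$-literals are allowed to be $\circ\phi$ and $\bullet\phi$ for arbitrary $\phi\in\Lcirc$ (as the discussion preceding Definition~\ref{def:literalstermsclauses} explains), so that both cases reduce to the classical normal-form rewriting via the strong $\BD$-equivalences for De~Morgan, double negation, distributivity, and (for $\Ltriangle$) the $\triangle$-laws in~\eqref{equ:triangleequivalence}. Your abstraction of maximal $\circ$-headed subformulas as opaque atoms, with $\neg x_i$ landing on $\bullet\psi_i$, is exactly the routine justification the paper leaves implicit.
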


Note, however, that even though the definition of clauses and terms we gave above allows for a~strongly equivalent representation of every $\Lcirc$-formula, the clauses and terms may be difficult to interpret in natural language. Indeed, while $\circ p$ ($\bullet p$) can be understood as ‘information concerning $p$ is (un)reliable’, a~formula such as $\bullet(p\wedge\circ(\neg q\vee\bullet(r\wedge\neg s)))$ does not have any obvious~ natural-language interpretation. It thus makes sense to consider \emph{atomic} $\Lcirc$-literals.
\begin{definition}\label{def:atomicliterals}
\emph{Atomic $\Lcirc$-literals} are formulas of a~form $p$, $\neg p$, $\circ p$\footnote{Note that since $\circ\neg\phi\equiv\circ\phi$, we do not need to consider literals $\circ\neg p$ and $\bullet\neg p$.}, or $\bullet p$ with $p\in\Prop$. \emph{Atomic $\Lcirc$-clauses} are disjunctions of atomic literals and \emph{atomic $\Lcirc$-terms} are conjunctions of atomic literals.
\end{definition}
\begin{convention}
For a~propositional literal $l$, we set $\overline{p}=\neg p$, $\overline{\neg p}=p$ and $\overline{\overline{l}}=l$. Given a~formula $\phi$, we use $\Lit(\phi)$, $\circLit(\phi)$, and $\triangleLit(\phi)$ to denote the sets of propositional literals, atomic $\Lcirc$-literals, and $\Ltriangle$-literals occurring in $\phi$.
\end{convention}

We can now define abduction problems and solutions. 
\begin{definition}[Abduction problems and solutions]\label{def:abductiveproblem}~
\begin{itemize}
\item A \emph{$\BD$ abduction problem} is a~tuple $\mathbb{P}=\langle\Gamma,\psi,\Hmsf\rangle$ with $\Gamma\cup\{\psi\}\subseteq\LBD$, $\Gamma\not\models_\BD\psi$, and $\Hmsf$ is a~finite set of $\Ltriangle$-literals or atomic $\Lcirc$-literals. If $\Hmsf$ is not restricted, we will omit it for brevity. We call $\Gamma$~a~\emph{theory}, members of $\Hmsf$ \emph{hypotheses}, and $\chi$ an \emph{observation}.
\item An~\emph{$\Lcirc$-solution} of $\mathbb{P}$ is an atomic $\Lcirc$-term $\tau$ composed from the literals in $\Hmsf$ s.t.
$\Gamma,\tau\models_\BD\psi$ and
$\Gamma,\tau\not\models_\BD\bot$.
%\begin{itemize}
%\item $\Gamma,\tau\models_\BD\psi$;
%\item $\Gamma,\tau\not\models_\BD\bot$.
%\end{itemize}
\item An~\emph{$\Ltriangle$-solution} is an $\Ltriangle$-term composed from the literals in $\Hmsf$ s.t.
$\Gamma,\tau\models_\BD\psi$ and
$\Gamma,\tau\not\models_\BD\bot$. % in the same way.
\item A~solution $\tau$ is \emph{proper} if $\tau\not\models_\BD\psi$. 
\item A proper solution $\tau$ is \emph{$\models_\BD$-minimal} if there is no proper solution $\phi$ s.t.\ $\phi\not\simeq\tau$, $\tau\models_\BD\phi$. 
\item A~proper solution $\tau$ is \emph{theory-minimal} if there is no proper solution $\phi$ s.t.\ $\Gamma,\phi\not\models_\BD\tau$ and $\Gamma,\tau\models_\BD\phi$.
\end{itemize}
\end{definition}
\begin{convention}
Given an abduction problem $\mathbb{P}$, we will use $\Sol(\mathbb{P})$, $\PropSol(\mathbb{P})$, $\BDminSol(\mathbb{P})$, and $\ThminSol(\mathbb{P})$ to denote the sets of all solutions, all proper solutions, all $\models_\BD$-minimal solutions, and all theory-minimal solutions of $\mathbb{P}$, respectively.
\end{convention}

\begin{convention}
To simplify the presentation of examples, we shall sometimes omit $\Hmsf$ when specifying abduction problems. In such cases, it is assumed that $\Hmsf$ contains all possible $\Ltriangle$-literals or atomic $\Lcirc$-literals (depending on which language we are considering).
\end{convention}

Observe that we have two notions of minimality. The first one ($\models_\BD$-minimality or, entailment-minimality) generalises \emph{subset-minimality} by~\cite{EiterGottlob1995} to the $\BD$ setting. As we will see in Section~\ref{ssec:entailmentbetweenterms}, even though entailment between $\Ltriangle$- and atomic $\Lcirc$-terms is polynomially decidable, it is not equivalent to the containment of one term in the other, whence we need a~more general criterion. This approach to minimality is also presented by~\cite{Aliseda2006abductivereasoningbook}. Theory-minimal solutions are, essentially, \emph{least specific} solutions in the terminology of~\cite{Stickel1990,SakamaInoue1995} or \emph{least presumptive} solutions in the terminology of~\cite{Poole1989}. Theory-minimal solutions can also be seen as duals of \emph{theory prime implicates} from~\cite{Marquis1995}.

In addition, it is easy to see that even though a~theory-minimal solution is $\models_\BD$-minimal, the converse need not hold. Indeed, consider $\mathbb{P}=\langle\{p\vee q,r\},q\wedge r\rangle$. In both $\Ltriangle$ and $\Lcirc$, there are two $\models_\BD$-minimal solutions: $\neg\triangle p$ and $q$ in $\Ltriangle$, and $\neg p\wedge\circ p$ and $q$ in $\Lcirc$. But $\neg\triangle p$ and $\neg p\wedge\circ p$ are \emph{not theory-minimal}: $p\vee q,r,\neg\triangle p\models_\BD q$ and $p\vee q,r,\neg p\wedge\circ p\models_\BD q$.

One can also notice that allowing \emph{any} $\Lcirc$-terms (rather than only atomic $\Lcirc$-terms) results in even weaker solutions. Consider for example the following problem $\mathbb{P}=\langle\Gamma,\chi,\Hmsf\rangle$ (for the sake of discussion, we locally abuse notation and terminology by admitting non-atomic $\Lcirc$-terms as solutions).
\begin{align}\label{equ:circleterm}
\Gamma&=\{(p\wedge p')\vee(q\wedge q'),\neg(p\wedge p')\}&\chi&=q\wedge q'\nonumber\\\Hmsf&=\{p,p',\circ p,\circ p',\circ(p\wedge p')\}
\end{align}
It is clear that $\circ(p\wedge p')$ and $\circ p\wedge\circ p'$ solve~\eqref{equ:circleterm} and that $\circ p\wedge\circ p'\models_\BD\circ(p\wedge p')$. Moreover, one can see that $\circ(p\wedge p')$ is a~theory-minimal solution. Furthermore, some abduction problems cannot be solved if we only allow atomic terms. Indeed, one can check that no atomic $\Lcirc$-term over $p$ and $q$ properly solves $\langle\{p\vee q\},(p\vee\neg p)\wedge(q\vee\neg q)\rangle$. On the other hand, $\circ p\wedge\circ((p\vee\neg p)\wedge(q\vee\neg q))$ is a~solution.

In what follows, we will illustrate the differences between abduction in $\BDcirc$, $\BDtriangle$, and classical logic. 
%To facilitate the presentation of the following examples, we shall not explicitly define $\Hmsf$ but instead assume that it % abuse the notation and 
%assume that $\Hmsf$ 
%contains all possible $\Ltriangle$-literals or atomic $\Lcirc$-literals. % in the examples in this section.
%
First, we can observe that some problems have abductive solutions both in $\Ltriangle$ and $\Lcirc$ (cf.~Example~\ref{example:impossible1}). On the other hand, some problems can be solved only in $\Lcirc$. That is, there are no solutions in the form of $\Ltriangle$-terms\footnote{Of course, there are \emph{$\Ltriangle$-formulas} that can solve the problem but we are interested in solutions in the form of terms.} even though there are $\Lcirc$-terms that solve the problem.
\begin{example}\label{example:impossible2}
As in Example~\ref{example:impossible1}, either Paula or Quinn is culpable, but now there is also evidence that implicates Paula --- $p$. In this case, we want to justify that \emph{Paula is innocent} --- $\neg p$. There is, of course, no proper classical solution for $\langle\{p\vee q,p\},\neg p\rangle$. Likewise, one can check that this problem admits no proper solutions in~$\Ltriangle$, as $\tau \models_\BD\neg p$ for any $\Ltriangle$-term $\tau$ s.t.\ $p\vee q,p,\tau\models_\BD \neg p$.

How can we solve the problem in $\BDcirc$? We can add the %following 
$\Lcirc$-term $\bullet p$, i.e., assume that the \emph{evidence against Paula is unreliable}. This way, we have $p\vee q,p,\bullet p\models_\BD\neg p$, which is justified since one must not be convicted on unreliable evidence. It can be verified that $\bullet p$ is the unique $\BD$-minimal proper solution for $\langle\{p\vee q,p\},\neg p\rangle$.
\end{example}

Conversely, some problems can be solved only in $\Ltriangle$.
\begin{example}\label{example:impossible3}
Consider $\mathbb{P}=\langle\{p\vee\neg p\vee q\},q\rangle$. To explain $q$, one must assume that $p$ has value~$\neither$. Formally, this means that we assume $\neg\triangle p\wedge\neg\triangle\neg p$. One can check that this is a~theory-minimal proper solution. On the other hand, it is easy to see that there is no atomic $\Lcirc$-term (and, in fact, no $\Lcirc$-formula at all) that can solve $\langle\{p\vee\neg p\vee q\},q\rangle$.
\end{example}

Since $\Lcirc$- and $\Ltriangle$-solutions to $\BD$ abduction problems are incomparable, it makes sense to ask which $\Lcirc$-solutions can be represented as $\Ltriangle$-solutions and vice versa. In the remainder of the section, we answer this question.
\begin{definition}\label{def:determinedNfreeterms}~
\begin{itemize}[noitemsep,topsep=1pt]
\item A~satisfiable $\Ltriangle$-term $\tau$ is \emph{$\neither$-free} if for every $\neg\triangle l$ occurring in $\tau$, $\triangle\overline{l}$ or $\overline{l}$ also occurs in $\tau$.
\item A~satisfiable atomic $\Lcirc$-term $\tau$ is \emph{determined} if for every $p$ s.t.\ $\circ p$ or $\bullet p$ occurs in $\tau$, $p$ or $\neg p$ also occurs in~$\tau$.
%A~satisfiable atomic $\Lcirc$-term $\tau$ is \emph{determined} if for every $\circ p$ and every $\bullet q$ occurring in $\tau$, $p$ or $\neg p$ and $q$ or $\neg q$ also occur in $\tau$.
\item An $\Lcirc$-solution (resp.\ $\Ltriangle$-solution) $\varrho$ of a~$\BD$ abduction problem $\mathbb{P}$ is $\Ltriangle$-representable (resp.\ $\Lcirc$-representable) if there exists an $\Ltriangle$-solution (resp.\ $\Lcirc$-solution) $\sigma$ of $\mathbb{P}$ s.t.\ $\varrho\simeq\sigma$.
\end{itemize}
\end{definition}
\begin{theorem}\label{theorem:representablesolutions}~
\begin{enumerate}[noitemsep,topsep=1pt]
\item An $\Lcirc$-solution is $\Ltriangle$-representable iff it is determined.
\item An $\Ltriangle$-solution is $\Lcirc$-representable iff it is $\neither$-free.
\end{enumerate}
\end{theorem}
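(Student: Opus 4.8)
The plan is to reduce the entire statement to a finite, per-variable computation. The first thing I would observe is that all the solution conditions --- $\Gamma,\tau\models_\BD\psi$, $\Gamma,\tau\not\models_\BD\bot$, and properness $\tau\not\models_\BD\psi$ --- depend only on the set of valuations $v$ with $v(\tau)\in\{\true,\both\}$, hence only on the weak-equivalence class of $\tau$. Therefore, if $\varrho$ is an $\Lcirc$-solution and $\sigma$ is \emph{any} $\Ltriangle$-term with $\sigma\simeq\varrho$, then $\sigma$ is automatically an $\Ltriangle$-solution (taking $\Hmsf$ unrestricted, as per the convention, so that being ``composed from $\Hmsf$'' is free). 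Thus $\Ltriangle$-representability of $\varrho$ is equivalent to the purely semantic question: does there exist an $\Ltriangle$-term weakly equivalent to $\varrho$? --- and symmetrically for part~2.

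Next I would set up the per-variable bookkeeping. For a satisfiable term $\tau$ and a variable $p$, let $S_p(\tau)\subseteq\four$ be the set of values $a$ such that setting $v(p)=a$ makes every literal of $\tau$ mentioning $p$ designated, with $S_p(\tau)=\four$ when $p$ does not occur in $\tau$. Since each literal constrains a single variable and a conjunction is designated iff all conjuncts are, the designated set of $\tau$ is the product $\{v:\forall p\ v(p)\in S_p(\tau)\}$. The key lemma is that, for satisfiable terms, $\tau\simeq\sigma$ iff $S_p(\tau)=S_p(\sigma)$ for every $p$: two nonempty products coincide iff their factors do, since projecting onto coordinate $p$ recovers the factor (all other factors being nonempty).

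The technical heart is then a finite enumeration. Reading off the truth tables, the atomic $\Lcirc$-literals on $p$ have designated sets $\{\true,\both\}$ ($p$), $\{\both,\false\}$ ($\neg p$), $\{\true,\false\}$ ($\circ p$), $\{\both,\neither\}$ ($\bullet p$); the $\Ltriangle$-literals give $\{\true,\both\}$ ($p,\triangle p$), $\{\both,\false\}$ ($\neg p,\triangle\neg p$), $\{\neither,\false\}$ ($\neg\triangle p$), $\{\true,\neither\}$ ($\neg\triangle\neg p$). Closing under intersection, I would record that the satisfiable footprints realizable by $\Lcirc$-terms are exactly $\four,\{\true,\both\},\{\both,\false\},\{\true,\false\},\{\both,\neither\},\{\true\},\{\both\},\{\false\}$, and those realizable by $\Ltriangle$-terms are exactly $\four,\{\true,\both\},\{\both,\false\},\{\neither,\false\},\{\true,\neither\},\{\true\},\{\both\},\{\false\},\{\neither\}$. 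The common menu is $\four,\{\true,\both\},\{\both,\false\},\{\true\},\{\both\},\{\false\}$; the $\Lcirc$-exclusive footprints are $\{\true,\false\}$ and $\{\both,\neither\}$, and the $\Ltriangle$-exclusive ones are $\{\neither,\false\}$, $\{\true,\neither\}$, $\{\neither\}$. The only nontrivial checks are the non-realizability claims, e.g.\ no $\Ltriangle$-term has footprint $\{\true,\false\}$: no single $\Ltriangle$-literal's designated set contains both $\true$ and $\false$, so any footprint containing both must come from the empty conjunction and hence equal $\four\neq\{\true,\false\}$; the remaining cases are analogous one-line arguments.

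Finally I would match the menus to the syntactic conditions. For part~1, $\varrho$ fails to be determined exactly when some $p$ carries $\circ p$ or $\bullet p$ but neither $p$ nor $\neg p$; by satisfiability (which forbids $\circ p\wedge\bullet p$) the footprint is then $S_p(\varrho)=\{\true,\false\}$ or $\{\both,\neither\}$, precisely the $\Lcirc$-exclusive sets, so no $\Ltriangle$-term matches and $\varrho$ is not representable. Conversely, a determined term has every footprint in the common menu, and I would exhibit a weakly equivalent $\Ltriangle$-term variable by variable (e.g.\ realize $\{\true\}$ as $p\wedge\neg\triangle\neg p$ and $\{\false\}$ as $\neg p\wedge\neg\triangle p$). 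Part~2 is the mirror image: $\sigma$ fails $\neither$-freeness exactly when some $\neg\triangle l$ occurs without $\overline{l}$ or $\triangle\overline{l}$, whose satisfiable footprints are $\{\neither,\false\}$, $\{\true,\neither\}$, or $\{\neither\}$, the $\Ltriangle$-exclusive sets, so $\neither$-freeness is equivalent to all footprints lying in the common menu, hence to $\Lcirc$-representability. The main obstacle I anticipate is not any single deep step but getting the enumeration exactly right and aligning each excluded footprint with the correct clause of ``determined''/``$\neither$-free'', carefully tracking the satisfiability side-conditions that rule out the empty footprint and force the witnessing companion literal.
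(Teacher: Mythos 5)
Your proof is correct, and it reaches the result by a genuinely different route in the two \emph{negative} directions. The paper's own proof is syntactic: for the positive directions it rewrites per-variable blocks via $l\wedge\circ l\equiv\triangle l\wedge\neg\triangle\overline{l}$ (essentially the same witnessing terms you exhibit, e.g.\ realizing $\{\true\}$ as $p\wedge\neg\triangle\neg p$), but for the converses it asserts ``by examining the truth table semantics'' that no conjunction of $\Ltriangle$-literals is weakly equivalent to $\circ p$ or $\bullet q$, and for Statement 2 it appeals to the cited Omori--Sano non-definability of $\neg\triangle p$ in $\BDcirc$. Your footprint machinery --- the lemma that a satisfiable term's designated set is the product of per-variable sets $S_p(\tau)\subseteq\four$, so that for satisfiable terms $\tau\simeq\sigma$ iff the footprints agree coordinatewise --- replaces both the informal truth-table check and the external citation with one finite enumeration, and it makes explicit a step the paper glosses over: the paper jumps from ``$\circ p$ alone is not expressible by $\Ltriangle$-literals'' to ``the whole term $\tau$ is not representable,'' which is exactly the projection argument your product lemma supplies (a single exclusive footprint at one coordinate blocks weak equivalence of the entire term). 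Your opening reduction (solutionhood and properness are invariant under $\simeq$, reading $\Hmsf$ as unrestricted) matches the paper's implicit treatment, since its constructions of $\tau^{\circ\triangle}$ and $\tau^{\triangle\circ}$ likewise never check membership in $\Hmsf$. What the paper's route buys is brevity and explicit translation procedures reusable later; what yours buys is self-containedness and a complete classification of realizable footprints ($\{\true,\false\}$, $\{\both,\neither\}$ being $\Lcirc$-exclusive; $\{\neither,\false\}$, $\{\true,\neither\}$, $\{\neither\}$ being $\Ltriangle$-exclusive), from which both statements fall out mechanically. I checked your menus, the intersection closures, and the alignment with ``determined''/``$\neither$-free'' (satisfiability correctly excludes $\circ p\wedge\bullet p$ and $p\wedge\neg p\wedge\circ p$, and forces the companion literal to shrink the $\neg\triangle l$ footprints into the common menu); all are exactly right.
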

\begin{proof}
For Statement 1, let $\tau$ be an $\Lcirc$-solution of some abduction problem. By definition, $\tau$ is satisfiable. Suppose that $\tau$ is determined. 
%let us first assume that $\tau$ is a~determined atomic $\Lcirc$-term. 
As $p\wedge\neg p\equiv\neg p\wedge\bullet p\equiv p\wedge\bullet p\equiv p\wedge\neg p\wedge\bullet p$, we can assume w.l.o.g.\ that $\tau$ has the following form:
\begin{align*}
\tau&=\bigwedge^{m}_{i=1}(l_i\wedge\circ l_i)\wedge\bigwedge^{m'}_{i'=1}(l'_{i'}\wedge\bullet l'_{i'})\wedge\bigwedge\limits^{n}_{j=1}l''_j
\end{align*}
where the $l_i$, $l'_{i'}$, and $l''_j$ are propositional literals. 
% Note that since $p\wedge\neg p\equiv\neg p\wedge\bullet p\equiv p\wedge\bullet p\equiv p\wedge\neg p\wedge\bullet p$, separating $\bigwedge\limits^{k}_{i=1}(l^*_i\wedge\overline{l}^*_\wedge\bullet l^*_i)$ in $\tau$ would be redundant.
Now, we observe that $l\wedge\circ l\equiv\triangle l\wedge\neg\triangle\overline{l}$. Thus, $\tau$ can be represented by the following $\neither$-free term $\tau^{\circ\triangle}$:
\begin{align*}
\tau^{\circ\triangle}&=\bigwedge^{m}_{i=1}(\triangle l_i\wedge\neg\triangle\overline{l_i})\wedge\bigwedge^{m'}_{i'=1}(l'_{i'}\wedge\overline{l'_{i'}})\wedge\bigwedge\limits^{n}_{j=1}l''_j
\end{align*}
For the converse, suppose $\tau$ is not determined, i.e., there is some $\circ p$ s.t.\ neither $p$ nor $\neg p$ occur in $\tau$ or there is some $\bullet q$ s.t.\ neither $q$ not $\neg q$ occur in $\tau$. By examining the truth table semantics from Definition~\ref{def:BDsemantics}, we can see that there is no conjunction of $\Ltriangle$-literals that is weakly equivalent to $\circ p$ or $\bullet q$. Hence, $\tau$ is not $\Ltriangle$-representable.

For Statement 2, consider an $\Ltriangle$-solution $\tau$. First, suppose that $\tau$ is an $\neither$-free term, and let $\tau^\flat$ be as in Definition~\ref{def:flatformulas}. We have that $\tau^\flat\simeq\tau$ with $\tau^\flat$ having the following form:
\begin{align*}
\tau^\flat&=\bigwedge\limits^{m}_{i=1}(l_i\wedge\neg\triangle\overline{l}_i)\wedge\bigwedge\limits^{n}_{j=1}l'_j
\end{align*}
It is clear that the following atomic $\Lcirc$-term represents $\tau^\flat$ (and hence, $\tau$):
\begin{align*}
\tau^{\triangle\circ}&=\bigwedge\limits^{m}_{i=1}(l_i\wedge\circ l_i)\wedge\bigwedge\limits^{n}_{j=1}l'_j
\end{align*}
For the converse, suppose $\tau$ is \emph{not $\neither$-free}, and w.l.o.g.\ let $\neg\triangle p$ occur in $\tau^\flat$ but $p$ not occur. From~\cite[Propositions~6.23 and~6.23]{OmoriSano2015}, we know that $\neg\triangle p$ (by itself, without $p$) is not definable in $\BDcirc$. As $\tau$ is satisfiable (being a solution), this means $\tau$ is not $\Lcirc$-representable. % by the condition of the %theorem. Thus, 
\end{proof}
We finish the section with a few observations. First, one can see that determined and $\neither$-free terms can be recognised in polynomial time. Second, we note that none of the solutions in Examples~\ref{example:impossible1}--\ref{example:impossible3} is representable in the other language. In Example~\ref{example:impossible1}, $\neg\triangle p$ is not $\Lcirc$-representable and $\circ p$ is not $\Ltriangle$-representable. In Example~\ref{example:impossible2}, $\bullet p$ is not $\Ltriangle$-representable. In Example~\ref{example:impossible3}, $\neg\triangle p\wedge\neg\triangle\neg p$ is not $\Lcirc$-representable. This shows that even though $\BDcirc$ is less expressive than $\BDtriangle$, their sets of solutions are incomparable. Finally, we remark that even if a~problem has solutions in both languages, the solutions need not %it does not mean that they are going to 
be (weakly or strongly) equivalent, especially if we consider $\BD$- or theory-minimal solutions. Indeed, one can see that the solutions in Example~\ref{example:impossible1} are theory-minimal, yet they are not equivalent. More than that, neither %none of them
 implies the other.
\section{Complexity of Term Entailment\label{sec:complexity}}
% Let us now turn to establishing the complexity of the abductive reasoning in $\BDtriangle$ and $\BDcirc$. We will do this by establishing faithful translations between $\BD$ abduction problems and their $\Lcirc$- and $\Ltriangle$-solutions on the one hand and abduction problems and their solutions formulated in the classical logic, on the other hand. This will then allow us to use complexity evaluations provided in~\cite{EiterGottlob1995}.
This section contains some technical results concerning entailment from $\Ltriangle$- and $\Lcirc$-terms that facilitate the proofs of complexity bounds in Section~\ref{sec:solvingBDproblems}. 
%%% CUT FOR SPACE
%Recall from Definition~\ref{def:abductiveproblem} that a proper solution of an abduction problem $\mathbb{P}=\langle\Gamma,\chi\rangle$ is a~term $\phi$ that is consistent with $\Gamma$ and s.t.\ $\Gamma\cup\{\phi\}$ entails $\chi$ but $\phi$ alone does not. Additionally, if we want $\phi$ to be a~\emph{minimal} solution, we also need to check that no other solution is entailed by~$\phi$. Thus, to determine the complexity of recognising solutions, we start by determining the complexity of these entailment problems. %establish complexity valuations for all these problems.
\subsection{Entailment Between Terms\label{ssec:entailmentbetweenterms}}
We begin with the complexity of entailment between terms. Recall from Definition~\ref{def:abductiveproblem} that to establish the $\models_\BD$-minimality of $\tau$, we need to check whether there is another solution $\phi$ that is entailed by~$\tau$. In the next two theorems, we show that the entailment of atomic $\Lcirc$-terms and $\Ltriangle$-terms is recognisable in polynomial time.
\begin{theorem}\label{theorem:atomiccirctermPtime}
%Let $\sigma$ and $\sigma'$ be \emph{atomic} $\Lcirc$-terms. Then entailment $\sigma\models_\BD\sigma'$ can be decided in deterministic polynomial time.
Entailment between atomic $\Lcirc$-terms is decidable in deterministic polynomial time.
\end{theorem}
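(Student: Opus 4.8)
The plan is to reduce term entailment to a family of independent, variable-by-variable containment checks over the four-element value set $\four$. The first step is to observe, from the $\wedge$-table, that a conjunction takes a designated value (one in $\{\true,\both\}$) exactly when every conjunct does; hence for an atomic $\Lcirc$-term $\tau$ and a valuation $v$ we have $v(\tau)\in\{\true,\both\}$ iff $v(l)\in\{\true,\both\}$ for each literal $l$ occurring in $\tau$. For each of the four atomic literal shapes I would then read off the set of values of its underlying variable that make the literal designated: writing $\mathsf{D}(l)\subseteq\four$ for this set, inspection of the $\neg$-, $\circ$-, and $\bullet$-tables in Definition~\ref{def:BDsemantics} gives $\mathsf{D}(p)=\{\true,\both\}$, $\mathsf{D}(\neg p)=\{\false,\both\}$, $\mathsf{D}(\circ p)=\{\true,\false\}$, and $\mathsf{D}(\bullet p)=\{\both,\neither\}$ (note that $\circ p$ and $\bullet p$ never take value $\both$, so being designated for them just means taking value $\true$).

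Next, for each variable $p$ I would gather the constraints that $\tau$ places on $p$ by setting $S_\tau(p)=\bigcap\{\mathsf{D}(l): l\in\circLit(\tau),\ l \text{ is over } p\}$, with the convention $S_\tau(p)=\four$ when $p$ does not occur in $\tau$. Combining this with the first step yields $v(\tau)\in\{\true,\both\}$ iff $v(p)\in S_\tau(p)$ for every variable $p$, so the designated models of $\tau$ are precisely the valuations lying in the ``box'' $\prod_p S_\tau(p)$. Two features are crucial: the constraints on distinct variables are completely independent, and $\tau$ is satisfiable iff $S_\tau(p)\neq\emptyset$ for every $p$ (each $S_\tau(p)$ being a subset of $\four$, computable and testable for emptiness in constant space).

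The core step is to show that, when $\tau$ is satisfiable, $\tau\models_\BD\sigma$ holds iff $S_\tau(p)\subseteq S_\sigma(p)$ for every variable $p$ (and when $\tau$ is unsatisfiable the entailment holds vacuously). The direction from containment to entailment is immediate from the box description of the models. For the converse I would argue contrapositively: if $S_\tau(p_0)\not\subseteq S_\sigma(p_0)$ for some $p_0$, pick $x\in S_\tau(p_0)\setminus S_\sigma(p_0)$ and build a valuation $v$ with $v(p_0)=x$ and $v(p)$ chosen inside the nonempty $S_\tau(p)$ for every other $p$; by independence $v$ is a designated model of $\tau$ but not of $\sigma$. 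This also covers the case where $\sigma$ mentions a variable $q$ absent from $\tau$: then $S_\tau(q)=\four$ while $S_\sigma(q)$ is a proper subset, so containment, and hence entailment, correctly fail.

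Finally, wrapping this into an algorithm gives the complexity bound: a single pass over $\tau$ and $\sigma$ computes every $S_\tau(p)$ and $S_\sigma(p)$, an emptiness test on each $S_\tau(p)$ checks satisfiability, and a constant-time subset test per variable settles entailment, all in time linear in $|\tau|+|\sigma|$. I do not expect a genuine obstacle here; the only points requiring care are the explicit countermodel construction that legitimises the per-variable reduction, and the correct treatment of the two edge cases, namely an unsatisfiable $\tau$ and variables occurring in $\sigma$ but not in $\tau$.
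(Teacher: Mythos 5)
Your proof is correct: the per-variable designated-value sets are read off the tables accurately ($\mathsf{D}(p)=\{\true,\both\}$, $\mathsf{D}(\neg p)=\{\false,\both\}$, $\mathsf{D}(\circ p)=\{\true,\false\}$, $\mathsf{D}(\bullet p)=\{\both,\neither\}$), the observation that a conjunction is designated iff every conjunct is holds in $\BD$, and your countermodel construction for the containment-to-entailment converse is sound, including the two edge cases you flag. The route differs from the paper's in formalization rather than in underlying structure. The paper proceeds syntactically: it first characterizes unsatisfiability of an atomic $\Lcirc$-term by two occurrence patterns ($p$, $\neg p$, $\circ p$ together, or $\circ p$ with $\bullet p$), then proves a criterion (Lemma~\ref{lemma:atomiccirctermentailmentcriterion}) stating that for satisfiable terms $\sigma\models_\BD\sigma'$ iff three occurrence conditions hold, justified via equivalences such as $p\wedge\neg p\equiv p\wedge\bullet p$ and a case-by-case modification of a satisfying valuation. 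Your semantic criterion $S_\tau(p)\subseteq S_\sigma(p)$ is precisely the uniform content of which those conditions (a)--(c) are the syntactic unfolding: e.g., the only way a nonempty $S_\sigma(p)$ can exclude $\both$ is for $\circ p$ to occur, which is the paper's condition (a). What your version buys is uniformity --- satisfiability ($S_\tau(p)\neq\emptyset$ for all $p$) and entailment are handled by one ``box'' characterization with a single generic countermodel, with no need for the paper's w.l.o.g.\ normalizations or ad hoc equivalences --- and it would extend mechanically to any expansion of $\BD$ by unary truth-functional connectives. What the paper's version buys is a directly checkable syntactic test on literal occurrences, which it then reuses in later arguments (notably in the proof of Theorem~\ref{theorem:minimalrecognition}, where candidate $\models_\BD$-weaker terms are enumerated by weakening per-variable literal sets, essentially your $S$-sets in prefix notation via Lemma~\ref{lemma:circprefixes}).
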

\begin{proof}[Proof sketch]
Let $\sigma$ and $\sigma'$ be atomic $\Lcirc$-terms. We begin by noting that $\sigma$ is $\BD$-unsa\-tis\-fi\-able iff (i) $p$, $\neg p$, and $\circ p$ occur in $\sigma$; or (ii) $\circ p$ and $\bullet p$ occur in $\sigma$. The ‘only if’ direction is evident since $p\wedge\neg p\wedge\circ p$ and $\circ p\wedge\bullet p$ are unsatisfiable. For the ‘if’ direction, assume that there is no variable $p$ s.t.\ $p$, $\neg p$, and $\circ p$ occur in $\sigma$, nor any variable $q$ s.t.\ $\circ q$ and $\bullet q$ occur in $\sigma$. We construct a~satisfying valuation $v$ as follows:
\begin{itemize}
\item $v(r)=\true$ iff $r$ occurs in $\sigma$ but $\neg r$ and $\bullet r$ \emph{do not};
\item $v(r)=\false$ iff $\neg r$ occurs in $\sigma$ but $r$ and $\bullet r$ do not;
\item $v(r)=\both$ otherwise.
\end{itemize}
It is clear that $v(\sigma)\in\{\true,\both\}$. Indeed, it is easy to check that every \emph{literal} occurring in $\sigma$ has value $\true$ or $\both$ and that $v$ is well defined. 

It follows from this characterisation that the satisfiability of an atomic $\Lcirc$-term can be decided in polynomial time. In addition, observe that $\sigma\models_\BD\sigma'$ if $\sigma$ is unsatisfiable and $\sigma\not\models_\BD\sigma'$ is $\sigma$ is satisfiable but $\sigma'$ is not. Finally, to show that entailment between \emph{satisfiable} atomic $\Lcirc$-terms is decidable in polynomial time, we use the facts that $p\wedge\neg p\models_\BD\bullet p$ and $p\wedge\neg p\equiv p\wedge\bullet p\equiv\neg p\wedge\bullet p$.
\end{proof}

To establish a~similar property of $\Ltriangle$-terms, we construct a~faithful embedding of $\Ltriangle$-formulas into the language of $\CPL$. The polynomial complexity evaluation of term entailment will follow since our embedding increases the size of formulas only \emph{linearly} and since determining entailment of terms in $\CPL$ can also be done in polynomial time.
\begin{definition}\label{def:BDtriangletoCPL}
Let $\phi\in\Ltriangle$ be in $\NNF$ and let ${\sim}$\footnote{While we may assume that $\CPL$ is given over $\LBD$, we sometimes use $\sim$ rather than $\neg$ to make clear we are working in $\CPL$.} denote the \emph{classical negation}. We define $\phi^\cl$ as follows.
\begin{align*}
p^\cl&=p^+&(\neg p)^\cl&=p^-\\
(\triangle p)^\cl&=p^+&(\triangle\neg p)^\cl&=p^-\\
(\neg\triangle p)^\cl&={\sim}p^+&(\neg\triangle\neg p)^\cl&={\sim}p^-\\
(\chi\wedge\psi)^\cl&=\chi^\cl\wedge\psi^\cl&(\chi\vee\psi)^\cl&=\chi^\cl\vee\psi^\cl
\end{align*}
\end{definition}
\begin{lemma}\label{lemma:BDtriangletoCPL}
Let $\phi,\!\chi\!\in\!\Ltriangle$ be in $\NNF$. Then $\phi\!\models_\BD\!\chi$ iff $\phi^\cl\!\models_\CPL\!\chi^\cl$.
\end{lemma}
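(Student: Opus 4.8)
The plan is to exhibit a bijection between $\BD$ valuations and classical valuations over the \emph{doubled} variable set $\{p^+,p^-\mid p\in\Prop\}$ under which the designated set $\{\true,\both\}$ corresponds exactly to classical truth of $\phi^\cl$. To each $\BD$ valuation $v$ I would associate the classical valuation $w_v$ defined by $w_v(p^+)=1$ iff $v(p)\in\{\true,\both\}$ and $w_v(p^-)=1$ iff $v(p)\in\{\false,\both\}$. Reading off the four Belnapian values as bit-pairs $(w_v(p^+),w_v(p^-))$, namely $\true\mapsto(1,0)$, $\false\mapsto(0,1)$, $\both\mapsto(1,1)$, and $\neither\mapsto(0,0)$, shows that $v\mapsto w_v$ is a \emph{bijection} between $\BD$ valuations and classical valuations of the doubled variables. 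This bijection is what lets us transfer the universal quantifier over valuations between the two entailment relations.

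The core is the following claim, to be proved by induction on the structure of NNF $\Ltriangle$-formulas $\phi$: for every $\BD$ valuation $v$, $v(\phi)\in\{\true,\both\}$ iff $w_v\models\phi^\cl$. The six literal cases are direct truth-table checks against Definition~\ref{def:BDsemantics}. For instance, $v(p)\in\{\true,\both\}$ iff $w_v(p^+)=1$ iff $w_v\models p^\cl$; and since $\triangle p$ only ever takes the values $\true$ or $\false$, we have $v(\neg\triangle p)\in\{\true,\both\}$ iff $v(\triangle p)=\false$ iff $v(p)\in\{\neither,\false\}$ iff $w_v(p^+)=0$ iff $w_v\models{\sim}p^+=(\neg\triangle p)^\cl$. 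The cases $\triangle p$, $\triangle\neg p$, $\neg p$, and $\neg\triangle\neg p$ are entirely analogous.

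For the inductive step, the key observation is that $\{\true,\both\}$ behaves like a classical truth filter for $\wedge$ and $\vee$: reading off the truth tables gives $v(\psi_1\wedge\psi_2)\in\{\true,\both\}$ iff $v(\psi_1),v(\psi_2)\in\{\true,\both\}$, and $v(\psi_1\vee\psi_2)\in\{\true,\both\}$ iff $v(\psi_1)\in\{\true,\both\}$ or $v(\psi_2)\in\{\true,\both\}$. Combined with the induction hypothesis and the fact that $(\cdot)^\cl$ commutes with $\wedge$ and $\vee$ by definition, this closes the induction. (Note that NNF confines negation to the literals, so no separate negation case arises in the induction, which is precisely why the claim needs $\phi,\chi$ to be in $\NNF$.)

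The lemma then follows immediately: $\phi\models_\BD\chi$ holds iff every $\BD$ valuation $v$ with $v(\phi)\in\{\true,\both\}$ also satisfies $v(\chi)\in\{\true,\both\}$; by the claim this is equivalent to every $w_v$ with $w_v\models\phi^\cl$ satisfying $w_v\models\chi^\cl$; and since $v\mapsto w_v$ ranges bijectively over all classical valuations of the doubled variables, this is exactly $\phi^\cl\models_\CPL\chi^\cl$. The only point requiring care is the surjectivity of $v\mapsto w_v$, which guarantees the quantifier over classical valuations is not accidentally restricted; but this is precisely the observation that all four bit-pairs are realised by the four Belnapian values, so beyond the routine truth-table bookkeeping in the literal cases there is no real obstacle.
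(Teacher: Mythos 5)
Your proposal is correct and matches the paper's proof in essence: the paper uses exactly the same bit-pair encoding (defining $v^\cl(p^+)=\true$ iff $v(p)\in\{\true,\both\}$ and $v^\cl(p^-)=\true$ iff $v(p)\in\{\false,\both\}$, with the inverse map $\mathbf{v}^\four$ realising your surjectivity observation) together with the same induction showing $v(\psi)\in\{\true,\both\}$ iff the classical valuation satisfies $\psi^\cl$. Your packaging as a single bijection rather than two contrapositive directions is only a presentational difference.
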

%\begin{proof}
%Cf.~Section~\ref{ssec:BDtriangletoCPLproof}.
%\end{proof}
The next theorem follows immediately from Lemma \ref{lemma:BDtriangletoCPL}.
\begin{theorem}\label{theorem:triangletermsPtime}
Entailment between $\Ltriangle$-terms is decidable in deterministic polynomial time.
\end{theorem}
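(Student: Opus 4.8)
The plan is to combine Lemma~\ref{lemma:BDtriangletoCPL} with the elementary fact that entailment between classical terms is decidable in polynomial time. First I would observe that any $\Ltriangle$-term $\tau$ is, by Definition~\ref{def:literalstermsclauses}, a conjunction of $\Ltriangle$-literals, hence already in $\NNF$, so the translation $(\cdot)^\cl$ of Definition~\ref{def:BDtriangletoCPL} applies directly with no preprocessing. Given two $\Ltriangle$-terms $\tau$ and $\tau'$, Lemma~\ref{lemma:BDtriangletoCPL} yields $\tau\models_\BD\tau'$ iff $\tau^\cl\models_\CPL\tau'^\cl$, so it suffices to decide the latter in polynomial time.

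Next I would record two structural observations about the translation. Inspecting the clauses of Definition~\ref{def:BDtriangletoCPL}, each of the six $\Ltriangle$-literals is sent to a single classical literal over the fresh variables $p^+,p^-$ (respectively $p^+$, $p^-$, $p^+$, $p^-$, ${\sim}p^+$, ${\sim}p^-$), and $\wedge$ is sent to $\wedge$. Consequently $\tau^\cl$ and $\tau'^\cl$ are again \emph{terms}, i.e.\ conjunctions of classical literals, and both are computed in time linear in the sizes of $\tau$ and $\tau'$. This is the point that makes the reduction useful: although $\CPL$-entailment is $\conp$-hard in general, the images under $(\cdot)^\cl$ lie in the polynomially tractable fragment of term-to-term entailment.

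It then remains to recall why $\CPL$-entailment between terms is polynomially decidable. Concretely, I would first test whether $\tau^\cl$ is satisfiable by scanning for a variable occurring both positively and negatively; if it is unsatisfiable, the entailment holds vacuously. Otherwise, using the standard fact that a satisfiable term $C$ satisfies $C\models_\CPL\ell$ exactly when the literal $\ell$ occurs in $C$, I would check that every literal of $\tau'^\cl$ already occurs in $\tau^\cl$, which is again a linear scan. Chaining the linear-time translation with this polynomial membership check yields the claimed bound.

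There is no genuine obstacle here, since the substantive content --- that $(\cdot)^\cl$ is a \emph{faithful} embedding --- is precisely what Lemma~\ref{lemma:BDtriangletoCPL} supplies and may be assumed. The only steps needing a line of justification are the two structural observations above: that $(\cdot)^\cl$ maps terms to terms (immediate from the literal-to-literal clauses) and that classical term entailment reduces to a satisfiability test plus literal membership. Everything else is a routine complexity bookkeeping argument.
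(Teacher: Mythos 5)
Your proposal is correct and takes essentially the same route as the paper, which derives Theorem~\ref{theorem:triangletermsPtime} ``immediately from Lemma~\ref{lemma:BDtriangletoCPL}'', the intended argument being exactly yours: the translation $(\cdot)^\cl$ is linear and maps $\Ltriangle$-terms to classical terms, and $\CPL$-entailment between terms is polynomially decidable. The details you fill in (terms are already in $\NNF$, each $\Ltriangle$-literal maps to a single classical literal, and satisfiable-term entailment reduces to an inconsistency scan plus literal containment) are precisely the routine bookkeeping the paper leaves implicit.
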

%\begin{proof}
%Immediately from Lemma~\ref{lemma:BDtriangletoCPL}%.
%\end{proof}

%We now turn to theory minimality. 
To check that $\tau$ is a~theory-minimal solution to $\langle\Gamma,\chi\rangle$, we need to establish that there is no other solution $\sigma$ s.t.\ $\Gamma,\tau\models_\BD\sigma$ but $\Gamma,\sigma\not \models_\BD\tau$. The next results show that the complexity of checking term entailment w.r.t.\ a background theory differs depending on whether we consider $\Lcirc$ or $\Ltriangle$.
%in contrast to $\models_\BD$-minimality.
\begin{theorem}\label{theorem:weakminimalitycircconphard}
It is $\conp$-complete to decide whether $\Gamma,\varrho\models_\BD\sigma$, given $\Gamma\subseteq\LBD$ and atomic $\Lcirc$-terms $\varrho$ and~$\sigma$. 
%Let $\Gamma\subseteq\LBD$ and $\varrho$ and $\sigma$ be two atomic $\Lcirc$-terms. Then the problem of deciding whether $\Gamma,\varrho\models_\BD\sigma$ is $\conp$-complete.
\end{theorem}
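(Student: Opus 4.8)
The plan is to handle the two halves of $\conp$-completeness separately, with membership being routine and the reduction for hardness carrying the real content. For membership, note that by the definition of $\models_\BD$ we have $\Gamma,\varrho\not\models_\BD\sigma$ exactly when some $\BD$ valuation $v$ makes every formula of $\Gamma$ and the term $\varrho$ at least true while leaving $\sigma$ not at least true. Restricting such a $v$ to the finitely many variables occurring in $\Gamma\cup\{\varrho,\sigma\}$ gives a polynomial-size certificate whose correctness is checkable in polynomial time via the truth tables of Definition~\ref{def:BDsemantics}. Hence non-entailment lies in $\np$ and the entailment problem lies in $\conp$; note that this part is insensitive to the term structure of $\varrho$ and $\sigma$.

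For $\conp$-hardness I would reduce from classical unsatisfiability. The conceptual point is that $\varrho$ is not merely an additional premise: since $\Gamma$ is confined to $\LBD$ it cannot force any variable to behave classically, whereas a literal $\circ p$ inside $\varrho$ asserts exactly that $p$ takes a classical value ($\true$ or $\false$) without fixing which one, and this is precisely the device needed to import a two-valued choice into $\BD$. Concretely, given a propositional CNF $\phi$ over $p_1,\dots,p_n$ read as an $\LBD$ formula, I set $\Gamma=\{\phi\}$, $\varrho=\bigwedge_{i=1}^{n}\circ p_i$, and $\sigma=q$ for a fresh variable $q$; the reduction is plainly polynomial. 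I claim that $\Gamma,\varrho\models_\BD\sigma$ holds iff $\phi$ is classically unsatisfiable.

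To verify the claim, observe that any $v$ with $v(\varrho)\in\{\true,\both\}$ must have $v(\circ p_i)=\true$ for every $i$ (a conjunction is at least true only if all conjuncts are, and $\circ$ never outputs $\both$), forcing $v(p_i)\in\{\true,\false\}$; on such classical inputs the connectives $\neg,\wedge,\vee$ behave classically, so $v(\phi)\in\{\true,\both\}$ collapses to ordinary satisfaction of $\phi$. If $\phi$ is unsatisfiable no valuation satisfies both $\Gamma$ and $\varrho$ and the entailment holds vacuously; if $\phi$ is satisfiable I extend a satisfying classical assignment by $v(q)=\false$, which still satisfies the premises since $q$ occurs neither in $\Gamma$ nor in $\varrho$, yet gives $v(\sigma)=\false\notin\{\true,\both\}$ and so witnesses non-entailment. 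The step requiring the most care is ruling out stray four-valued valuations, which is exactly what the $\circ$-forcing delivers. I expect the main obstacle to be conceptual rather than technical: recognising that hardness can originate only from the $\circ$-literals of $\varrho$ — in contrast with $\Ltriangle$-terms, where a term can pin a variable to a specific classical value but cannot leave that value open, which is what makes the $\Ltriangle$ analogue tractable.
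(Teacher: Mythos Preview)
Your proposal is correct and follows essentially the same approach as the paper: membership via truth-table semantics, and hardness by reducing classical unsatisfiability to the entailment $\Gamma,\bigwedge_{p\in\Prop[\Gamma]}\circ p\models_\BD q$ for a fresh variable~$q$. Your write-up is more detailed than the paper's two-line argument, but the key reduction and its justification coincide.
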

\begin{proof}
Membership %in $\conp$ 
is evident since $\BDcirc$ has truth-table semantics. For hardness, observe %from Definition~\ref{def:BDsemantics} 
that $\Gamma$ is \emph{classically unsatisfiable} iff $\Gamma,\bigwedge\limits_{p\in\Prop[\Gamma]}\circ p\models_\BD q$ with $q\notin\Prop[\Gamma]$.
\end{proof}

In $\Ltriangle$, however, this problem is tractable. 
%can be solved in polynomial time.
\begin{theorem}\label{theorem:weakminimalitytriangleptime}
It can be decided in polynomial time whether $\Gamma,\varrho\models_\BD\sigma$ , given  $\Gamma\subseteq\LBD$ and $\Ltriangle$-terms $\varrho$ and $\sigma$.
%Let $\Gamma\subseteq\LBD$ and $\varrho$ and $\sigma$ be two $\Ltriangle$-terms. Then the problem of deciding whether $\Gamma,\varrho\models_\BD\sigma$ can be solved in polynomial time.
\end{theorem}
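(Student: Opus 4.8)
The plan is to reduce the problem to classical entailment via the embedding $(\cdot)^\cl$ of Definition~\ref{def:BDtriangletoCPL} and then exploit the special syntactic shape of the resulting classical formulas. First I would note that, since conjunction in $\BD$ is designated (at least true) exactly when both conjuncts are, we have $\Gamma,\varrho\models_\BD\sigma$ iff $\bigwedge_{\gamma\in\Gamma}\gamma\wedge\varrho\models_\BD\sigma$. Writing $\Phi$ for the $\NNF$ of $\bigwedge_{\gamma\in\Gamma}\gamma\wedge\varrho$ (an $\Ltriangle$-formula, as $\LBD\subseteq\Ltriangle$), Lemma~\ref{lemma:BDtriangletoCPL} gives $\Phi\models_\BD\sigma$ iff $\Phi^\cl\models_\CPL\sigma^\cl$. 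Since the embedding increases size only linearly and $\NNF$ is computable in polynomial time, it remains to decide the classical entailment $\Phi^\cl\models_\CPL\sigma^\cl$ in polynomial time.

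Second I would record the crucial structural observation. Because every $\gamma\in\Gamma$ lies in $\LBD$, its $\NNF$ uses only the literals $p$ and $\neg p$, which $(\cdot)^\cl$ sends to the \emph{positive} variables $p^+$ and $p^-$; consequently $\Gamma^\cl\coloneqq\bigwedge_{\gamma\in\Gamma}\gamma^\cl$ is a \emph{monotone} (negation-free) classical formula. The term $\varrho$ translates to a conjunction $\varrho^\cl$ of classical literals, and $\sigma$ to a conjunction $\sigma^\cl=\ell_1\wedge\dots\wedge\ell_m$ of classical literals. Hence $\Phi^\cl=\Gamma^\cl\wedge\varrho^\cl$, and $\Phi^\cl\models_\CPL\sigma^\cl$ iff $\Gamma^\cl\wedge\varrho^\cl\models_\CPL\ell_j$ for every $j\le m$, i.e.\ iff $\Gamma^\cl\wedge\varrho^\cl\wedge\overline{\ell_j}$ is unsatisfiable for each $j$.

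Third, I would dispatch each of these at most $m$ satisfiability checks in polynomial time using monotonicity. The literals of $\varrho^\cl\wedge\overline{\ell_j}$ force a set of variables to be true and a set to be false; if some variable is forced both ways the conjunction is unsatisfiable. Otherwise, since $\Gamma^\cl$ is monotone, it is satisfiable under these constraints iff it is satisfied by the distinguished assignment $v^\ast$ that sets every forced-true and every still-unconstrained variable to true and every forced-false variable to false: any satisfying assignment respecting the constraints is pointwise dominated by $v^\ast$, so monotonicity propagates satisfaction up to $v^\ast$. Evaluating the monotone formula $\Gamma^\cl$ on the single assignment $v^\ast$ is clearly polynomial, so each literal check, and hence the whole test, runs in polynomial time.

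The main obstacle is precisely the step that makes the classical entailment tractable rather than merely $\conp$: a naive use of Lemma~\ref{lemma:BDtriangletoCPL} yields an arbitrary $\CPL$ entailment, which is $\conp$-hard in general and indeed accounts for the hardness in the $\Lcirc$ case (Theorem~\ref{theorem:weakminimalitycircconphard}). What rescues the $\Ltriangle$ case is the combination of two facts that must be verified carefully: that an $\LBD$ theory translates to a negation-free formula, and that entailing the term $\sigma$ reduces to entailing each of its literals, so that only monotone satisfiability with a partial assignment—decided by the single maximal witness $v^\ast$—is ever needed.
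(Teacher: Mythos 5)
Your proposal is correct and takes essentially the same route as the paper: both hinge on Lemma~\ref{lemma:BDtriangletoCPL} plus the observation that $\Gamma^\cl$ is ${\sim}$-free (monotone) because $\Gamma\subseteq\LBD$, both reduce entailment of the term $\sigma$ to per-literal checks, and both decide each check by evaluating one canonical valuation that sets every unconstrained variable to true. The only difference is mechanical: you negate each literal on the classical side after translating, whereas the paper contraposes in $\BD$ via $\neg\triangle$ (Propositions~\ref{prop:contraposition} and~\ref{prop:triangleequivalence}) and then substitutes constants and simplifies with the rewrite rules~\eqref{equ:01reduction} --- which amounts to exactly your evaluation at the maximal assignment $v^\ast$.
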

\begin{proof}
It suffices to show the result for the case where $\sigma$ is an $\Ltriangle$-literal. Due to Propositions~\ref{prop:contraposition} and \ref{prop:triangleequivalence}, $\Gamma,\varrho\models_\BD\sigma$ iff $\Gamma,\varrho,(\neg\triangle\sigma)^\flat\models_\BD\bot$, so we may focus on solving the latter task. First, we assume w.l.o.g.\ that all formulas in $\Gamma$ are in $\NNF$ and that $\varrho\wedge(\neg\triangle\sigma)^\flat$ is $\BD$-satisfiable (this can be checked in polynomial time by Theorem~\ref{theorem:triangletermsPtime}). By Lemma~\ref{lemma:BDtriangletoCPL}, we have that $\Gamma,\varrho,(\neg\triangle\sigma)^\flat\models_\BD\bot$ iff $\Gamma^\cl,\varrho^\cl,((\neg\triangle\sigma)^\flat)^\cl\models_\CPL\bot$. Since $\Gamma\subseteq\LBD$, $\Gamma^\cl$ is $\sim$-free (cf.~De\-fi\-ni\-tion~\ref{def:BDtriangletoCPL}). As $\varrho\wedge(\neg\triangle\sigma)^\flat$ is $\BD$-satisfiable by assumption, then so is $\varrho^\cl\wedge((\neg\triangle\sigma)^\flat)^\cl$ (by Lemma~\ref{lemma:BDtriangletoCPL}). Thus, there is no variable $r$ s.t.\ $r$ and ${\sim}r$ both occur in $\varrho^\cl\wedge((\neg\triangle\sigma)^\flat)^\cl$. Now take $\Gamma^\cl$ and substitute every $p$ that occurs in $\varrho^\cl\wedge((\neg\triangle\sigma)^\flat)^\cl$ positively (resp.\ negatively) with $\top$ (resp.\ $\bot$). We then exhaustively apply the following $\CPL$-equivalence-preserving transformations to all subformulas of~$\bigwedge\limits_{\phi\in\Gamma^\cl}\phi$, denoting the result by $(\Gamma^\cl)^\sharp$:
\begin{align}
\top\wedge\psi&\rightsquigarrow\psi&\top\vee\psi&\rightsquigarrow\top&\bot\wedge\psi&\rightsquigarrow\bot&\bot\vee\psi&\rightsquigarrow\psi\label{equ:01reduction}
\end{align}
Clearly, $(\Gamma^\cl)^\sharp$ can be computed in polynomial time in 
%One can see that this reduction takes linear time in 
the size of~$\Gamma^\cl$. Moreover, if $(\Gamma^\cl)^\sharp\!=\!\bot$, then $\Gamma^\cl,\varrho^\cl,((\neg\triangle\sigma)^\flat)^\cl\models_\CPL\bot$ holds. To complete the proof, we show that $\Gamma^\cl,\varrho^\cl,((\neg\triangle\sigma)^\flat)^\cl \not \models_\CPL\bot$ when $(\Gamma^\cl)^\sharp\!\neq\!\bot$. Let us assume $(\Gamma^\cl)^\sharp\!\neq\!\bot$ and set $v(r)\!=\!\true$ for every $r\!\in\!\Prop((\Gamma^\cl)^\sharp)$. It is clear that $v((\Gamma^\cl)^\sharp)=\true$ since $(\Gamma^\cl)^\sharp$ is $\sim$-free (cf.~Definition~\ref{def:BDtriangletoCPL}). Now, we extend $v$ to the variables occurring in $\varrho^\cl$ and $((\neg\triangle\sigma)^\flat)^\cl$ as expected: if $s$ occurs, we set $v(s)=\true$, if $v({\sim}s)$ occurs, we set $v(s)=\false$. As $\varrho^\cl\wedge((\neg\triangle\sigma)^\flat)^\cl$ is $\CPL$-satisfiable, we now have $v\left(\bigwedge\limits_{\phi\in\Gamma^\cl}\phi\wedge\varrho^\cl\wedge((\neg\triangle\sigma)^\flat)^\cl\right)=\true$, as required.
% we construct the satisfying valuation as follows: $v(p)=\true$ iff $p$ was not substituted with~$\bot$. Now, all literals in $\varrho^\cl$ and $((\neg\triangle\sigma)^\flat)^\cl$ have value~$\true$ (whence, $v(\varrho^\cl\wedge((\neg\triangle\sigma)^\flat)^\cl)=\true$). Furthermore, since $\Gamma\subseteq\LBD$, $\Gamma^\cl$ does not contain ${\sim}$. Moreover, if $(\Gamma^\cl)^\sharp\neq\bot$, then $(\Gamma^\cl)^\sharp=\top$ or $(\Gamma^\cl)^\sharp=\psi$ for some $\sim$-free $\psi\in\LBD$ where all variables have value $\true$ under~$v$. Hence, $v((\Gamma^\cl)^\sharp)=\true$. As reductions in~\eqref{equ:01reduction} preserve $\CPL$-equivalence, we have $v\left(\bigwedge\limits_{\phi\in\Gamma^\cl}\phi\wedge\varrho^\cl\wedge((\neg\triangle\sigma)^\flat)^\cl\right)=\true$, as required
\end{proof}
\subsection{Entailment of Formulas From Terms}
\begin{table}[t]
\centering
\begin{tabular}{lcc}
Decision problem &$\Ltriangle$&$\Lcirc$\\\midrule
$\tau\models_\BD\psi$?&$\text{ in }\mathsf{P}$&$\conp$\\[.2em]
$\tau\models_\BD\sigma$?&$\text{ in }\mathsf{P}$&$\text{ in }\mathsf{P}$\\[.2em]
$\Gamma,\tau\models_\BD\sigma$?&$\text{ in }\mathsf{P}$&$\conp$\\[.2em]
$\Gamma,\tau\not\models_\BD\bot$?&$\text{ in }\mathsf{P}$&$\np$\\[.2em]
$\Gamma,\tau\models_\BD\psi$?&$\conp$&$\conp$
\end{tabular}
\caption{Complexity of entailment tasks, where $\Gamma\cup\{\psi\}\subseteq\LBD$, and $\sigma$ and $\tau$ are terms in the considered language ($\Ltriangle$ or $\Lcirc$). Unless specified otherwise, all results are completeness results.}
\label{tab:complexityparts}
\end{table}
Let us now consider the complexity of entailment of \emph{$\LBD$-formulas}\footnote{Recall that in $\BD$ abduction problems, the theory is formulated in $\LBD$ while solutions are formulated in $\Lcirc$ or~$\Ltriangle$.} by $\Lcirc$-terms and $\Ltriangle$-terms. We begin with $\Lcirc$-terms. The following statement is straightforward.
\begin{theorem}\label{theorem:circtermsformulasconp}
It is is $\conp$-complete to decide 
whether $\phi\models_\BD\chi$, given an atomic $\Lcirc$-term $\phi$ and $\chi\in\LBD$. 
%Let $\phi$ be an atomic $\Lcirc$-term and $\chi\in\LBD$. Then the problem of deciding whether $\phi\models_\BD\chi$ holds is $\conp$-complete.
\end{theorem}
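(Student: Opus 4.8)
The plan is to prove $\conp$-completeness of deciding $\phi \models_\BD \chi$ for an atomic $\Lcirc$-term $\phi$ and $\chi \in \LBD$ by establishing membership and hardness separately. Membership in $\conp$ is immediate: since $\BDcirc$ has truth-table semantics, a counterexample to entailment is a single $\BD$ valuation $v$ (restricted to the finitely many variables of $\phi$ and $\chi$) with $v(\phi) \in \{\true,\both\}$ but $v(\chi) \notin \{\true,\both\}$, and such a valuation can be guessed and verified in polynomial time. So the non-entailment problem is in $\np$, whence entailment is in $\conp$.

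For hardness, I would reduce from the complement of an $\np$-complete problem, most naturally from $\CPL$-unsatisfiability (equivalently, reduce $\CPL$-validity, which is $\conp$-complete, as already noted in the excerpt after Proposition~\ref{prop:CPLtoBD}). The cleanest route is to exploit the machinery already built: Proposition~\ref{prop:CPLtoBDtrianglecirc} gives a faithful embedding $\psi \mapsto \psi^\circ$ of $\CPL$ into $\BD$, where $\psi^\circ$ replaces each variable $p$ by $\circ p$, and states that $\psi$ is $\CPL$-valid iff $\psi^\circ$ is $\BD$-valid. The idea is to encode $\BD$-validity of a formula as a term-entailment instance. Given a $\CPL$-formula $\psi$ over variables $p_1,\dots,p_n$, I would take the atomic $\Lcirc$-term $\phi = \bigwedge_{i=1}^{n} \circ p_i$ as the antecedent, and aim to express the consequent so that $\phi \models_\BD \chi$ holds exactly when $\psi$ is $\CPL$-valid.

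The main obstacle is that the target consequent $\chi$ must lie in $\LBD$ (no $\circ$ or $\triangle$ allowed), whereas the natural encoding of $\psi$ uses $\circ$. The key observation that resolves this is that conjoining $\bigwedge_i \circ p_i$ to the antecedent forces every $p_i$ to take a \emph{classical} value ($\true$ or $\false$) in any satisfying valuation, so that over such valuations $\BD$ semantics for $\{\neg,\wedge,\vee\}$ collapses to classical semantics. Concretely, I would argue that $\bigwedge_{i=1}^{n}\circ p_i \models_\BD \psi$ holds iff $\psi$ is $\CPL$-valid: the term $\bigwedge_i \circ p_i$ is satisfiable (set each $p_i$ classically), and its satisfying valuations are precisely those assigning each $p_i$ a value in $\{\true,\false\}$; restricted to these, $v(\psi) \in \{\true,\both\}$ for all of them iff $\psi$ evaluates to true under every classical assignment, i.e., iff $\psi$ is $\CPL$-valid. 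Here $\chi = \psi$ is already an $\LBD$-formula, so no illegal connectives appear in the consequent.

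I would then verify the reduction is polynomial (the term $\bigwedge_i \circ p_i$ has size linear in $n$, and $\psi$ is copied unchanged) and confirm the biconditional in both directions using the truth tables from Definition~\ref{def:BDsemantics}, paying attention to the edge case where the entailment instance must be a genuine one (the term is satisfiable, so the entailment is not vacuous). Combining membership and hardness yields the claimed $\conp$-completeness. The only subtlety to check carefully is the claim that the satisfying valuations of $\bigwedge_i \circ p_i$ are exactly the classical ones and that $\BD$-truth of a $\neg,\wedge,\vee$-formula coincides with classical truth under such valuations --- this is a routine induction on $\psi$ but is the crux of the argument.
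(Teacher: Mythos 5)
Your proposal is correct and matches the paper's proof essentially verbatim: membership via the truth-table semantics, and hardness via the observation that $\chi$ is $\CPL$-valid iff $\bigwedge_{p\in\Prop(\chi)}\circ p\models_\BD\chi$, since $\circ p$ forces $v(p)\in\{\true,\false\}$ and $\neg$, $\wedge$, $\vee$ behave classically on those values. The detour through Proposition~\ref{prop:CPLtoBDtrianglecirc} is unnecessary scaffolding, but the final reduction you arrive at is exactly the paper's.
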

\begin{proof}
Membership %in $\conp$ 
is immediate as $\BDcirc$-entailment is in $\conp$. To show  $\conp$-hardness, observe that $\chi$ is $\CPL$-valid iff $\bigwedge\limits_{p\in\Prop(\chi)}\!\!\!\!\circ p\models_\BD\chi$ since $\circ p$ ensures $v(p)\in\{\true,\false\}$ and $\wedge$, $\vee$, and $\neg$ behave classically on $\true$ and $\false$.
%since $\bigwedge\limits_{p\in\Prop(\chi)}\!\!\!\! \circ p$ ensures $v(p)\in\{\true,\false\}$ for each $p\in\Prop(\chi)$ and $\wedge$, $\vee$, and $\neg$ behave classically on $\true$ and $\false$.
\end{proof}

On the other hand, if $\phi$ is an $\Ltriangle$-term, $\phi\models_\BD\chi$ can be decided in deterministic polynomial time.
\begin{theorem}\label{theorem:triangletermsformulasPtime}
It can be decided in polynomial time whether $\phi\models_\BD\chi$, given an $\Ltriangle$-term $\phi$ and $\chi\in\LBD$.
\end{theorem}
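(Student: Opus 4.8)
The plan is to establish tractability by contrasting this task with the $\conp$-complete case of atomic $\Lcirc$-terms (Theorem~\ref{theorem:circtermsformulasconp}). The crucial difference is that an $\Ltriangle$-term, unlike an atomic $\Lcirc$-term, cannot force variables to take classical values in the way $\circ p$ does; the semantics of $\triangle$-literals only pins down the $\true$/$\both$ versus $\neither$/$\false$ dichotomy, not the finer four-valued behaviour. First I would reduce to the case where $\phi$ is satisfiable, which is checkable in polynomial time by Theorem~\ref{theorem:triangletermsPtime} (if $\phi$ is unsatisfiable the entailment holds trivially). I would also put $\phi$ into $\NNF$ and work with $\phi^\flat$, noting $\phi\simeq\phi^\flat$ by Proposition~\ref{prop:triangleequivalence}, so that the literals of $\phi$ take the restricted shapes $l$ and $\neg\triangle\overline{l}$.

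The key idea is that the only information a satisfiable $\Ltriangle$-term conveys about a variable $p$ is a constraint on which of the four values $p$ may take, and these constraints are \emph{independent across variables} and \emph{downward-monotone} in a suitable sense. Concretely, each variable $p\in\Prop(\chi)$ is, according to $\phi$, confined to a sub-collection of $\four$ determined solely by the literals over $p$ appearing in $\phi$ (for instance, $p\wedge\neg\triangle\neg p$ forces value $\true$, a bare $p$ allows $\{\true,\both\}$, and an unconstrained variable allows all of $\four$). The plan is to show that to refute $\phi\models_\BD\chi$ it suffices to search over a \emph{polynomial-size} set of candidate valuations rather than all $4^{|\Prop(\chi)|}$ of them. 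Since $\chi\in\LBD$ is built only from $\neg,\wedge,\vee$, its value is monotone with respect to the information order on $\four$; consequently, among the values still permitted for each $p$, only the information-\emph{minimal} permitted values can help drive $v(\chi)$ out of $\{\true,\both\}$. This lets me fix, for each variable independently, a small number of extremal candidate values and test the resulting polynomially-many valuations.

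More precisely, I would argue that a counterexample valuation $v$ (with $v(\phi)\in\{\true,\both\}$ but $v(\chi)\notin\{\true,\both\}$) can be chosen so that each variable receives one of at most two canonical values from its permitted set, chosen to be minimal in the information order; by the monotonicity of $\neg,\wedge,\vee$ in that order, pushing each variable down to an information-minimal admissible value can only decrease (or preserve) the truth-content of $\chi$, so if any counterexample exists then one of this restricted form exists. One then enumerates the polynomially many such valuations, evaluates $\phi$ and $\chi$ on each in polynomial time (truth-table semantics), and reports non-entailment iff a counterexample is found. Alternatively, and perhaps more cleanly, I would route the whole problem through Lemma~\ref{lemma:BDtriangletoCPL} by first embedding $\chi$ suitably: apply Proposition~\ref{prop:CPLtoBD}-style techniques to render the $\LBD$-formula's behaviour classically while leaving $\phi$ as an $\Ltriangle$-term, then invoke the linear-size $(\cdot)^\cl$ translation and the polynomial-time decidability of term-to-formula entailment in $\CPL$.

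The main obstacle I expect is making the reduction to polynomially-many candidate valuations fully rigorous: I must justify carefully that restricting each variable to information-minimal admissible values does not destroy the satisfaction of $\phi$ while preserving the possibility of falsifying $\chi$. The subtlety is that $\phi$'s literals like $\neg\triangle\overline{l}$ impose constraints that interact with this minimisation, so I need to check that the extremal values remain admissible for $\phi$. The cleaner route is to lean on the established $\CPL$ embedding: the real content is observing that an $\Ltriangle$-term imposes no genuinely four-valued (non-classical) coupling on the $\LBD$-observation $\chi$, so that after the linear translation the task collapses to a $\CPL$ entailment from a term, which is polynomial — whereas for atomic $\Lcirc$-terms the connective $\circ$ does introduce exactly such coupling, explaining the $\conp$-hardness gap recorded in Table~\ref{tab:complexityparts}.
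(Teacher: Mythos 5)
Your main semantic argument is correct, but it takes a genuinely different route from the paper. The paper's proof stays inside its $\CPL$ machinery: it applies the linear translation $(\cdot)^\cl$ of Lemma~\ref{lemma:BDtriangletoCPL}, substitutes the literals of $\phi^\cl$ by $\top$/$\bot$, simplifies $\chi^\cl$ exhaustively via~\eqref{equ:01reduction}, and then uses the fact that $\chi^\cl$ is ${\sim}$-free (because $\chi\in\LBD$): entailment holds iff the simplification yields $\top$, and otherwise setting all remaining variables to false gives a countermodel. Your proof instead works directly in the four-valued semantics: a satisfiable flat term $\phi^\flat$ imposes independent unary constraints $S_p\subseteq\four$ (each an intersection of $\{\true,\both\}$, $\{\false,\both\}$, $\{\neither,\true\}$, $\{\neither,\false\}$, or all of $\four$), the connectives $\neg,\wedge,\vee$ are monotone w.r.t.\ the knowledge order $\leq_k$ on $\four$ ($\neither$ least, $\both$ greatest, $\true,\false$ incomparable), and $\{\true,\both\}$ is upward closed in $\leq_k$, so countermodels can be pushed down to $\leq_k$-minimal admissible values. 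You can in fact sharpen your argument: every nonempty $S_p$ of the above form has a \emph{unique} $\leq_k$-least element, so a \emph{single} canonical valuation $v_{\min}$ suffices, and $\phi\models_\BD\chi$ iff $v_{\min}(\chi)\in\{\true,\both\}$ --- no enumeration over ``polynomially many'' candidate valuations (or ``at most two values per variable'') is needed. The two proofs are morally the same observation --- the paper's falsifying valuation is exactly $v_{\min}$ seen through the translation, since $v(p^+)=v(p^-)=\false$ encodes $v(p)=\neither$ --- but yours is self-contained and conceptually transparent, while the paper's buys uniformity with the proof of Theorem~\ref{theorem:weakminimalitytriangleptime}.

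One genuine error to fix: your ``alternatively, and perhaps more cleanly'' route asserts the polynomial-time decidability of term-to-formula entailment in $\CPL$, which is false in general --- that problem is $\conp$-complete (take $\tau=p$ and $\chi=p\wedge\psi$ over disjoint variables; then $\tau\models_\CPL\chi$ iff $\psi$ is valid). The tractability after translation rests precisely on the point your semantic argument already captures: since $\chi\in\LBD$, its translation $\chi^\cl$ is ${\sim}$-free, hence monotone, and for \emph{monotone} targets, entailment from a term reduces to evaluating one canonical (all-false on unconstrained variables) assignment. If you take the translation route you must invoke this monotonicity explicitly, as the paper does. Also, no Proposition~\ref{prop:CPLtoBD}-style padding of $\chi$ is wanted here: adding clauses $p\vee\neg p$ to the antecedent and disjuncts $p\wedge\neg p$ to the observation is the embedding of $\CPL$ \emph{into} $\BD$ and would change the entailment being decided; Lemma~\ref{lemma:BDtriangletoCPL} alone is the right tool.
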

\begin{proof}
Note first, that if $\Prop(\phi)\cap\Prop(\chi)=\varnothing$ and $\phi$ is satisfiable, then $\phi\not\models_\BD\chi$. Indeed, we can just evaluate all variables of $\chi$ as $\neither$ which will make $\chi$ have value $\neither$ as well (cf.~Definition~\ref{def:BDsemantics}). By Theorem~\ref{theorem:triangletermsPtime}, it takes polynomial time to determine whether an $\Ltriangle$-term is satisfiable. Thus, we consider the case when $\phi$ is satisfiable.

We assume w.l.o.g.\ that $\chi$ is in $\NNF$. By Lemma~\ref{lemma:BDtriangletoCPL}, we have that $\phi\models_\BD\chi$ iff $\phi^\cl\models_\CPL\chi^\cl$. Since $\chi\in\LBD$, $\chi^\cl$ is $\sim$-free (cf.~Definition~\ref{def:BDtriangletoCPL}). The rest of the proof is similar to that of Theorem~\ref{theorem:weakminimalitytriangleptime}. The only difference is that we will check whether $\chi^\cl$ is reduced to $\top$ using~\eqref{equ:01reduction}.
\end{proof}

Finally, we observe that in the presence of $\Gamma$, the entailment of $\LBD$-formulas from terms becomes $\conp$-complete.
\begin{theorem}\label{theorem:BD+termconpcomplete}
It is $\conp$-complete to decide whether $\Gamma,\tau\models_\BD\psi$, given $\Gamma\cup\{\psi\}\in\LBD$ and an $\Ltriangle$-term or an atomic $\Lcirc$-term $\tau$. 
\end{theorem}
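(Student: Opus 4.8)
The plan is to prove membership and hardness separately, with only the latter requiring genuine work. For membership, I would invoke the fact, already used for the earlier complexity results, that $\BD$ and both of its expansions have truth-table semantics: given any valuation $v$, the values $v(\gamma)$ for $\gamma\in\Gamma$, together with $v(\tau)$ and $v(\psi)$, are all computable in polynomial time (evaluating an $\Ltriangle$- or atomic $\Lcirc$-term is just a conjunction of literal evaluations, and $\Gamma\cup\{\psi\}\subseteq\LBD$ is evaluated recursively). Hence the complementary problem $\Gamma,\tau\not\models_\BD\psi$ lies in $\np$: one guesses a valuation $v$ on $\Prop(\Gamma)\cup\Prop(\tau)\cup\Prop(\psi)$ with $v(\gamma),v(\tau)\in\{\true,\both\}$ for all $\gamma\in\Gamma$ but $v(\psi)\in\{\neither,\false\}$, and checks this in polynomial time. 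Thus the problem is in $\conp$ for both term types.

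For hardness, the crucial observation is that the complexity cannot be carried by the term alone. By Theorem~\ref{theorem:triangletermsformulasPtime}, deciding $\phi\models_\BD\chi$ for an $\Ltriangle$-term $\phi$ is already in $\mathsf{P}$, so in the $\Ltriangle$ case the hardness must be driven by the background theory $\Gamma$. (In the $\Lcirc$ case, hardness is already available with $\Gamma=\varnothing$ directly from Theorem~\ref{theorem:circtermsformulasconp}.) I would therefore reduce from $\CPL$-entailment $\phi\models_\CPL\chi$ with $\phi,\chi\in\LBD$, which is $\conp$-complete, and push all the difficulty into $\Gamma$ and $\psi$ via the faithful embedding of Proposition~\ref{prop:CPLtoBD}.

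Concretely, given such an instance with $\Prop(\{\phi,\chi\})=\{p_1,\ldots,p_n\}$, I would set
$$\Gamma=\{\phi\}\cup\{p_i\vee\neg p_i:1\le i\le n\},\qquad \psi=\chi\vee\bigvee_{i=1}^n(p_i\wedge\neg p_i),$$
and take $\tau=s$ for a fresh variable $s\notin\Prop(\Gamma)\cup\Prop(\psi)$. Since $s$ occurs nowhere else and can always be set to $\true$, we have $\Gamma,s\models_\BD\psi$ iff $\Gamma\models_\BD\psi$, and the latter is exactly the statement $\phi\wedge\bigwedge_i(p_i\vee\neg p_i)\models_\BD\chi\vee\bigvee_i(p_i\wedge\neg p_i)$, which by Proposition~\ref{prop:CPLtoBD} is equivalent to $\phi\models_\CPL\chi$. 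As $\Gamma\cup\{\psi\}\subseteq\LBD$ and the single literal $s$ is simultaneously an $\Ltriangle$-term and an atomic $\Lcirc$-term, this one reduction establishes $\conp$-hardness for both languages at once.

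The main subtlety to get right is the interaction with the two non-classical values, which is precisely why a naive reduction (copying the $\CPL$ instance into $\Gamma$ and $\psi$ verbatim) fails: under the all-$\both$ valuation every $\LBD$ formula evaluates to $\both$, so such a valuation would spuriously satisfy both $\Gamma$ and $\psi$ and could never refute an entailment. Proposition~\ref{prop:CPLtoBD} is exactly the device that repairs this, since the clauses $p_i\vee\neg p_i$ in $\Gamma$ exclude the value $\neither$, while the disjuncts $p_i\wedge\neg p_i$ added to $\psi$ force $\psi$ to be at least true whenever any variable takes value $\both$; consequently the only valuations that can witness $\Gamma\not\models_\BD\psi$ are the genuinely classical ones. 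Because the faithfulness of this correspondence is delegated entirely to the cited proposition, I expect no further case analysis to be required, and the remaining work is just checking that the reduction is polynomial (indeed linear) in size.
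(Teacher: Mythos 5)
Your proof is correct, but the hardness gadget differs from the paper's. The paper reduces from the $\conp$-complete problem of $\BD$-entailment between $\LBD$-formulas (itself established via Proposition~\ref{prop:CPLtoBD}), using a fresh variable $p$ and a term that does real semantic work: $\phi\models_\BD\chi$ iff $\phi\vee p,\neg p\wedge\circ p\models_\BD\chi$ iff $\phi\vee p,\neg p\wedge\neg\triangle p\models_\BD\chi$, the point being that the term pins $p$ to the value $\false$, so that $(\phi\vee p)\wedge(\neg p\wedge\circ p)\equiv\phi$ and the original entailment is recovered. You instead reduce directly from $\CPL$-entailment via Proposition~\ref{prop:CPLtoBD} and make the term completely inert (a fresh positive variable $s$), pushing all hardness into $\Gamma\models_\BD\psi$. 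Both routes are sound and ultimately rest on the same Proposition~\ref{prop:CPLtoBD}; what yours buys is a single reduction covering both languages at once, and a marginally stronger conclusion --- hardness already holds when $\tau$ is one positive propositional literal, hence simultaneously an $\Ltriangle$-term and an atomic $\Lcirc$-term containing no occurrence of $\triangle$ or $\circ$ --- while the paper's version is shorter given the already-stated $\conp$-completeness of $\models_\BD$ on $\LBD$, and exhibits that hardness persists even when the term genuinely constrains the theory. One non-load-bearing remark of yours is slightly off target: the all-$\both$ valuation satisfies $\psi$ as well as $\Gamma$, so it can never \emph{refute} an entailment and is not the failure mode of the naive reduction; the naive reduction fails because $\models_\BD$ is strictly weaker than $\models_\CPL$, witnessed by valuations such as $v(p)=\both$, $v(q)=\false$ showing $p\wedge\neg p\not\models_\BD q$. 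Since you delegate the actual correctness of the gadget to Proposition~\ref{prop:CPLtoBD}, this imprecision does not affect the proof.
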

\begin{proof}
The proof is a straightforward reduction from the $\conp$-complete entailment problem for $\LBD$-formulas: $\phi \models_\BD\chi$ iff $\phi\vee p,\neg p \wedge \circ p\models_\BD\chi$ iff $\phi \vee p,\neg p \wedge \neg\triangle p \models_\BD \chi$. To see why the preceding holds, note that $(\phi\vee p)\wedge(\neg p\wedge\circ p)\equiv\phi$ and $\neg p\wedge\neg\triangle p\equiv\neg p\wedge\circ p$.
\end{proof}

We summarise the results of the section in Table~\ref{tab:complexityparts}. Note that the polynomial decidability of $\Gamma,\tau\not\models_\BD\bot$ for $\Ltriangle$-terms follows immediately from Theorem~\ref{theorem:weakminimalitytriangleptime}. Similarly, $\np$-completeness of $\Gamma,\tau\not\models_\BD\bot$ for atomic $\Lcirc$-terms is a~corollary of Theorem~\ref{theorem:BD+termconpcomplete}. To see it, use a~fresh $p$ for $\psi$ and set $\tau=\bigwedge\limits_{q\in\Prop[\Gamma]}\circ q$. It is immediate that $\Gamma,\bigwedge\limits_{q\in\Prop[\Gamma]}\circ q\models_\BD p$ iff $\Gamma$ is classically unsatisfiable.
\section{Complexity of $\BD$ Abduction\label{sec:solvingBDproblems}}
This section considers the complexity of the principal decision problems related to $\BD$ abduction, namely, solution recognition, solution existence, and relevance and necessity of hypotheses. Table \ref{tab:complexitysolutions} summarises the obtained results for $\BD$ abduction, alongside results for classical abduction.\footnote{Results for $\CPL$ come from~\cite{EiterGottlob1995} or can be obtained as corollaries of the latter paper or our own results.}

We begin with a~useful technical lemma.
\begin{definition}\label{def:consistententailment}
We say that $\Gamma$ \emph{$\BD$-consistently entails} $\chi$ ($\Gamma\consvDashBD\chi$) iff $\Gamma$ is $\BD$-satisfiable and $\Gamma\models_\BD\chi$.
\end{definition}
\begin{lemma}\label{lemma:BDconsistententailmentcomplexity}
Let $\Gamma\cup\{\chi\}\subseteq\LBD$, $\sigma$ be an $\Ltriangle$-term, and $\tau$ be an atomic $\Lcirc$-term. Then
\begin{enumerate}[noitemsep,topsep=1pt]
\item deciding whether $\Gamma,\sigma\consvDashBD\chi$ is $\conp$-complete;
\item deciding whether $\Gamma,\tau\consvDashBD\chi$ is $\DP$-complete.
\end{enumerate}
\end{lemma}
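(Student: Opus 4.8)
The plan is to prove each statement by combining a membership argument (using the truth-table semantics and the complexity facts already established in Section~\ref{sec:complexity}) with a matching hardness reduction. Recall that $\Gamma,\sigma\consvDashBD\chi$ unpacks into the conjunction of two conditions: $\Gamma\cup\{\sigma\}$ is $\BD$-satisfiable (equivalently, $\Gamma,\sigma\not\models_\BD\bot$) and $\Gamma,\sigma\models_\BD\chi$. The key observation that separates the two statements is how these two conjuncts behave complexity-wise in each language, as recorded in Table~\ref{tab:complexityparts}.

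For Statement~1 ($\Ltriangle$-terms), I would first handle the satisfiability conjunct: by Theorem~\ref{theorem:weakminimalitytriangleptime} (or the remark following it), deciding $\Gamma,\sigma\not\models_\BD\bot$ for an $\Ltriangle$-term $\sigma$ is in $\mathsf{P}$. The entailment conjunct $\Gamma,\sigma\models_\BD\chi$ is $\conp$-complete by Theorem~\ref{theorem:BD+termconpcomplete}. Since the overall condition is the conjunction of a polynomial-time-checkable test with a $\conp$ test, membership in $\conp$ follows immediately (one can first check satisfiability in polynomial time, then run the $\conp$ check). For hardness, I would reduce directly from the $\conp$-complete entailment $\Gamma,\sigma\models_\BD\chi$ of Theorem~\ref{theorem:BD+termconpcomplete}: given an instance, I would augment $\Gamma$ so that $\Gamma\cup\{\sigma\}$ is guaranteed $\BD$-satisfiable (for instance, by taking the satisfiable $\sigma$ witnessed in the construction and ensuring $\Gamma$ itself is satisfiable, e.g.\ by a fresh-variable padding trick that does not affect entailment of $\chi$). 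The satisfiability conjunct then becomes trivially true, and consistent entailment reduces to plain entailment, giving $\conp$-hardness.

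For Statement~2 ($\Lcirc$-terms), the situation changes because \emph{both} conjuncts are genuinely hard. By Theorem~\ref{theorem:BD+termconpcomplete}, the entailment $\Gamma,\tau\models_\BD\chi$ is $\conp$-complete, while by the corollary recorded at the end of Section~\ref{sec:complexity}, deciding $\Gamma,\tau\not\models_\BD\bot$ (satisfiability) for an atomic $\Lcirc$-term is $\np$-complete. Consistent entailment is precisely the conjunction of an $\np$-complete condition (satisfiability) and a $\conp$-complete condition (entailment), so membership in $\DP$ is immediate from the definition of $\DP$ as the class of languages expressible as the intersection of an $\np$ language and a $\conp$ language. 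For $\DP$-hardness, I would reduce from a canonical $\DP$-complete problem such as \textsc{Sat-Unsat} (given a pair of CNF formulas, is the first satisfiable and the second unsatisfiable?). The plan is to encode the satisfiability half using the $\np$-hardness gadget $\tau=\bigwedge_{q\in\Prop[\Gamma]}\circ q$ (which forces classical satisfiability of $\Gamma$, per Theorem~\ref{theorem:weakminimalitycircconphard} and the end-of-section remark), and the unsatisfiability half using the $\conp$-hardness gadget for entailment, taking care to use disjoint variable sets so the two components do not interfere.

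The main obstacle I anticipate is the $\DP$-hardness reduction for Statement~2: I must build a \emph{single} abduction instance $\langle\Gamma,\chi\rangle$ together with one atomic $\Lcirc$-term $\tau$ that simultaneously encodes an independent satisfiability question and an independent entailment question, in such a way that $\Gamma\cup\{\tau\}$ is $\BD$-satisfiable iff the first component is positive, \emph{and} $\Gamma,\tau\models_\BD\chi$ iff the second component is positive. The delicate point is ensuring the two gadgets are truly orthogonal — that the $\circ$-literals forcing classical behaviour on one variable block do not inadvertently make the entailment trivially hold or fail on the other block. Using disjoint propositional variables for the two halves and routing the observation $\chi$ to depend only on the second block should give the required independence, but verifying that the satisfiability of the combined term behaves as the conjunction of the two separate tests is where the care is needed.
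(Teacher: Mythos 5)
Your proposal matches the paper's proof in essentially every respect: membership is obtained from the same conjunction of a polynomial-time (resp.\ $\np$) consistency check with the $\conp$ entailment check, and $\DP$-hardness uses the same $\mathsf{Sat}$-$\mathsf{UnSat}$ reduction you anticipate, namely $\phi$ is $\CPL$-satisfiable and $\chi$ is $\CPL$-unsatisfiable iff $\phi,\,p\wedge\bigwedge_{q\in\Xi}\circ q\consvDashBD p\wedge\neg\chi$, with the $\circ$-literals forcing classical values and $\Prop(\phi)\cap\Prop(\chi)=\varnothing$ providing exactly the orthogonality of the two blocks that you flag as the delicate point. The only (immaterial) difference is in Statement~1's hardness, where the paper reduces from plain $\LBD$-entailment via $\phi\models_\BD\chi$ iff $\phi\vee p,\,q\consvDashBD\chi\vee p$ with fresh $p,q$ and term $\sigma=q$ --- precisely the fresh-variable padding you sketch (note that padding the theory with $\vee p$ requires weakening the observation to $\chi\vee p$ as well, a detail your outline leaves implicit).
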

\begin{proof}
We begin with Statement 1. $\conp$-membership is immediate since entailment is in $\conp$ and verifying the consistency of $\Gamma,\sigma$ can be done in polynomial time using Theorem~\ref{theorem:weakminimalitytriangleptime}. For hardness, we reduce $\BD$-entailment to $\BD$-consistent entailment as follows:
\begin{align*}
\phi\models_\BD\chi&\text{ iff }\phi\vee p,q\consvDashBD\chi\vee p\tag{$p,q\notin\Prop[\{\phi,\chi\}]$, $\sigma=q$}
\end{align*}
Let $\phi\not\models_\BD\chi$. We can thus find $
v$ such that $v(\phi)\in\{\true,\both\}$ and $v(\chi)\notin\{\true,\both\}$. Since $p$ and $q$ are fresh, we let $v(p)=\false$ and $v(q)=\true$ which falsifies the consistent entailment. Conversely, let $\phi\vee p,q\not\consvDashBD\chi\vee p$. It is clear that $\{\phi\vee p,q\}$ is $\BD$-satisfiable. Thus, there is a~valuation s.t.\ $v(\phi\vee p)\in\{\true,\both\}$, $v(q)\in\{\true,\both\}$ but $v(\chi\vee p)\notin\{\true,\both\}$. Hence, $v(\phi)\in\{\true,\both\}$ and $v(\chi)\notin\{\true,\both\}$, as required.

For Statement 2, membership follows immediately from Table~\ref{tab:complexityparts}. To show hardness, we reduce the $\DP$-complete $\mathsf{Sat}$-$\mathsf{UnSat}$ problem for $\CPL$ to  $\BD$-consistent entailment. Let $\phi,\chi$ be propositional formulas %\in\LBD$ 
and assume w.l.o.g.\ that $\Prop(\phi)\cap\Prop(\chi)=\varnothing$. Set $\Xi=\Prop(\phi)\cup\Prop(\chi)$ and pick $p\notin\Xi$. We show that $\phi$ is $\CPL$-satisfiable and $\chi$ is $\CPL$-unsatisfiable iff $\phi,p\wedge\bigwedge\limits_{q\in\Xi}\circ q\consvDashBD p\wedge\neg\chi$.

If $\phi$ is $\CPL$-unsatisfiable, then $\phi\wedge p\wedge\bigwedge\limits_{q\in\Xi}\circ q$ is $\BD$-unsatisfiable, so the consistent entailment fails. If $\chi$ is $\CPL$-satisfiable, let $v$ be a~classical valuation s.t.\ $v(\chi)=\true$ (whence, $v(p\wedge\neg\chi)=\false$) and $v(\phi\wedge p)=\true$ (recall that $\Prop(\phi)\cap\Prop(\chi)=\varnothing$, so such valuation must exist unless $\phi$ is $\CPL$-unsatisfiable). Again, the consistent entailment fails.

For the converse, let $\phi$ be $\CPL$-satisfiable and $\chi$ is $\CPL$-unsatisfiable. It is clear that $\phi,p\wedge\bigwedge\limits_{q\in\Xi}\circ q\consvDashBD p\wedge\neg\chi$ because $p\wedge\bigwedge\limits_{q\in\Xi}\circ q\models_\BD p\wedge\neg\chi$ and $\phi\wedge p\wedge\bigwedge\limits_{q\in\Xi}\circ q$ is $\BD$-satisfiable as $p\notin\Prop(\phi)$ and $\phi$ is $\CPL$-satisfiable.
\end{proof}
\subsection{Solution Recognition}
We use the preceding lemma to establish the complexity of recognising arbitrary and proper solutions. 
\begin{theorem}\label{theorem:anyproperBDtrianglesolution}
It is $\conp$-complete to decide, given a~$\BD$ abduction problem $\mathbb{P}$ and an $\Ltriangle$-term $\sigma$, 
whether $\sigma$ is a~(proper) solution of $\mathbb{P}$.
%Let $\mathbb{P}=\langle\Gamma,\chi,\Hmsf\rangle$ be a~$\BD$ abduction problem and $\sigma$ an $\Ltriangle$-term. Then deciding whether $\sigma$ is a~(proper) solution of $\mathbb{P}$ is $\conp$-complete.
\end{theorem}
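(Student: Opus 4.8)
The plan is to read off solution-hood as a consistent-entailment condition and invoke the machinery already built in Sections~\ref{sec:complexity} and the lemma above. For the upper bound I would reduce recognition to Lemma~\ref{lemma:BDconsistententailmentcomplexity}(1); for the lower bound I would reduce $\BD$-entailment of $\LBD$-formulas (which is $\conp$-complete) to recognition.

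For membership, observe that an $\Ltriangle$-term $\sigma$ is a solution of $\mathbb{P}=\langle\Gamma,\psi,\Hmsf\rangle$ exactly when (i) every literal of $\sigma$ lies in $\Hmsf$, and (ii) $\Gamma,\sigma\models_\BD\psi$ while $\Gamma,\sigma\not\models_\BD\bot$. Since $\Gamma,\sigma\not\models_\BD\bot$ merely asserts $\BD$-satisfiability of $\Gamma\cup\{\sigma\}$, condition (ii) is precisely $\Gamma,\sigma\consvDashBD\psi$, which is in $\conp$ by Lemma~\ref{lemma:BDconsistententailmentcomplexity}(1), while the syntactic check (i) is polynomial. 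For a \emph{proper} solution one adds the requirement $\sigma\not\models_\BD\psi$, decidable in polynomial time by Theorem~\ref{theorem:triangletermsformulasPtime}. As the conjunction of a $\conp$ condition with polynomially checkable ones stays in $\conp$, recognition of (proper) solutions is in $\conp$.

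For hardness I would reduce from $\BD$-entailment of $\LBD$-formulas. Given an instance $\phi\models_\BD\chi$ with $\phi,\chi\in\LBD$, I pick fresh variables $p,q,r$ and set $\Gamma=\{\phi\vee p\}$, $\psi=(\chi\vee p)\wedge r$, $\Hmsf=\{q,r\}$, and $\sigma=q\wedge r$. The disjunct $p$ plays the same role as in the proof of Lemma~\ref{lemma:BDconsistententailmentcomplexity}(1): using that $x\vee\false=x$ and that $\{\true,\both\}$ is prime for $\vee$, one gets $\phi\vee p\models_\BD\chi\vee p$ iff $\phi\models_\BD\chi$. The extra fresh conjunct $r$, placed in both $\psi$ and $\sigma$, is what makes the instance well-formed and the witness proper.

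It then remains to verify three things. First, $\mathbb{P}$ is a genuine abduction problem: taking $v(p)=\true$, $v(r)=\false$ gives $v(\phi\vee p)=\true$ but $v(\psi)=\false$, so $\Gamma\not\models_\BD\psi$. Second, $\sigma$ is proper: taking $v(q)=v(r)=\true$, $v(p)=\false$, and all variables of $\chi$ equal to $\neither$ yields $v(\sigma)=\true$ but $v(\psi)=\neither$, so $\sigma\not\models_\BD\psi$. Third, $\Gamma\cup\{\sigma\}$ is always $\BD$-satisfiable (set $p,q,r$ to $\true$), so $\sigma$ is a solution iff $\Gamma,\sigma\models_\BD\psi$, and since $q,r$ are fresh this holds iff $\phi\vee p\models_\BD\chi\vee p$ iff $\phi\models_\BD\chi$. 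Hence $\sigma$ is a solution---indeed a proper one---iff $\phi\models_\BD\chi$, giving $\conp$-hardness for both variants at once. The main obstacle is exactly this last bookkeeping: the bare reduction with $\psi=\chi\vee p$ fails to yield a valid abduction problem on the ``yes'' instances (there $\Gamma\models_\BD\psi$), so the fresh conjunct $r$ is needed to simultaneously preserve $\Gamma\not\models_\BD\psi$, force properness, and leave the governing equivalence undisturbed.
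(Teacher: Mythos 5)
Your proof is correct, and while your membership argument coincides with the paper's (solution-hood is exactly $\Gamma,\sigma\consvDashBD\psi$, hence in $\conp$ by Lemma~\ref{lemma:BDconsistententailmentcomplexity}(1), with the properness test $\sigma\not\models_\BD\psi$ in $\mathsf{P}$ by Theorem~\ref{theorem:triangletermsformulasPtime}), your hardness argument takes a genuinely different route. The paper obtains hardness in two steps: it invokes the hardness half of Lemma~\ref{lemma:BDconsistententailmentcomplexity}(1) for arbitrary-solution recognition, and then transfers hardness to \emph{proper}-solution recognition via a separate reduction sending $\langle\Gamma,\chi,\Hmsf\rangle$ and $\sigma$ to $\langle\Gamma\cup\{p\},p\wedge\chi,\Hmsf\cup\{p\}\rangle$ with $p$ fresh. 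You instead give a single direct reduction from $\conp$-complete $\LBD$-entailment, with $\Gamma=\{\phi\vee p\}$, $\psi=(\chi\vee p)\wedge r$, $\sigma=q\wedge r$, which handles arbitrary and proper recognition in one stroke. Your construction buys a real bookkeeping advantage that you correctly identify: in the paper's route, the instances produced by the lemma's reduction ($\Gamma=\{\phi\vee p\}$, observation $\chi\vee p$) satisfy $\Gamma\models_\BD\psi$ precisely on the ``yes'' instances, so they are not well-formed abduction problems in the sense of Definition~\ref{def:abductiveproblem} unless one treats recognition as a promise problem; your fresh conjunct $r$ (entailed only with the help of $\sigma$, and falsifiable by $v(p)=\true$, $v(r)=\false$) guarantees $\Gamma\not\models_\BD\psi$ and $\sigma\not\models_\BD\psi$ unconditionally, using the $\neither$-preservation of $\LBD$ and the primeness of $\{\true,\both\}$ under $\vee$ exactly as needed. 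The paper's factored route, in turn, buys reusability: the consistent-entailment lemma and the arbitrary-to-proper reduction are reused verbatim for the $\Lcirc$ case (Theorem~\ref{theorem:anyproperBDcircsolution}) and for the existence and relevance results, whereas your reduction is tailored to the $\Ltriangle$ statement at hand (though it would adapt to $\Lcirc$ hardness as well, the $\DP$ lower bound there needs the lemma's $\mathsf{Sat}$-$\mathsf{UnSat}$ machinery anyway).
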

\begin{proof}
$\conp$-completeness of recognising $\sigma\in\Sol(\mathbb{P})$ follows immediately from Lemma~\ref{lemma:BDconsistententailmentcomplexity} since $\sigma$ is a~solution of $\langle\Gamma,\chi,\Hmsf\rangle$ iff $\Gamma,\chi\consvDashBD\chi$. For proper solutions, we observe that recognising an~\emph{arbitrary} solution is reducible to the recognition of~\emph{a~proper} solution as follows: if we let $p\notin\Prop[\Gamma\cup\{\chi\}\cup\Hmsf]$, then $\sigma$ is a~solution of $\langle\Gamma,\chi,\Hmsf\rangle$ iff $\sigma$ is a~\emph{proper solution} of $\left\langle\Gamma\cup\left\{p\right\},p\wedge\chi,\Hmsf\cup\{p\}\right\rangle$. Indeed, $\Gamma,\sigma\models_\BD\chi$ holds iff $\Gamma,p,\sigma\models_\BD\chi\wedge p$ holds, and likewise, $\Gamma,p,\sigma\models_\BD\bot$ iff $\Gamma,\sigma\models_\BD\bot$. Moreover, since $p$ does not occur in $\tau$, it is clear that $\sigma\not\models_\BD\chi\wedge p$. This establishes $\conp$-hardness. For membership, we note that checking the properness condition ($\sigma \not \models_\BD \chi$) is in $\mathsf{P}$ since $\sigma$ is an $\Ltriangle$-term (Theorem \ref{theorem:triangletermsformulasPtime}), so we remain in $\conp$. 
\end{proof}
\begin{theorem}\label{theorem:anyproperBDcircsolution}
It is $\DP$-complete to decide, given a~$\BD$ abduction problem $\mathbb{P}$ and an atomic $\Lcirc$-term $\sigma$, 
whether $\sigma$ is a~(proper) solution of $\mathbb{P}$.
\end{theorem}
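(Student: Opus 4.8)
Theorem~\ref{theorem:anyproperBDcircsolution} asserts that recognising whether an atomic $\Lcirc$-term $\sigma$ is a~(proper) solution of a~$\BD$ abduction problem $\mathbb{P}=\langle\Gamma,\chi,\Hmsf\rangle$ is $\DP$-complete.

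\medskip

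The plan is to follow the same two-pronged strategy as in the $\Ltriangle$ case (Theorem~\ref{theorem:anyproperBDtrianglesolution}), but the complexity jumps from $\conp$ to $\DP$ precisely because, for atomic $\Lcirc$-terms, the two conjuncts in the definition of a~solution have different complexities. Recall that $\sigma$ is a~solution of $\langle\Gamma,\chi,\Hmsf\rangle$ iff $\sigma$ is composed from literals in $\Hmsf$, $\Gamma,\sigma\models_\BD\chi$, and $\Gamma,\sigma\not\models_\BD\bot$; equivalently, $\sigma$ is a~solution iff $\Gamma,\sigma\consvDashBD\chi$ (the syntactic check that $\sigma$ uses only hypotheses is polynomial and can be dispatched up front). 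So the first step is simply to invoke Statement~2 of Lemma~\ref{lemma:BDconsistententailmentcomplexity}, which already tells us that deciding $\Gamma,\sigma\consvDashBD\chi$ is $\DP$-complete. This immediately yields both $\DP$-membership and $\DP$-hardness for recognising \emph{arbitrary} solutions, so the bulk of the work is carried by the lemma.

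\medskip

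It then remains to transfer this to \emph{proper} solutions, where additionally $\sigma\not\models_\BD\chi$. For \textbf{membership}, I would observe that the three conditions decompose cleanly: checking $\Gamma,\sigma\consvDashBD\chi$ is in $\DP$ by the lemma, and the properness condition $\sigma\not\models_\BD\chi$ is the complement of a~$\conp$ problem by Theorem~\ref{theorem:circtermsformulasconp}, hence in $\np$. Since the conjunction of a~$\DP$ property with an $\np$ property is again in $\DP$ (the class $\DP$ is closed under intersection with $\np$ and with $\conp$), recognising a~proper solution stays in $\DP$. For \textbf{hardness}, I reuse the reduction from the proof of Theorem~\ref{theorem:anyproperBDtrianglesolution}: given an~arbitrary-solution instance $\langle\Gamma,\chi,\Hmsf\rangle$, pick a~fresh $p\notin\Prop[\Gamma\cup\{\chi\}\cup\Hmsf]$ and form $\langle\Gamma\cup\{p\},p\wedge\chi,\Hmsf\cup\{p\}\rangle$. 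Exactly as before, $\Gamma,\sigma\models_\BD\chi$ iff $\Gamma,p,\sigma\models_\BD p\wedge\chi$, consistency is preserved, and freshness of $p$ guarantees $\sigma\not\models_\BD p\wedge\chi$, so $\sigma$ is an arbitrary solution of the original problem iff it is a~proper solution of the new one. Composing this polynomial reduction with the $\DP$-hardness of arbitrary-solution recognition gives $\DP$-hardness for proper solutions.

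\medskip

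The step I expect to require the most care is the \textbf{membership} argument for proper solutions, specifically the closure claim that a~$\DP$ condition conjoined with an $\np$ condition remains in $\DP$. Since $\DP$ is the class of languages expressible as the intersection of an $\np$ language and a~$\conp$ language, intersecting further with an $\np$ condition folds cleanly into the $\np$ component, so the overall property is still an intersection of an $\np$ and a~$\conp$ set; I would state this explicitly rather than leave it implicit, as it is the one point where the $\Lcirc$ analysis genuinely diverges from the $\conp$-bounded $\Ltriangle$ case. Everything else is a~direct appeal to Lemma~\ref{lemma:BDconsistententailmentcomplexity} and a~verbatim reuse of the freshness reduction.
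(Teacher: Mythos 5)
Your proposal is correct and takes essentially the same route as the paper: the paper's proof of Theorem~\ref{theorem:anyproperBDcircsolution} simply instantiates the argument of Theorem~\ref{theorem:anyproperBDtrianglesolution} with Statement~2 of Lemma~\ref{lemma:BDconsistententailmentcomplexity} and the $\Lcirc$ entries of Table~\ref{tab:complexityparts}, which matches your decomposition exactly (arbitrary-solution recognition as $\Gamma,\sigma\consvDashBD\chi$, the fresh-variable reduction for hardness of proper solutions, and the properness check $\sigma\not\models_\BD\chi$ placed in $\np$ via Theorem~\ref{theorem:circtermsformulasconp}). Your explicit observation that $\DP$ is closed under intersection with $\np$ just spells out what the paper leaves implicit, and is a correct and worthwhile clarification.
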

\begin{proof}
The arguments are the same as in the proof of Theorem~\ref{theorem:anyproperBDtrianglesolution}, 
but use the complexity results for atomic $\Lcirc$-terms from Table~\ref{tab:complexityparts} 
and Lemma \ref{lemma:BDconsistententailmentcomplexity}. 
\end{proof}
\begin{table}
\centering
\begin{tabular}{lccc}
Recognition and existence & $\Ltriangle$ & $\Lcirc$& $\CPL$ \\ \midrule %[.2em]
$\tau\in\Sol(\mathbb{P})$? / $\tau\in\PropSol(\mathbb{P})$? & $\conp$ & $\DP$& $\DP$ \\[.2em]
%$\tau\in\PropSol(\mathbb{P})$? & $\conp$ & $\DP$ \\[.2em]
$\tau\in\BDminSol(\mathbb{P})$? & $\DP$ & $\DP$& $\DP$ \\[.2em]
$\tau\in\ThminSol(\mathbb{P})$? & in $\Pi^\Pmsf_2$ & in $\Pi^\Pmsf_2$& in $\Pi^\Pmsf_2$\\[.2em]
$\Sol(\mathbb{P})\neq\varnothing$? / $\PropSol(\mathbb{P})\neq\varnothing$? & $\Sigma^\Pmsf_2$ & $\Sigma^\Pmsf_2$& $\Sigma^\Pmsf_2$\\[.75em]
% \end{tabular}
% \vspace{.5em}
% %$\PropSol(\mathbb{P})\neq\varnothing$? & $\Sigma^\Pmsf_2$ & $\Sigma^\Pmsf_2$ \\ [.2em]
% \begin{tabular}{lccc}
Relevance& $\Ltriangle$ & $\Lcirc$& $\CPL$ \\\midrule
w.r.t.\ $\Sol(\mathbb{P})$, $\PropSol(\mathbb{P})$, $\BDminSol(\mathbb{P})$ & $\Sigma^\Pmsf_2$ & $\Sigma^\Pmsf_2$& $\Sigma^\Pmsf_2$\\[.2em]
w.r.t.\ $\ThminSol(\mathbb{P})$ & in $\Sigma^\Pmsf_3$ & in $\Sigma^\Pmsf_3$& in $\Sigma^\Pmsf_3$\\[.75em]
Necessity& $\Ltriangle$ & $\Lcirc$& $\CPL$ \\\midrule
w.r.t.\ $\Sol(\mathbb{P})$, $\PropSol(\mathbb{P})$, $\BDminSol(\mathbb{P})$ & $\Pi^\Pmsf_2$ & $\Pi^\Pmsf_2$& $\Pi^\Pmsf_2$\\[.2em]
w.r.t.\ $\ThminSol(\mathbb{P})$ & in $\Pi^\Pmsf_3$ & in $\Pi^\Pmsf_3$& in $\Pi^\Pmsf_3$
\end{tabular}
\caption{Complexity of abductive reasoning problems. Unless specified otherwise, all results are completeness results.}
\label{tab:complexitysolutions}
\end{table}

We next show how to recognize $\models_\BD$-minimal solutions. 
\begin{theorem}\label{theorem:minimalrecognition}
It is $\DP$-complete to decide, given a~$\BD$ abduction problem $\mathbb{P}$ and a term $\sigma$, whether $\sigma$ is a~$\models_\BD$-minimal ($\Lcirc$- or $\Ltriangle$-) solution of $\mathbb{P}$.
\end{theorem}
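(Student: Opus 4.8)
The plan is to prove membership in $\DP$ and $\DP$-hardness separately, treating $\Ltriangle$- and atomic $\Lcirc$-terms uniformly as far as possible. For membership I would decompose ``$\sigma$ is a $\models_\BD$-minimal solution'' as the conjunction of (i) $\sigma$ is a proper solution and (ii) no strictly weaker term is a proper solution, and show that the first contributes a $\conp$ condition and the second an $\np$ condition, so the whole problem lies in $\conp\cap\np=\DP$. Among the ingredients of ``proper solution'', only solutionhood $\Gamma,\sigma\models_\BD\psi$ is genuinely $\conp$ (Theorem~\ref{theorem:BD+termconpcomplete}); consistency $\Gamma,\sigma\not\models_\BD\bot$ and properness $\sigma\not\models_\BD\psi$ are in $\np$, indeed in $\mathsf{P}$ for $\Ltriangle$ (see Table~\ref{tab:complexityparts}).

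The heart of the membership argument is reducing the minimality test to an $\np$ check. I would use two monotonicity facts: since $\sigma\models_\BD\phi$ makes the models of $\Gamma\wedge\sigma$ a subset of those of $\Gamma\wedge\phi$, every term $\phi$ weaker than a consistent (resp.\ proper) $\sigma$ is again consistent (resp.\ proper). Hence, once $\sigma$ is known to be a proper solution, a weaker term $\phi$ with $\sigma\models_\BD\phi$ is a proper solution exactly when $\Gamma,\phi\models_\BD\psi$. As solutionhood is preserved under strengthening, I would then show that if \emph{any} strictly weaker proper solution exists, some \emph{immediate} weakening of $\sigma$ is already one: these immediate weakenings form a polynomial-size, polynomial-time computable family $\phi_1,\dots,\phi_m$ obtained by minimally relaxing the constraint $\sigma$ places on a single variable (for $\Ltriangle$ essentially the single-literal deletions read off from the classical image of Lemma~\ref{lemma:BDtriangletoCPL}; for atomic $\Lcirc$-terms one must also allow local replacements to account for $p\wedge\neg p\models_\BD\bullet p$). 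Minimality then becomes $\bigwedge_i(\Gamma,\phi_i\not\models_\BD\psi)$, a conjunction of polynomially many $\np$ conditions (each non-entailment is witnessed by one $\BD$ valuation), hence itself in $\np$; bundling the three $\np$ ingredients gives the required $\conp\wedge\np$ form.

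For hardness I would reduce from the $\DP$-complete problem $\mathsf{Sat}$-$\mathsf{UnSat}$, arranging that solutionhood of $\sigma$ encodes unsatisfiability of $\beta$ (the $\conp$ half) while minimality of $\sigma$ encodes satisfiability of $\alpha$ (the $\np$ half). Given disjoint $\alpha,\beta$, I would force the relevant variables to behave classically using $\circ$-literals in the term (or, for $\Ltriangle$, the weakly equivalent $\neg\triangle$-literals, recalling $\neg p\wedge\circ p\equiv\neg p\wedge\neg\triangle p$) together with the classical-behaviour gadget underlying Proposition~\ref{prop:CPLtoBD}, and introduce a single ``switch'' hypothesis whose deletion yields a designated weakening $\sigma'$ that is a solution iff $\alpha$ is unsatisfiable. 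In the regime where $\sigma$ is a solution (i.e.\ $\beta$ unsatisfiable), $\sigma$ is then $\models_\BD$-minimal iff $\sigma'$ fails to be a solution iff $\alpha$ is satisfiable. This reduction is precisely what pushes the bound up to $\DP$ for $\Ltriangle$, where plain (proper) solution recognition is only $\conp$-complete (Theorem~\ref{theorem:anyproperBDtrianglesolution}): the $\np$-hardness must come entirely from minimality. For $\Lcirc$ one may alternatively bootstrap from the $\DP$-completeness of consistent entailment (Lemma~\ref{lemma:BDconsistententailmentcomplexity}), and the whole development parallels the classical case of~\cite{EiterGottlob1995}.

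I expect the main obstacle to be controlling the space of weaker solutions in the hardness reduction. I must guarantee that $\sigma'$ is the \emph{only} $\alpha$-sensitive weakening and that every other weakening is unconditionally a non-solution, so that minimality tracks $\alpha$ and nothing else; the difficulty is that deleting a classicality-forcing literal lets a variable take value $\both$ or $\neither$, and I must ensure this genuinely breaks the $\beta$-entailment rather than leaving it accidentally intact. I plan to handle this by restricting $\Hmsf$ to exactly the literals occurring in $\sigma$ (so that weaker terms are local relaxations of $\sigma$) and by padding the observation with absorption disjuncts of the form $\bigvee_{y}(y\wedge\neg y)$ that neutralise stray $\both$-values, thereby making each classicality literal demonstrably necessary.
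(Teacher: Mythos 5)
Your membership argument is, in substance, the paper's own: the paper likewise reduces $\models_\BD$-minimality to checking a polynomial family of immediate weakenings --- single-literal deletions for flat $\Ltriangle$-terms (where, by Lemma~\ref{lemma:BDtriangletoCPL}, entailment between satisfiable flat terms is literal containment) and, for atomic $\Lcirc$-terms, deletions together with per-variable replacements of the set of $p$-literals by $\models_\BD$-weaker sets (to handle $p\wedge\neg p\models_\BD\bullet p$) --- and then bundles the $\np$-witnessable conditions (consistency, properness, non-solutionhood of each candidate) against the single $\conp$ check $\Gamma,\sigma\models_\BD\chi$. One slip: $\conp\cap\np$ is \emph{not} $\DP$; what you need, and what your later phrase ``$\conp\wedge\np$ form'' correctly describes, is the class of intersections of an $\np$ language with a $\conp$ language.

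The genuine gap is in your hardness half. The paper avoids gadget engineering entirely: it reduces from recognition of classical prime implicants, $\DP$-complete by~\cite[Proposition~115]{Marquis2000HDRUMS}, taking $\sigma$ to be a plain propositional term, $\Hmsf$ a set of propositional literals over $\Prop(\chi)$, theory $\bigl\{\bigwedge_{p}(p\vee\neg p),\,q\bigr\}$ and observation $(\chi\wedge q)\vee\bigvee_{p}(p\wedge\neg p)$; since no $\circ$- or $\neg\triangle$-literals appear in candidate solutions, this covers $\Ltriangle$ and $\Lcirc$ simultaneously and never has to reason about deleting classicality literals. Your from-scratch $\mathsf{Sat}$-$\mathsf{UnSat}$ gadget leaves exactly the delicate part unconstructed, and the remedies you propose pull in the wrong direction. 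First, the padding is self-undermining: if $\Gamma$ contains $y\vee\neg y$ (which excludes $\neither$) and the observation contains the absorption disjunct $y\wedge\neg y$ (which absorbs $\both$), then a literal $\circ y$ (or $\neg\triangle y$) in $\sigma$ becomes redundant modulo the theory --- dropping it yields a strictly weaker term that is still a proper solution, so $\sigma$ is \emph{never} $\models_\BD$-minimal and minimality cannot track $\alpha$. If you instead omit the gadget or the padding for the guarded variables, the necessity of $\circ y$ becomes instance-dependent: it requires a countermodel with $v(y)\in\{\both,\neither\}$, which concrete choices of $\beta$ rule out (e.g., $\beta=(y\vee z)\wedge(\neg y\vee z)\wedge\neg z$: the value $\both$ is absorbed by the padding and no designated valuation of $\beta$ gives $y$ the value $\neither$). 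Second, encoding ``$\beta$ unsatisfiable'' as solutionhood of $\sigma$ clashes with the consistency requirement $\Gamma,\sigma\not\models_\BD\bot$: the natural trick $\Gamma,\bigwedge\circ p\models_\BD q$ holds precisely because $\Gamma\wedge\bigwedge\circ p$ is then $\BD$-unsatisfiable, and the repair used in Lemma~\ref{lemma:BDconsistententailmentcomplexity}(2) makes the term entail the observation \emph{by itself}, i.e., improper --- so neither construction transfers directly to proper, minimal solutions. To salvage your route you would need per-variable guards guaranteeing both consistency and genuinely load-bearing classicality literals; importing the prime-implicant $\DP$-hardness, as the paper does, sidesteps all of this.
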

\begin{proof}[Proof sketch]
The upper bound exploits the fact that for every term $\sigma$, we can identify polynomially many terms $\sigma_1, \ldots, \sigma_k$ s.t.\ (i) $\sigma \models_\BD \sigma_i$ but $\sigma_i \not \models_\BD \sigma$ ($1 \leq i \leq k$), and (ii) if $\sigma$ is a proper solution but not $\models_\BD$-minimal, then some $\sigma_i$ is a (proper) solution. Thus, to verify $\models_\BD$-minimality, it suffices to check that none of the $\sigma_i$ is a solution. 

For $\Ltriangle$-terms, we may assume (recall Proposition~\ref{prop:triangleequivalence}) that all $\triangle$'s in $\sigma$ occur under $\neg$, i.e., $\sigma=\sigma^\flat$. In this case, $\sigma^\flat\models_\BD\sigma'^\flat$ iff $\triangleLit(\sigma^\flat)\supseteq\triangleLit(\sigma'^\flat)$, so we need only to check each of the (linearly many) terms $\sigma^{-l}$ obtained by deleting one $\Ltriangle$-literal from $\sigma$.

For atomic $\Lcirc$-terms, however, we cannot just remove literals because one term may $\BD$-entail another even though they have no common atomic literals, as in $p\wedge\neg p\models_\BD\bullet p$. Nevertheless, we can still identify syntactically a polynomial number of candidate better terms, each obtained by picking a variable $p$ occurring in $\sigma$ and replacing the set of $p$-literals in $\sigma$ by a $\models_\BD$-weaker set of $p$-literals. 
\end{proof}

In the case of theory-minimal solutions, we establish a $\Pi^\Pmsf_2$ upper bound. We expect that this case is indeed harder than $\models_\BD$-minimality, intuitively because the presence of the theory means we cannot readily identify a polynomial number of candidate better solutions to check. We leave the search for a~matching lower bound for future work and remark that, to the best of our knowledge, the complexity of the analogous problem in $\CPL$ is also unknown. 
%into account As expected, recognising solutions becomes harder if we take theory-minimality into account. On the other hand, recognition of $\models_\BD$-minimal solutions has the same complexity as the recognition of proper solutions.
\begin{theorem}\label{theorem:theoryminimalrecognition}
It is in $\Pi^\Pmsf_2$ to decide, given a~$\BD$ abduction problem $\mathbb{P}$ and an (atomic $\Lcirc$- or $\Ltriangle$-)term  $\sigma$,
whether $\sigma$ is a theory-minimal solution of $\mathbb{P}$.
\end{theorem}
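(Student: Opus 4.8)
The plan is to exhibit a $\Pi^\Pmsf_2$ algorithm that directly encodes the definition of theory-minimality. Recall that a proper solution $\sigma$ is theory-minimal iff there is no proper solution $\phi$ with $\Gamma,\sigma\models_\BD\phi$ but $\Gamma,\phi\not\models_\BD\sigma$. The natural structure is therefore: \emph{for all} candidate terms $\phi$, \emph{it is not the case that} $\phi$ witnesses the failure of minimality. The outer universal quantifier ranges over candidate terms $\phi$ composed from the hypotheses in $\Hmsf$; since $\Hmsf$ is part of the input, each such $\phi$ has size polynomial in $\mathbb{P}$, so the outer $\forall$ is a genuine $\Pi^\Pmsf_2$-style universal guess. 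The body must then verify, using an $\np$ (equivalently, $\conp$) oracle, that $\phi$ does \emph{not} strictly improve on $\sigma$.

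First I would handle the easy preconditions. By Theorems~\ref{theorem:anyproperBDtrianglesolution} and~\ref{theorem:anyproperBDcircsolution}, recognising that $\sigma$ itself is a proper solution is in $\DP\subseteq\Pi^\Pmsf_2$, so this check can be folded into the computation without raising the complexity. The interesting part is the universally quantified body. Given a guessed term $\phi$, I need to confirm the disjunction: either $\phi$ is \emph{not} a proper solution, or $\Gamma,\sigma\not\models_\BD\phi$, or $\Gamma,\phi\models_\BD\sigma$ (i.e.\ $\phi$ fails to be a strict improvement). Each disjunct is checkable within the first level of the hierarchy: recognising a proper solution is in $\DP$ (Theorems~\ref{theorem:anyproperBDtrianglesolution},~\ref{theorem:anyproperBDcircsolution}); and each of the two entailment tests $\Gamma,\sigma\models_\BD\phi$ and $\Gamma,\phi\models_\BD\sigma$ is a $\BD$-entailment-from-a-term-with-theory query, which by Lemma~\ref{lemma:BDconsistententailmentcomplexity} and Table~\ref{tab:complexityparts} sits in $\conp$ (or lower for $\Ltriangle$). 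Hence the whole body is decidable by a polynomial-time predicate with access to an $\np$ oracle, i.e.\ it lies in $\Delta^\Pmsf_2$, and wrapping the polynomial-size universal quantifier around it yields a $\Pi^\Pmsf_2$ procedure. A technical point worth noting is that $\phi$ need only range over terms built from $\Hmsf$, which is exactly the space of admissible solutions, so no formula outside this finite, polynomially-bounded pool can witness non-minimality.

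The step I expect to require the most care is arguing that the quantifier over candidate improving solutions $\phi$ can be kept \emph{universal and polynomially bounded} rather than hiding an additional existential layer. Here the observation is that by definition every solution is a term composed from the fixed finite hypothesis set $\Hmsf$; thus $\phi$ is determined by a subset of $\Hmsf$ together with (in the $\Lcirc$ case) the bookkeeping of which atomic $\Lcirc$-literals appear, all of which is polynomial in $|\mathbb{P}|$. Consequently the ``for all $\phi$'' is a legitimate single $\forall$-block over polynomial-size objects, and the oracle calls inside resolve all remaining entailment and consistency questions, giving the claimed $\Pi^\Pmsf_2$ bound uniformly for both $\Ltriangle$- and atomic $\Lcirc$-terms. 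The absence of a matching lower bound is flagged in the surrounding text, so the proof stops at membership.
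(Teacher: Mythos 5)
Your proposal is correct and is essentially the paper's own argument viewed from the complementary side: the paper gives a $\Sigma^\Pmsf_2$ procedure for non-theory-minimality that guesses a witnessing term $\sigma'$ (or guesses that $\sigma$ is not a proper solution) and verifies with $\np$-oracle calls, which is exactly the dual of your universally quantified check over polynomial-size candidate terms with a $\Delta^\Pmsf_2$ body. Your added remarks (candidates range over subsets of $\Hmsf$, so the $\forall$-block is polynomially bounded; the entailment tests sit in $\conp$ or lower per Table~\ref{tab:complexityparts}) match the paper's reasoning, so no further comparison is needed.
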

% \begin{proof}
% Cf.~Section~\ref{ssec:theoryminimalrecognitionproof}.
% \end{proof}
\subsection{Solution Existence}
We now turn to the fundamental task of determining whether an abduction problem has a~solution. To establish the complexity of deciding whether $\Sol(\mathbb{P})=\varnothing$, we provide reductions from classical abduction problems. We adapt the definition of \emph{classical abduction problems} from~\cite{EiterGottlob1995,CreignouZanuttini2006} to our notation.
\begin{definition}\label{def:CPLabductiveproblem}
A~\emph{classical abduction problem} is a~tuple $\mathbb{P}=\langle\Gamma,\chi,\Hmsf\rangle$ s.t.\ $\Gamma\cup\{\chi\}\subseteq\LBD$ and $\Hmsf$ is a~set of propositional literals. 
\begin{itemize}
\item A~\emph{solution} of $\mathbb{P}$ is a~conjunction $\tau$ of literals from $\Hmsf$ such that 
$\Gamma,\tau\models_\CPL\psi$ and $\Gamma,\tau\not\models_\CPL\bot$.

\item A~solution $\tau$ is \emph{proper} if $\tau\not\models_\CPL\psi$.

\item A~proper solution $\tau$ is \emph{theory-minimal} if there is no proper solution $\phi$ s.t.\ $\Gamma,\tau\models_\CPL\phi$ and $\Gamma,\phi\not\models_\CPL\tau$.

\item A~proper solution $\tau$ is \emph{$\models_\CPL$-minimal} if there is no proper solution $\phi$ s.t.\ $\tau\!\models_\CPL\!\phi$ and $\CPL\not\models\phi\leftrightarrow\tau$.
\end{itemize}
\end{definition}
\begin{theorem}\label{theorem:solutionexistence}
It is $\Sigma^\Pmsf_2$-complete to decide whether a 
$\BD$ abduction problem has a~(proper) $\Ltriangle$- or $\Lcirc$-solution.
\end{theorem}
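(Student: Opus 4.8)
The plan is to prove $\Sigma^\Pmsf_2$-completeness by establishing membership and hardness separately, following the template of classical abduction complexity from~\cite{EiterGottlob1995}. For \textbf{membership}, I would guess an $\Ltriangle$-term (resp.\ atomic $\Lcirc$-term) $\sigma$ composed from the hypotheses in $\Hmsf$; since $\Hmsf$ is finite and each guessed term has size polynomial in $|\mathbb{P}|$, this is a polynomial-size certificate. It then remains to verify that $\sigma$ is a (proper) solution, i.e.\ that $\Gamma,\sigma\consvDashBD\chi$ and (for properness) $\sigma\not\models_\BD\chi$. By Lemma~\ref{lemma:BDconsistententailmentcomplexity}, the consistent-entailment check is in $\conp$ (for $\Ltriangle$) or $\DP\subseteq\Sigma^\Pmsf_2$ (for $\Lcirc$), and the properness check is tractable (Theorem~\ref{theorem:triangletermsformulasPtime}) or in $\conp$ (Theorem~\ref{theorem:circtermsformulasconp}); all of these are decidable by a $\conp$ oracle. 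Hence the whole procedure is an $\np$ computation with a $\conp$ oracle, placing solution existence in $\Sigma^\Pmsf_2=\np^{\conp}$.

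For \textbf{hardness}, the natural strategy is to reduce from the classical $\Sigma^\Pmsf_2$-complete problem of deciding whether a classical abduction problem $\langle\Gamma,\chi,\Hmsf\rangle$ (with $\Hmsf$ a set of propositional literals) has a solution~\cite{EiterGottlob1995}. The key idea is to force $\BD$ valuations to behave classically on the relevant variables. Using the embedding machinery from Proposition~\ref{prop:CPLtoBD} and the observation already exploited in Theorem~\ref{theorem:circtermsformulasconp} that $\circ p$ pins $v(p)$ to $\{\true,\false\}$, I would map the classical problem to a $\BD$ abduction problem whose theory augments $\Gamma$ so as to enforce classicality (e.g.\ by conjoining $\bigwedge_{p}(p\vee\neg p)$-style consistency constraints on the left and adding $\bigvee_{p}(p\wedge\neg p)$-style glutty disjuncts on the right, as in Proposition~\ref{prop:CPLtoBD}), and whose hypothesis set mirrors $\Hmsf$. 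For the $\Lcirc$ case, one can additionally include $\circ p$ literals among the hypotheses or incorporate them into the theory to guarantee that any candidate $\BD$-solution corresponds to a genuine classical solution; for $\Ltriangle$, one uses the $\phi^\triangle$ translation together with $\neg\triangle$-literals. The reduction should be set up so that a classical term solves the original problem iff its $\BD$-counterpart is a (proper) solution of the constructed problem, using that consistent $\BD$-entailment collapses to classical entailment once all variables are forced classical.

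The \textbf{main obstacle} I anticipate is engineering the reduction so that it works \emph{uniformly} for both target languages and, crucially, so that it does not trivially lower the complexity: the theory $\Gamma$ must live in $\LBD$ (per Definition~\ref{def:abductiveproblem}), so the classicality constraints cannot use $\circ$ or $\triangle$ and must instead be simulated either through the Proposition~\ref{prop:CPLtoBD} padding or by routing the $\circ$/$\triangle$-literals exclusively through the hypothesis set $\Hmsf$. I must ensure that spurious $\BD$-solutions---terms exploiting the values $\both$ or $\neither$ in ways with no classical analogue---are excluded, which is exactly what the consistency requirement $\Gamma,\sigma\not\models_\BD\bot$ together with the forced-classicality gadget is designed to rule out. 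I would verify the correspondence carefully in both directions, checking that properness transfers (using the same fresh-variable trick as in Theorem~\ref{theorem:anyproperBDtrianglesolution} to reduce arbitrary-solution existence to proper-solution existence), and confirm that the reduction is polynomial-time, which follows since all padding is linear in $|\Prop[\Gamma\cup\{\chi\}]|$.
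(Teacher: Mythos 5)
Your membership argument is correct and matches the paper's: guess a term over $\Hmsf$ (polynomial size after discarding repeated literals) and verify it with the recognition machinery (Lemma~\ref{lemma:BDconsistententailmentcomplexity} or Theorems~\ref{theorem:anyproperBDtrianglesolution} and~\ref{theorem:anyproperBDcircsolution}), giving $\np^{\conp}=\Sigma^\Pmsf_2$. The gap is in the hardness half, and it sits exactly at the step you flag as the ``main obstacle'': your proposed remedy does not work. For $\Gamma\cup\{\sigma\}\subseteq\LBD$ with $\sigma$ a propositional term, the condition $\Gamma,\sigma\not\models_\BD\bot$ is \emph{vacuous} --- the valuation assigning $\both$ to every variable satisfies any set of $\LBD$-formulas --- and the gadget $q\vee\neg q$ only excludes the value $\neither$, never $\both$. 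So classical inconsistency of $\Gamma\cup\{\sigma\}$ simply does not transfer; worse, the Proposition~\ref{prop:CPLtoBD}-style glut disjuncts in the padded observation actively convert classical inconsistency into entailment. Concretely, take the classical problem $\Gamma_\cl=\{\neg r\}$, $\chi_\cl=q$, $\Hmsf=\{r\}$, which has no classical solution. Your padded $\BD$ problem has theory $\{\neg r,\;r\vee\neg r,\;q\vee\neg q\}$ and observation $q\vee(r\wedge\neg r)\vee(q\wedge\neg q)$, and there the term $r$ \emph{is} a proper solution: any model of the theory satisfying $r$ must set $v(r)=\both$, which satisfies the disjunct $r\wedge\neg r$; consistency holds via the all-$\both$ valuation; and $r$ alone does not entail the observation. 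So the reduction maps ``no''-instances to ``yes''-instances, and mirroring a general classical hypothesis set (which may contain negative literals or literals clashing with $\Gamma$) is unsound. Your further suggestion to route $\circ p$ or $\neg\triangle$-literals through $\Hmsf$ makes matters worse, not better: such literals genuinely enlarge the solution space relative to $\CPL$ (Example~\ref{example:impossible1}: $\{p\vee q,\neg p,\neg q\}$ has no classical solution, yet $\circ p$ solves it in $\BDcirc$), so the backward direction of the correspondence breaks.

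The paper avoids reducing from \emph{general} classical abduction precisely for this reason. It instead starts from the restricted family of instances from \cite[Theorem~4.2]{EiterGottlob1995} --- see~\eqref{equ:solutionexistenceCPL} --- in which $\Hmsf$ consists solely of \emph{positive} propositional literals whose variables are disjoint from those of the observation, and the constructed $\BD$ problem~\eqref{equ:solutionexistenceBD} keeps this $\Hmsf$ verbatim (no $\circ$- or $\triangle$-literals are added). Every candidate solution is then an $\LBD$-term, Proposition~\ref{prop:CPLtoBD} applies in both directions, properness comes for free from the variable disjointness (so you do not need the fresh-variable trick here), and one reduction covers $\Ltriangle$ and $\Lcirc$ simultaneously, since a propositional term is a term of both languages. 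Controlling $\Hmsf$ so that gluts cannot be smuggled in through the hypotheses is the missing idea in your sketch --- and note that it is delicate even in the paper's setting: terms such as $r\wedge r'$, which are classically inconsistent with the biconditional constraints $\neg r\leftrightarrow r'$ in $\Gamma_\cl$, can still $\BD$-consistently entail the padded observation via gluts, so the converse direction requires a careful argument that such terms do not yield spurious witnesses of solvability.
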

\begin{proof}
Membership follows immediately from Theorems~\ref{theorem:anyproperBDtrianglesolution} and~\ref{theorem:anyproperBDcircsolution}. To show $\Sigma^\Pmsf_2$-hardness, we reduce solution existence for classical abduction problems $\mathbb{P}_\cl=\langle\Gamma_\cl,\chi_\cl,\Hmsf\rangle$ of the following form:\footnote{We write $\neg r\leftrightarrow r'$ as a~shorthand for $(r\wedge\neg r')\vee(\neg r\wedge r')$.}
\begin{align}
\Gamma_\cl&=\{\neg\phi\vee(p\wedge\tau),\neg p\vee\tau\}\cup\nonumber\\&\quad\hspace{.45em}\{\neg r\leftrightarrow r'\mid r\in\Prop(\phi)\setminus\Prop(p\wedge\tau)\}\nonumber\tag{$p\notin\Prop(\phi\wedge\tau)$, $\tau$ is a~term}\\
\chi_\cl&=p\wedge\tau\nonumber\\
\Hmsf
&=\{r\mid r\!\in\!\Prop(\phi)\!\setminus\!\Prop(p\!\wedge\!\tau)\}\!\cup\!\{r'\mid\neg r\!\leftrightarrow\!r'\!\in\!\Gamma_\cl\}
\label{equ:solutionexistenceCPL}
\end{align}
By~\cite[Theorem~4.2]{EiterGottlob1995}, determining the existence of classical solutions for these problems is $\Sigma^\Pmsf_2$-hard. We reduce $\mathbb{P}_\cl$ to $\mathbb{P}^\four=\langle\Gamma^\four,\chi^\four,\Hmsf\rangle$, where:
\begin{align}
\Gamma^\four&=\Gamma_\cl\cup\{q\vee\neg q\mid q\in\Prop[\Gamma_\cl]\}\nonumber\\
\chi^\four&=\chi_\cl\vee\bigvee\limits_{q\in\Prop[\Gamma_\cl]}(q\wedge\neg q)
\label{equ:solutionexistenceBD}
\end{align}

First let $\sigma$ be a~solution of $\mathbb{P}_\cl$. It is immediate from~\eqref{equ:solutionexistenceCPL} that $\sigma$ is a~\emph{proper} solution because we cannot use variables occurring in $\chi_\cl$. Moreover, by Proposition~\ref{prop:CPLtoBD}, we have that $\sigma$ is a~\emph{proper} solution of $\mathbb{P}^\four$. And since $\sigma\in\LBD$, it is both an $\Ltriangle$- and $\Lcirc$-proper solution.

For the converse, let $\sigma'$ be a~solution of $\mathbb{P}^\four$. As $\Hmsf$ contains only positive literals and no variables from $\Prop(p\wedge\tau)$, it follows that 
$\Gamma_\cl,\sigma' \not \models_\CPL\bot$ and 
$\sigma' \not \models_\CPL p\wedge\tau
$. Moreover, applying Proposition~\ref{prop:CPLtoBD}, we obtain that $\Gamma_\cl,\sigma'\models_\CPL\chi_\cl$. This shows that $\sigma'$ is a~ proper solution of $\mathbb{P}_\cl$. 
%For the converse, observe, first of all, that there is no %classically satisfiable 
%term $\varrho$ over $\Hmsf_\cl$ s.t.\ $\todo{\Gamma_\cl},\varrho\models_\CPL\bot$ because we cannot use variables from $\Prop(p\wedge\tau)$ for solutions. Second, construction of $\Hmsf^\four$, all solutions of $\mathbb{P}^\four$ are simultaneously $\Lcirc$- and $\Ltriangle$-solutions. Now, let $\sigma'$ be a~solution of $\mathbb{P}^\four$. Since $\sigma'$ does not contain $\neg$ nor variables from $\chi_\cl$, it is a~proper solution. But then $\sigma'$ is classically satisfiable and classically consistent with $\Gamma$. Clearly, $\sigma'\not\models p\wedge\tau$ since $\Prop(\sigma')\cap\Prop(p\wedge\tau)=\varnothing$. Finally, applying Proposition~\ref{prop:CPLtoBD}, we obtain that $\Gamma,\sigma'\models_\CPL\chi$, as required.
\end{proof}
\subsection{Relevance and Necessity of Hypotheses}
Two other natural reasoning tasks that arise in the context of abduction are the recognition of which hypotheses are \emph{relevant}, in the sense that they belong to at least one (minimal) solution, and which are \emph{necessary} (or \emph{indispensable}), as they occur in every (minimal) solution. Both of these decision problems have been investigated in the case of $\CPL$ abduction, see \cite{EiterGottlob1995}. 

The following theorem shows that the complexity of relevance and necessity w.r.t.\ (proper) solutions and $\models_\BD$-minimal solutions coincides with the complexity of the analogous problems for ($\subseteq$-minimal) solutions in $\CPL$.
\begin{theorem}\label{theorem:relevancecomplexity}
It is $\Sigma^\Pmsf_2$-complete (resp.\ $\Pi^\Pmsf_2$-complete) to decide, given a~$\BD$ abduction problem $\mathbb{P}=\langle\Gamma,\chi,\Hmsf\rangle$ and $h \in \Hmsf$, whether $h$ is relevant (resp.\ necessary) w.r.t.\ $\Sol(\mathbb{P})$. 
The same holds for relevance and necessity w.r.t.\ $\PropSol(\mathbb{P})$ and $\BDminSol(\mathbb{P})$. 
%Let $\mathbb{P}=\langle\Gamma,\chi,\Hmsf\rangle$ be a~$\BD$ abduction problem, and let $h$ an $\Ltriangle$-literal or an atomic $\Lcirc$-literal. Deciding whether $h$ is relevant (resp.\ necessary) w.r.t.\ $\Sol(\mathbb{P})$ is $\Sigma^\Pmsf_2$-complete (resp.\ $\Pi^\Pmsf_2$-complete). The same holds for relevance and necessity w.r.t.\ $\PropSol(\mathbb{P})$ and $\BDminSol(\mathbb{P})$. 
\end{theorem}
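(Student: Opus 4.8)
The plan is to establish matching upper and lower bounds for all three solution notions ($\Sol$, $\PropSol$, $\BDminSol$) simultaneously, leveraging the fact that recognition of each type of solution already sits at a known level of the hierarchy (Theorems~\ref{theorem:anyproperBDtrianglesolution}, \ref{theorem:anyproperBDcircsolution}, and \ref{theorem:minimalrecognition}). For the $\Sigma^\Pmsf_2$ upper bound on relevance, I would guess a solution $\tau$ containing $h$ and verify it; the guess is polynomial-size (a term over $\Hmsf$), and the verification that $\tau$ is a solution lives in $\conp$ (for $\Ltriangle$) or $\DP$ (for $\Lcirc$), hence the whole procedure is in $\Sigma^\Pmsf_2=\np^{\conp}$. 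For $\models_\BD$-minimal solutions, recognition is $\DP\subseteq\Pi^\Pmsf_2$ by Theorem~\ref{theorem:minimalrecognition}, so guessing a minimal solution containing $h$ and checking it again yields $\Sigma^\Pmsf_2$. Necessity is the complementary problem: $h$ is necessary iff it is \emph{not} the case that some solution omits $h$, so the $\Pi^\Pmsf_2$ upper bound follows by the same guess-and-check argument applied to the complement.

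For the lower bounds, the natural strategy is to reduce from the classical relevance and necessity problems, which are $\Sigma^\Pmsf_2$-complete and $\Pi^\Pmsf_2$-complete respectively by \cite{EiterGottlob1995}. I would reuse the reduction machinery from the proof of Theorem~\ref{theorem:solutionexistence}: given a classical abduction problem $\mathbb{P}_\cl=\langle\Gamma_\cl,\chi_\cl,\Hmsf\rangle$ of the restricted form in~\eqref{equ:solutionexistenceCPL}, I map it to the $\BD$ problem $\mathbb{P}^\four=\langle\Gamma^\four,\chi^\four,\Hmsf\rangle$ defined by~\eqref{equ:solutionexistenceBD}, keeping the \emph{same} hypothesis set $\Hmsf$ and the same distinguished hypothesis $h$. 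The crucial point, already established inside the proof of Theorem~\ref{theorem:solutionexistence}, is that the solutions of $\mathbb{P}^\four$ coincide (via Proposition~\ref{prop:CPLtoBD} together with the fact that $\Hmsf$ contains only positive literals disjoint from $\Prop(p\wedge\tau)$) with the proper solutions of $\mathbb{P}_\cl$, and moreover every such solution lies in $\LBD$ and is simultaneously an $\Ltriangle$- and $\Lcirc$-solution. Consequently $h$ belongs to some solution of $\mathbb{P}^\four$ iff it belongs to some solution of $\mathbb{P}_\cl$, giving $\Sigma^\Pmsf_2$-hardness of relevance; the necessity case is dual and yields $\Pi^\Pmsf_2$-hardness.

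The main obstacle I anticipate is handling the $\models_\BD$-minimal case cleanly, since the reduction must preserve not merely solution-hood but \emph{minimality} of solutions. Here I would argue that, because the reduction produces solutions entirely within $\LBD$ and the $\BD$-minimality ordering on such classically-behaved terms collapses onto the classical $\subseteq$-minimality ordering (the added disjuncts $\bigvee_q(q\wedge\neg q)$ and conjuncts $q\vee\neg q$ act as the constants $\bot_{\Lcirc}$, $\top_{\Lcirc}$ on the relevant valuations), the $\models_\BD$-minimal solutions of $\mathbb{P}^\four$ correspond exactly to the $\subseteq$-minimal solutions of $\mathbb{P}_\cl$. This is the step that requires the most care, as one must verify that no spurious $\BD$-minimal solution using the non-classical values can arise and undercut minimality; I would lean on the restricted syntactic shape of $\Hmsf$ (positive literals only, disjoint from the observation's variables) to rule this out, exactly as in the solution-existence argument. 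Once this correspondence is pinned down, the identical reduction serves all three solution classes at once, matching the upper bounds and completing the proof.
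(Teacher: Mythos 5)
Your upper-bound argument is exactly the paper's: guess a (proper / $\models_\BD$-minimal) solution containing or omitting $h$ and verify it using the $\conp$-, $\DP$-, and $\DP$-complete recognition results, giving $\Sigma^\Pmsf_2$ for relevance and $\Pi^\Pmsf_2$ for necessity. The lower bound, however, has a genuine gap. You propose to reduce \emph{classical relevance and necessity} through the translation $\mathbb{P}_\cl\mapsto\mathbb{P}^\four$ of~\eqref{equ:solutionexistenceBD}, but the solution correspondence you invoke is only established (inside the proof of Theorem~\ref{theorem:solutionexistence}) for classical problems of the \emph{restricted} form~\eqref{equ:solutionexistenceCPL}, and the result cited there, \cite[Theorem~4.2]{EiterGottlob1995}, gives $\Sigma^\Pmsf_2$-hardness of \emph{solution existence} for that class --- not of relevance or necessity. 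So you face a dilemma: either you reduce from relevance/necessity on the restricted class, for which no hardness result is available (establishing it is precisely the missing work), or you reduce from general classical instances, where the correspondence breaks down. It breaks down because $\BD$-consistency is \emph{vacuous} for conjunctions of propositional literals: the all-$\both$ valuation designates every $\LBD$-formula, so the consistency check that classically filters out terms conflicting with $\Gamma_\cl$ filters out nothing in $\mathbb{P}^\four$. For instance, if $\Hmsf$ contains a complementary pair (either literally $p,\neg p$, or encoded via $\neg r\leftrightarrow r'\in\Gamma_\cl$), a term containing both is $\BD$-consistent with $\Gamma^\four$ yet $\BD$-entails $\chi^\four$ (any countervaluation would have to make all $q\vee\neg q$ designated and all $q\wedge\neg q$ undesignated, hence classical, and then classical explosion applies), producing spurious solutions that can change the relevance or necessity status of $h$. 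Your positivity/disjointness assumptions do not rule this out, since the restricted form itself places both $r$ and its dual $r'$ in $\Hmsf$.

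The paper takes a different route that sidesteps this entirely: it reduces from $\BD$ \emph{solution existence} for the problems~\eqref{equ:solutionexistenceBD} (already shown $\Sigma^\Pmsf_2$-hard in Theorem~\ref{theorem:solutionexistence}), using the fresh-variable gadget~\eqref{equ:relevanceBD} with new hypotheses $r,r'$ and padding variable $r''$. The gadget forces every solution of $\mathbb{P}^\mathsf{rd}$ to have the form $\varrho\wedge r$ with $\varrho\in\Sol(\mathbb{P}^\four)$ or $\varrho'\wedge r'$ with $\varrho'$ arbitrary over $\Hmsf$, so that $r$ is relevant iff $\mathbb{P}^\four$ is solvable and $r'$ is necessary iff it is not. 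This also disposes of the $\models_\BD$-minimal case by inspection of the gadget ($r$ is relevant iff relevant w.r.t.\ $\models_\BD$-minimal solutions), whereas your plan would additionally require hardness of classical relevance w.r.t.\ $\subseteq$-minimal solutions \emph{on the restricted class}, plus the minimality-collapse argument you rightly flag as delicate. To repair your proposal you would essentially have to reconstruct the paper's gadget on the classical side first --- at which point you may as well build it directly in $\BD$, as the paper does.
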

\begin{proof}
Membership is straightforward since solution recognition is $\conp$-complete for proper $\Ltriangle$-solutions and $\DP$-complete for proper $\Lcirc$-solutions and it suffices to guess a~proper solution containing (or omitting) $h$ and verify it. 

For the hardness results for $\models_\BD$-minimal solutions, we construct a~reduction from the class of $\BD$ abduction problems presented in~\eqref{equ:solutionexistenceBD} and adapt the approach from~\cite{EiterGottlob1995}. Namely, we let $\mathbb{P}^\four=\langle\Gamma^\four,\chi^\four,\Hmsf\rangle$ be as in~\eqref{equ:solutionexistenceBD} and pick fresh variables $r$, $r'$, and $r''$. Now set $\Xi=\Prop[\Gamma^\four]\cup\{r,r',r''\}$ and define $\mathbb{P}^\mathsf{rd}=\langle\Gamma^\mathsf{rd},\chi^\mathsf{rd},\Hmsf^\mathsf{rd}\rangle$ as follows.
\begin{align}
\Gamma^\mathsf{rd}&=\{\neg r\vee\psi\mid\psi\in\Gamma^\four\}\cup\{s\vee\neg s\mid s\in\Xi\}\cup\nonumber\\
&\quad\hspace{.45em}\{\neg r'\vee(p\wedge\tau),\neg r\vee\neg r',\neg(r\vee r')\vee r''\}\nonumber\\
\chi^\mathsf{rd}&=(p\wedge r''\wedge\tau)\vee\bigvee\limits_{s\in\Xi}(s\wedge\neg s)\nonumber\\
\Hmsf^\mathsf{rd}&=\Hmsf^\four\cup\{r,r'\}
\label{equ:relevanceBD}
\end{align}
Now let $\PropSol(\mathbb{P}^\four)$ be the set of all proper solutions of $\mathbb{P}^\four$ and recall that $\Sol(\mathbb{P}^\four)=\PropSol(\mathbb{P}^\four)$. It is clear that
\begin{align*}
\PropSol(\mathbb{P}^\mathsf{rd})&=\Sol(\mathbb{P}^\mathsf{rd})\\
\PropSol(\mathbb{P}^\mathsf{rd})&=\left\{\varrho\!\wedge\!r\!\mid\!\varrho\!\in\!\Sol(\mathbb{P}^\four)\right\}\!\cup\!\left\{\varrho'\!\wedge\!r'\!\mid\!\exists\Hmsf'\!\subseteq\!\Hmsf\!:\!\varrho'\!\!=\!\!\bigwedge\limits_{l\in\Hmsf}\!\!l\right\}
\end{align*}
%\begin{align*}
%\PropSol(\mathbb{P}^\mathsf{rd})&=\Sol(\mathbb{P}^\mathsf{rd})\\
%\PropSol(\mathbb{P}^\mathsf{rd})&=\{\varrho\!\wedge\!r\!\mid\varrho\!\in\!\mathbb{P}^\four\}\!\cup\!\left\{\varrho'\!\wedge\!r'\!\mid\exists\Hmsf'\subseteq\Hmsf:\varrho'\!=\!\bigwedge\limits_{l\in\Hmsf}\!\!l\right\}
%\end{align*}
and that $\mathbb{P}^\four$ has (proper) solutions iff $r$ is relevant and $r'$ is not necessary. 

To show hardness w.r.t.\ $\BDminSol(\mathbb{P})$, it suffices to observe that $r$ is relevant to $\mathbb{P}^\mathsf{rd}$ iff it is relevant w.r.t.\ $\models_\BD$-minimal solutions. 
Similarly, $r'$ is (not) necessary in $\mathbb{P}^\mathsf{rd}$ iff it is (not) necessary w.r.t.\ $\models_\BD$-minimal solutions.
\end{proof}
%\begin{theorem}\label{theorem:entailmentrelevancerecognition}
%Let $\mathbb{P}=\langle\Gamma,\chi,\Hmsf\rangle$ be a~$\BD$ abduction problem, and let $h$be an $\Ltriangle$-literal or an atomic $\Lcirc$-literal. Deciding whether $h$ is relevant (resp.\ necessary) to $\BDminSol(\mathbb{P})$ is $\Sigma^\Pmsf_2$-complete (resp.\ $\Sigma^\Pmsf_2$-complete). 
%\end{theorem}
%\begin{proof}
%Membership follows from Theorem~\ref{theorem:minimalrecognition}. For the hardness, observe from the proof of Theorem~\ref{theorem:relevancecomplexity} that $r$ is relevant to $\mathbb{P}^\mathsf{rd}$ iff it is relevant w.r.t.\ $\models_\BD$-minimal solutions. Similarly, $r'$ is dispensable in $\mathbb{P}^\mathsf{rd}$ iff it is dispensable w.r.t.\ $\models_\BD$-minimal solutions.
%\end{proof}

Finally, we present the following upper bound for the recognition of relevant and necessary hypotheses w.r.t.\ theory-minimal solutions. The proof is an easy consequence of Theorem~\ref{theorem:theoryminimalrecognition}. To the best of our knowledge, no analogous problem has been considered for $\CPL$ abduction problems. 
\begin{theorem}\label{theorem:theoryrelevancerecognition}
It is in $\Sigma^\Pmsf_3$ (resp.\ $\Pi^\Pmsf_3$)
to decide, given 
a~$\BD$ abduction problem $\mathbb{P}=\langle\Gamma,\chi,\Hmsf\rangle$ and $h \in \Hmsf$, whether $h$ is relevant (resp.\ necessary) w.r.t.\ $\ThminSol(\mathbb{P})$.
%is in $\Sigma^\Pmsf_3$ (resp.\ $\Pi^\Pmsf_3$).

%Let $\mathbb{P}=\langle\Gamma,\chi,\Hmsf\rangle$ be a~$\BD$ abduction problem, and let $h$ be an $\Ltriangle$-literal or an atomic $\Lcirc$-literal. 
%Deciding whether $h$ is relevant (resp.\ necessary) w.r.t.\ $\ThminSol(\mathbb{P})$ is in $\Sigma^\Pmsf_3$ (resp.\ $\Pi^\Pmsf_3$).
\end{theorem}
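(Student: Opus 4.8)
The plan is to obtain both bounds as a direct consequence of Theorem~\ref{theorem:theoryminimalrecognition}, which already establishes that recognising a theory-minimal solution is in $\Pi^\Pmsf_2$. Relevance and necessity are, by definition, existential and universal quantifications over the set of theory-minimal solutions: $h$ is \emph{relevant} w.r.t.\ $\ThminSol(\mathbb{P})$ iff there exists a term $\sigma$ such that $\sigma\in\ThminSol(\mathbb{P})$ and $h$ occurs in $\sigma$; dually, $h$ is \emph{necessary} iff for every term $\sigma\in\ThminSol(\mathbb{P})$, $h$ occurs in $\sigma$. The guessed witness $\sigma$ can always be taken to be an atomic $\Lcirc$-term or $\Ltriangle$-term composed from literals of $\Hmsf$, hence of size polynomial in $\mathbb{P}$.

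For relevance, I would give a $\Sigma^\Pmsf_3$ procedure: existentially guess a candidate term $\sigma$ (polynomial size, built from $\Hmsf$), verify syntactically that $h$ occurs in $\sigma$ (a polynomial-time check), and then invoke the $\Pi^\Pmsf_2$ oracle from Theorem~\ref{theorem:theoryminimalrecognition} to confirm that $\sigma\in\ThminSol(\mathbb{P})$. An existential guess followed by a $\Pi^\Pmsf_2$ verification lands precisely in $\Sigma^\Pmsf_3=\np^{\Pi^\Pmsf_2}$, since the outer nondeterministic guess with access to a $\Pi^\Pmsf_2$ (equivalently $\Sigma^\Pmsf_2$) oracle yields $\Sigma^\Pmsf_3$. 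For necessity, the complement form is natural: $h$ is \emph{not} necessary iff there exists a theory-minimal solution $\sigma$ in which $h$ does \emph{not} occur, which is itself a $\Sigma^\Pmsf_3$ condition by the same argument (guess $\sigma$, check $h\notin\sigma$, then confirm $\sigma\in\ThminSol(\mathbb{P})$ via the $\Pi^\Pmsf_2$ oracle). Taking the complement places necessity in $\Pi^\Pmsf_3$.

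The argument is uniform across $\Ltriangle$ and $\Lcirc$, since Theorem~\ref{theorem:theoryminimalrecognition} covers both term types with the same $\Pi^\Pmsf_2$ bound, and the only language-specific ingredient --- recognising theory-minimal solutions --- has already been absorbed into that theorem. The one point that requires a brief justification is that it suffices to quantify over \emph{polynomially-sized} candidate terms: this is immediate because any solution, and in particular any theory-minimal one, is by Definition~\ref{def:abductiveproblem} a conjunction of literals drawn from the finite hypothesis set $\Hmsf$, so its size is bounded by $|\Hmsf|$.

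I do not expect any genuine obstacle here, as the result is essentially a quantifier-counting corollary; the real work has been done in establishing Theorem~\ref{theorem:theoryminimalrecognition}. The only subtlety worth flagging is to confirm that the oracle call is a \emph{single} $\Pi^\Pmsf_2$ query applied to the polynomially-bounded guess, so that no additional alternation is incurred and the bounds remain at the third level of the polynomial hierarchy rather than climbing higher.
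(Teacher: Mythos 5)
Your proposal is correct and matches the paper's own argument, which likewise obtains the result as an immediate corollary of Theorem~\ref{theorem:theoryminimalrecognition}: guess a polynomially-bounded term over $\Hmsf$ containing (resp.\ omitting) $h$, verify theory-minimality with the $\Pi^\Pmsf_2$ check, and complement for necessity. Your explicit justifications --- that witnesses are subsets of $\Hmsf$ and that $\np^{\Pi^\Pmsf_2}=\Sigma^\Pmsf_3$ --- are exactly the (routine) details the paper leaves implicit.
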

\section{Generating Solutions to $\BD$ Abduction Problems by Reduction to $\CPL$%Embedding $\BD$ Problems Into $\CPL$ Problems
\label{sec:problemembeddings}}
In this section, we show how to apply classical consequence-finding procedures to generate solutions for $\BD$ abduction problems, by reducing $\BD$ abduction to $\CPL$ abduction. 

We first observe that Lemma~\ref{lemma:BDtriangletoCPL} allows us to faithfully translate abduction problems with $\Ltriangle$-solutions into~$\CPL$.
\begin{theorem}\label{theorem:BDtriangletoCPLabduction}
Let $\mathbb{P}=\langle\Gamma,\chi,\Hmsf\rangle$ be a~$\BD$ abduction problem. Then $\phi$ is a~($\models_\BD$-, theory-minimal, proper) $\Ltriangle$-solution of $\mathbb{P}$ iff $\phi^\cl$ is a~($\models_\CPL$-, theory-minimal, proper) solution of $\mathbb{P}^\cl_\triangle=\langle\Gamma^\cl,\chi^\cl,\Hmsf^\cl_\triangle\rangle$ with $\Hmsf^\cl_\triangle=\{p^+\mid p\in\Hmsf\}\cup\{p^-\mid \neg p\in\Hmsf\}\cup\{{\sim}l\mid\neg\triangle l\in\Hmsf\}$.
\end{theorem}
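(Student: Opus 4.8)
The plan is to show that the map $\phi \mapsto \phi^\cl$ (applied to $\NNF(\phi)$) gives a bijection-respecting correspondence between $\Ltriangle$-solutions of $\mathbb{P}$ and classical solutions of $\mathbb{P}^\cl_\triangle$, and that it preserves each of the defining conditions (arbitrary solution, properness, $\models_\BD$-minimality, theory-minimality). The backbone of everything is Lemma~\ref{lemma:BDtriangletoCPL}, which equates $\BD$-entailment between $\NNF$ formulas with $\CPL$-entailment between their $\cl$-translations, so the strategy is to rewrite each semantic condition in the definition of $\Ltriangle$-solution (Definition~\ref{def:abductiveproblem}) as a $\BD$-entailment and push it through the translation.

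First I would check that $\cl$ maps $\Ltriangle$-terms built from $\Hmsf$ to $\CPL$-terms built from $\Hmsf^\cl_\triangle$, and conversely that every classical term over $\Hmsf^\cl_\triangle$ arises this way. This is where the specific form of $\Hmsf^\cl_\triangle$ matters: an $\Ltriangle$-literal $l$ usable in a solution is (after $\NNF$/Proposition~\ref{prop:triangleequivalence}) one of $p,\neg p,\triangle p,\triangle\neg p,\neg\triangle p, \neg\triangle\neg p$, and Definition~\ref{def:BDtriangletoCPL} sends these to $p^+,p^-,p^+,p^-,{\sim}p^+,{\sim}p^-$. So $p$ and $\triangle p$ both map to $p^+$, which is why $\Hmsf^\cl_\triangle$ collects $\{p^+\mid p\in\Hmsf\}$, and the disputed literals $\neg\triangle l$ map to the classical negations ${\sim}l$, explaining the third clause. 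I would note that $\phi \simeq \phi^\flat$ (Proposition~\ref{prop:triangleequivalence}) lets us assume the solution has all $\triangle$'s under $\neg$, so the translation is well-behaved. Then, using Lemma~\ref{lemma:BDtriangletoCPL}: $\Gamma,\phi\models_\BD\chi$ iff $(\Gamma\wedge\phi)^\cl\models_\CPL\chi^\cl$ iff $\Gamma^\cl,\phi^\cl\models_\CPL\chi^\cl$; consistency $\Gamma,\phi\not\models_\BD\bot$ translates analogously (here one uses that $\BD$-satisfiability corresponds to $\CPL$-satisfiability of the translation, taking $\bot^\cl=\bot$); and properness $\phi\not\models_\BD\chi$ becomes $\phi^\cl\not\models_\CPL\chi^\cl$. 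This settles the (proper) solution case.

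For the two minimality notions, I would unfold their definitions and translate each entailment. For $\models_\BD$-minimality, the condition ``there is no proper solution $\psi$ with $\psi\not\simeq\phi$ and $\phi\models_\BD\psi$'' becomes, via the correspondence of solutions and Lemma~\ref{lemma:BDtriangletoCPL}, ``there is no proper classical solution $\psi^\cl$ with $\phi^\cl\models_\CPL\psi^\cl$ and $\CPL\not\models\psi^\cl\leftrightarrow\phi^\cl$'' --- matching the $\CPL$ definition of $\models_\CPL$-minimality, provided weak equivalence $\simeq$ on the $\BD$ side corresponds to classical logical equivalence of the translations (which again follows from the lemma, since $\phi\simeq\psi$ iff $\phi\models_\BD\psi$ and $\psi\models_\BD\phi$). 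The theory-minimal case is handled the same way, translating the two entailments $\Gamma,\phi\models_\BD\psi$ and $\Gamma,\psi\not\models_\BD\phi$ through the lemma.

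The main obstacle I anticipate is not any single entailment --- those go through the lemma routinely --- but rather ensuring the \emph{quantification over candidate solutions} matches on both sides. Concretely, I must verify that the translation restricted to terms over $\Hmsf$ is a genuine surjection onto terms over $\Hmsf^\cl_\triangle$ that also preserves and reflects satisfiability and properness, so that ``no better $\Ltriangle$-solution exists'' is equivalent to ``no better classical solution exists'' and not merely one direction. The subtle points are the many-to-one collapsing ($p$ and $\triangle p$ both mapping to $p^+$, and likewise $\neg p,\triangle\neg p$), which requires checking that weakly equivalent preimages yield the same classical behaviour, and the fact that $\Hmsf^\cl_\triangle$ only contains \emph{positive} literals $p^+,p^-$ together with the negated ${\sim}l$ coming from $\neg\triangle l$, so I must confirm that the set of classical terms reachable from $\Hmsf$ is exactly the set of terms over $\Hmsf^\cl_\triangle$. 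Once this dictionary between hypothesis sets is pinned down, each of the four equivalences in the statement is a direct application of Lemma~\ref{lemma:BDtriangletoCPL} to the corresponding clause of Definition~\ref{def:abductiveproblem} and Definition~\ref{def:CPLabductiveproblem}.
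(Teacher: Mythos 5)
Your proposal is correct and takes essentially the same route as the paper, which offers no separate proof and treats the theorem as a direct consequence of Lemma~\ref{lemma:BDtriangletoCPL}: exactly as you do, each defining clause (solution, consistency with $\bot^\cl=\bot$, properness, $\models_\BD$- and theory-minimality) is rewritten as a $\BD$-entailment and pushed through the translation $(\cdot)^\cl$. Your extra bookkeeping --- verifying that terms over $\Hmsf$ map onto terms over $\Hmsf^\cl_\triangle$, that $\simeq$ corresponds to classical equivalence of translations, and that the collapse of $p$ and $\triangle p$ onto $p^+$ is harmless via Proposition~\ref{prop:triangleequivalence} --- is precisely the verification the paper leaves implicit.
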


The translation of $\BDcirc$ abduction into $\CPL$ abduction is, however, more complicated.
\begin{definition}\label{def:atomiccirctermtoCPL}
Let $\phi\!=\!\bigwedge\limits^{m}_{i=1}p_i\!\wedge\!\bigwedge\limits^{m'}_{i'=1}\!\neg p_{i'}\!\wedge\!\bigwedge\limits^{n}_{j=1}\!\circ p_j\!\wedge\!\bigwedge\limits^{n'}_{j'=1}\!\bullet p_{j'}$ be an atomic $\Lcirc$-term, and $X\supseteq\Prop(\phi)$ be a finite set of propositional variables. The \emph{classical counterpart} of $\phi$ relative to $X$, denoted $\phi^\circ_X$, is defined as follows:
\begin{align}
\phi^{\sim}&=\bigwedge\limits^{m}_{i=1}p^+_i\!\wedge\!\bigwedge\limits^{m'}_{i'=1}p^-_{i'}\!\wedge\!\bigwedge\limits^{n}_{j=1}p^\circ_j\!\wedge\!\bigwedge\limits^{n'}_{j'=1}\!\!{\sim}p^\circ_{j'}\nonumber\\
\phi^{\leftrightarrow}_X&=\bigwedge\limits_{q\in X}\!\!({\sim}q^\circ\!\leftrightarrow\!(q^+\!\leftrightarrow\!q^-
))\nonumber\\
\phi^\circ_X&=\phi^{\sim}\wedge\phi^\leftrightarrow_X
\label{equ:atomictermcl}
\end{align}
where the $p_i^+$, $p_i^-$, and $q^\circ$s are fresh variables (not in $X$). 
\end{definition}
\begin{lemma}\label{lemma:BDcirctoCPLatomicterms}
Let $\chi,\psi\in\LBD$, $\Xi=\Prop(\chi)\cup\Prop(\psi)$, and $\phi$ be an \emph{atomic} $\Lcirc$-term s.t.\ $\Prop(\phi)\subseteq\Xi$. Then %we have
\begin{align*}
\phi,\chi\models_\BD\psi&\text{ iff }\phi^\circ_\Xi,\chi^\cl\models_\CPL\psi^\cl
\end{align*}
\end{lemma}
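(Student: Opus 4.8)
The plan is to set up a correspondence between $\BD$ valuations relevant to $\phi,\chi,\psi$ and $\CPL$ valuations relevant to $\phi^\circ_\Xi,\chi^\cl,\psi^\cl$, and then argue the entailment in each direction by transporting valuations across this correspondence. The key semantic observation is that for variables $q \in \Xi$, the $\BD$ value $v(q) \in \four$ can be encoded by the pair of classical bits $(q^+, q^-)$, reading $q^+$ as ``$q$ is at least true'' (value in $\{\true,\both\}$) and $q^-$ as ``$q$ is at least false'' (value in $\{\both,\false\}$). Under this encoding the defining tables for $\neg,\wedge,\vee$ in Definition~\ref{def:BDsemantics} become exactly the classical clauses used in the map $\cdot^\cl$ of Definition~\ref{def:BDtriangletoCPL} (recall $p^\cl = p^+$, $(\neg p)^\cl = p^-$, and $\cl$ commutes with $\wedge,\vee$), so by (a four-valued refinement of) Lemma~\ref{lemma:BDtriangletoCPL} we have $v(\chi) \in \{\true,\both\}$ iff the induced classical valuation satisfies $\chi^\cl$, and likewise for $\psi$. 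The auxiliary variable $q^\circ$ is forced by the conjunct $\phi^\leftrightarrow_\Xi$ to equal $\neg(q^+ \leftrightarrow q^-)$, i.e.\ $q^\circ$ is true exactly when $q$ has a classical value ($\true$ or $\false$), which matches the truth table of $\circ$.

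First I would make the encoding precise: given a $\BD$ valuation $v$, define a classical valuation $\hat v$ on the fresh variables by $\hat v(q^+) = 1$ iff $v(q) \in \{\true,\both\}$, $\hat v(q^-) = 1$ iff $v(q) \in \{\both,\false\}$, and $\hat v(q^\circ) = 1$ iff $v(q) \in \{\true,\false\}$; conversely, from any classical valuation $w$ satisfying $\phi^\leftrightarrow_\Xi$, recover a $\BD$ valuation $v_w$ on $\Xi$ from the pair $(w(q^+), w(q^-))$. The conjunct $\phi^\leftrightarrow_\Xi$ guarantees that $\hat v$ satisfies it and that the two maps are mutually inverse on valuations satisfying $\phi^\leftrightarrow_\Xi$. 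Then I would check, literal by literal using the truth tables, that $\hat v(\phi^{\sim}) = 1$ iff $v(\phi) \in \{\true,\both\}$: the $p_i$, $\neg p_{i'}$ conjuncts go through exactly as in $\cl$, while $\circ p_j$ contributes $p^\circ_j$ and $\bullet p_{j'}$ contributes ${\sim}p^\circ_{j'}$, which by the forced meaning of $q^\circ$ correctly track $v(\circ p_j) \in \{\true\}$ and $v(\bullet p_{j'}) \in \{\true\}$. This reduces the claim to the bookkeeping that $\hat v(\phi^\circ_\Xi) = 1$ iff $v(\phi) \in \{\true,\both\}$ and $\hat v$ satisfies $\phi^\leftrightarrow_\Xi$ automatically.

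For the forward direction, assume $\phi,\chi \models_\BD \psi$ and take any classical $w$ with $w(\phi^\circ_\Xi) = 1$ and $w(\chi^\cl) = 1$. Since $w$ satisfies $\phi^\leftrightarrow_\Xi$, form $v_w$; by the correspondence $v_w(\phi) \in \{\true,\both\}$ and $v_w(\chi) \in \{\true,\both\}$, hence $v_w(\psi) \in \{\true,\both\}$, hence $w(\psi^\cl) = 1$. The converse direction is symmetric: from $\phi^\circ_\Xi, \chi^\cl \models_\CPL \psi^\cl$ and a $\BD$ valuation $v$ with $v(\phi), v(\chi) \in \{\true,\both\}$, pass to $\hat v$, which satisfies $\phi^\circ_\Xi$ and $\chi^\cl$, deduce $\hat v(\psi^\cl) = 1$, and conclude $v(\psi) \in \{\true,\both\}$. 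A small point to handle cleanly is that $\psi$ and $\chi$ may mention variables in $\Xi$ only, so restricting attention to $\Xi$ (with $X = \Xi$) is exactly what makes $\hat v$ and $v_w$ well defined on all relevant variables; I would state at the outset that we work with valuations over $\Xi \cup \{q^+,q^-,q^\circ : q \in \Xi\}$.

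The main obstacle I anticipate is not any single step but keeping the two-way translation honest about the auxiliary variables: one must verify that every classical valuation of interest genuinely satisfies $\phi^\leftrightarrow_\Xi$ (so that $v_w$ is definable and the inversion holds), and that the fresh $q^\circ, q^+, q^-$ do not create spurious models on the $\CPL$ side that have no $\BD$ preimage. This is precisely why $\phi^\leftrightarrow_\Xi$ is included as a conjunct of $\phi^\circ_\Xi$ on the hypothesis side of the entailment: it restricts the classical models of the antecedent to exactly the images $\hat v$ of $\BD$ valuations. I would therefore foreground the equivalence ``$w \models \phi^\leftrightarrow_\Xi$ iff $w = \hat{v_w}$'' as the structural lemma that makes both directions routine, and otherwise the argument is a direct truth-table verification leveraging Lemma~\ref{lemma:BDtriangletoCPL} for the $\neg,\wedge,\vee$ fragment.
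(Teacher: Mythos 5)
Your proposal is correct and follows essentially the same route as the paper's proof: the identical two-bit encoding (your $\hat v$ and $v_w$ are exactly the paper's $v^\cl$ from~\eqref{equ:vcl}, extended by~\eqref{equ:vclpcirc}, and $\mathbf{v}^\four$ from~\eqref{equ:vfour}), the same use of Lemma~\ref{lemma:BDtriangletoCPL} for the $\{\neg,\wedge,\vee\}$-fragment, and the same literal-by-literal check that $\phi^{\leftrightarrow}_\Xi$ forces $q^\circ$ to track $v(q)\in\{\true,\false\}$. The only cosmetic difference is that the paper transports \emph{falsifying} valuations (arguing contrapositively) where you transport satisfying ones directly.
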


We can now use Lemma~\ref{lemma:BDcirctoCPLatomicterms} to construct a~faithful embedding of $\BD$ abduction problems with $\Lcirc$-solutions into classical abduction problems. Note that the size of $\phi^\circ_X$ is linear in the cardinality of $X$. This, however, is only possible because $\phi$ is an \emph{atomic term}. Furthermore, since atomic $\Lcirc$-terms are not always translated into conjunctions of literals, we need to modify the statement of Theorem~\ref{theorem:BDtriangletoCPLabduction}. Moreover, the resulting embedding preserves only theory-minimality since $\phi^{\sim}$ does not govern the interaction between $p^\circ$s, $p^+$s, and $p^-$s.
\begin{theorem}\label{theorem:BDcircCPLabduction}
Let $\mathbb{P}=\langle\Gamma,\chi,\Hmsf\rangle$ be a~$\BD$ abduction problem and $\Xi=\Prop[\Gamma\cup\{\chi\}]$. Then $\phi$~is a (theory-minimal, proper) $\Lcirc$-solution of $\mathbb{P}$ iff $\phi^{\sim}$ is a~(theory-minimal, proper) solution of $\mathbb{P}^\cl_\circ\!=\!\left\langle\Gamma^\cl\!\cup\!\left\{\phi^{\leftrightarrow}_{\Xi}\right\},\phi^{\leftrightarrow}_{\Xi}\rightarrow\chi^\cl,\Hmsf^\cl_\circ\right\rangle$ with $\Hmsf^\cl_\circ=\{p^+\mid p\in\Hmsf\}\cup\{p^-\mid\neg p\in\Hmsf\}\cup\{p^\circ\mid\circ p\in\Hmsf\}\cup\{{\sim}p^\circ\mid\bullet p\in\Hmsf\}$.
\end{theorem}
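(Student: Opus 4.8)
The plan is to reduce the $\Lcirc$-abduction problem $\mathbb{P}$ to the classical problem $\mathbb{P}^\cl_\circ$ by leveraging Lemma~\ref{lemma:BDcirctoCPLatomicterms}, which already handles the entailment $\phi,\chi\models_\BD\psi$ at the level of a single term. The key design choice encoded in the theorem statement is to place the ``definitional'' constraints $\phi^\leftrightarrow_\Xi$ (which force ${\sim}q^\circ\leftrightarrow(q^+\leftrightarrow q^-)$, i.e.\ that the fresh variables genuinely encode a $\BD$ valuation) \emph{both} into the theory $\Gamma^\cl\cup\{\phi^\leftrightarrow_\Xi\}$ and into the antecedent of the observation $\phi^\leftrightarrow_\Xi\to\chi^\cl$. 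First I would verify the entailment condition: using Lemma~\ref{lemma:BDcirctoCPLatomicterms} together with the fact that $\phi^\circ_\Xi=\phi^{\sim}\wedge\phi^\leftrightarrow_\Xi$, I would show that $\phi,\chi\models_\BD\psi$ holds iff $\Gamma^\cl\cup\{\phi^\leftrightarrow_\Xi\},\phi^{\sim}\models_\CPL\phi^\leftrightarrow_\Xi\to\chi^\cl$. Having $\phi^\leftrightarrow_\Xi$ already in the theory lets us discharge the antecedent of the observation, so this reduces to $\phi^\circ_\Xi,\chi^\cl\models_\CPL\psi^\cl$, matching the lemma.

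Next I would check the consistency condition: $\Gamma,\phi\not\models_\BD\bot$ must correspond to $\Gamma^\cl\cup\{\phi^\leftrightarrow_\Xi\},\phi^{\sim}\not\models_\CPL\bot$. Here I would reuse the satisfiability correspondence underlying the $\cl$-translation (Lemma~\ref{lemma:BDtriangletoCPL} and the construction of $\phi^\circ_X$): a $\BD$ model of $\Gamma\cup\{\phi\}$ yields a classical model of $\Gamma^\cl$ that respects $\phi^\leftrightarrow_\Xi$ and satisfies $\phi^{\sim}$, and conversely any classical model satisfying $\phi^\leftrightarrow_\Xi$ decodes back to a $\BD$ valuation. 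The role of $\phi^\leftrightarrow_\Xi$ in the theory is precisely to restrict attention to classical models that arise from genuine $\BD$ valuations, so that soundness in both directions goes through. For the \emph{properness} condition $\phi\not\models_\BD\chi$, I would argue analogously that it corresponds to $\phi^{\sim}\not\models_\CPL(\phi^\leftrightarrow_\Xi\to\chi^\cl)$, again relying on the lemma with an empty background theory plus the definitional constraints.

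The subtler part is the \textbf{theory-minimality} clause, and this is where I expect the main obstacle to lie. The comment preceding the theorem warns that $\phi^{\sim}$ alone ``does not govern the interaction between $p^\circ$s, $p^+$s, and $p^-$s,'' which is exactly why $\models_\BD$-minimality and $\models_\CPL$-minimality need \emph{not} be preserved: two classically inequivalent solutions over the free $p^+,p^-,p^\circ$ variables can collapse to weakly equivalent $\Lcirc$-terms once $\phi^\leftrightarrow_\Xi$ is imposed. Theory-minimality, however, is defined \emph{modulo the theory} (Definitions~\ref{def:abductiveproblem} and~\ref{def:CPLabductiveproblem}), and $\phi^\leftrightarrow_\Xi$ has been inserted into $\Gamma^\cl$; so entailments of the form $\Gamma,\phi\models_\BD\sigma$ translate to $\Gamma^\cl\cup\{\phi^\leftrightarrow_\Xi\},\phi^{\sim}\models_\CPL\sigma^{\sim}$, and the definitional constraints that were ``lost'' for pure-term minimality are now available on both sides of the theory-minimality comparison. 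I would therefore show a bijection between proper $\Lcirc$-solutions $\sigma$ of $\mathbb{P}$ and proper solutions $\sigma^{\sim}$ of $\mathbb{P}^\cl_\circ$ that respects the quasi-order ``$\Gamma,\tau\models_\BD\sigma$'' $\longleftrightarrow$ ``$\Gamma^\cl\cup\{\phi^\leftrightarrow_\Xi\},\tau^{\sim}\models_\CPL\sigma^{\sim}$'', and conclude that the $\models_\BD$-incomparable (hence theory-minimal) solutions correspond exactly.

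Finally, I would confirm the bookkeeping around the hypothesis set: the map sending $p\mapsto p^+$, $\neg p\mapsto p^-$, $\circ p\mapsto p^\circ$, $\bullet p\mapsto{\sim}p^\circ$ is a bijection between $\Lcirc$-hypotheses in $\Hmsf$ and the literals in $\Hmsf^\cl_\circ$, and under it every atomic $\Lcirc$-term $\sigma$ composed from $\Hmsf$ maps to the conjunction of literals $\sigma^{\sim}$ composed from $\Hmsf^\cl_\circ$ (note $\bullet p=\neg\circ p$ becomes the literal ${\sim}p^\circ$, so $\sigma^{\sim}$ is genuinely a term of classical literals). Checking that $\Prop(\phi)\subseteq\Xi$ is maintained so that Lemma~\ref{lemma:BDcirctoCPLatomicterms} applies at each invocation is routine but must be stated. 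The three correspondences (entailment, consistency, properness) assembled above then yield both directions of the biconditional, and the theory-minimality correspondence follows from the order-preserving bijection.
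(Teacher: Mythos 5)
Your proposal is correct and takes essentially the same route as the paper: the paper's proof establishes exactly your three correspondences (consistency, entailment with the antecedent $\phi^\leftrightarrow_\Xi$ discharged by the theory, and properness) via Lemma~\ref{lemma:BDcirctoCPLatomicterms}, and handles theory-minimality just as you sketch, by noting that every candidate classical solution over $\Hmsf^\cl_\circ$ has the form $\tau^\sim$ for an atomic $\Lcirc$-term $\tau$ (with $\tau^\leftrightarrow_\Xi=\phi^\leftrightarrow_\Xi$) and then proving your quasi-order equivalence $\Gamma,\phi\models_\BD\tau$ iff $\Gamma^\cl,\phi^\leftrightarrow_\Xi,\phi^\sim\models_\CPL\tau^\sim$. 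The only point worth flagging is that this last equivalence cannot be cited from Lemma~\ref{lemma:BDcirctoCPLatomicterms} (whose right-hand side must be $\psi^\cl$ for $\psi\in\LBD$, whereas $\tau^\sim$ contains the fresh $p^\circ$ variables); the paper instead argues it directly with the valuation translations, using $\phi^\leftrightarrow_\Xi$ to get $v(p^\circ)=\true$ iff $v^\four(p)\in\{\true,\false\}$ --- precisely the ``decoding'' step your plan anticipates.
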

%\begin{proof}
%Cf.~Section~\ref{ssec:BDcircCPLabductionproof}.
%\end{proof}

Theorems~\ref{theorem:BDtriangletoCPLabduction} and~\ref{theorem:BDcircCPLabduction} show that we can use classical techniques of abductive reasoning (such as the ones based upon consequence finding, cf.\ \cite{Inoue1992,delVal2000,Marquis2000HDRUMS,Inoue2002}) to solve $\BD$ abductive problems. Let us now illustrate how our translations work.
\begin{example}\label{example:translation}
Recall Example~\ref{example:impossible1}. We consider two $\BD$ problems: $\mathbb{P}_\triangle=\langle\Gamma,\chi,\Hmsf_\triangle\rangle$ and $\mathbb{P}_\circ=\langle\Gamma,\chi,\Hmsf_\circ\rangle$, where:
\begin{align*}
\Gamma&=\{p\!\vee\!q,\neg p,\neg q\}&\chi&=q&\Hmsf_\triangle&=\{p,\!\neg p,\!\neg\triangle p,\!\neg\triangle\neg p\}\\
&&&&\Hmsf_\circ&=\{p,\neg p,\circ p,\bullet p\}
\end{align*}
Applying Theorems~\ref{theorem:BDtriangletoCPLabduction} and~\ref{theorem:BDcircCPLabduction}, we obtain the following classical problems $\mathbb{P}_\triangle^\cl=\langle\Gamma^\cl,\chi^\cl,\Hmsf_\triangle^\cl\rangle$ and $\mathbb{P}_\circ^\cl=\langle\Gamma^\circ,\chi^\cl,\Hmsf^\cl_\circ\rangle$, where:
\begin{align*}
\Gamma^\cl&=\{p^+\vee q^+,p^-,q^-\}&
\chi^\cl&=q^+\\
\Hmsf_\triangle^\cl&=\{p^+,p^-,{\sim}p^+,{\sim}p^-\}\\
\Gamma^\circ&=\Gamma^\cl\cup\left\{({\sim}r^\circ\!\leftrightarrow\!(r^+\!\leftrightarrow\!r^-))\mid r\in\{p,q\}\right\}\\
\Hmsf_\circ^\cl&=\{p^+,p^-,p^\circ,{\sim}p^\circ\}
\end{align*}
% One can easily see that $\neg\triangle p$ and $\circ p$ are theory-minimal $\Ltriangle$- and $\Lcirc$- solutions, respectively. Hence, by Theorems~\ref{theorem:BDtriangletoCPLabduction} and~\ref{theorem:BDcircCPLabduction}, ${\sim}p^+$ and $p^\circ$ should be theory-minimal \emph{classical} solutions of $(\mathbb{P}^\triangle)^\cl$ and $(\mathbb{P}^\circ)^\cl$, respectively. A~straightforward check reveals that this is indeed the case.
Now if we apply classical consequence-finding procedures, we need to look for clauses entailed by $\Gamma_\triangle^\cl\cup\{{\sim}\chi^\cl\}$ and $\Gamma_\circ^\cl\cup\{{\sim}\chi^\cl\}$. For $\Ltriangle$-solutions, we can take (the negation of) \emph{any clause}. For $\Lcirc$-solutions, the clauses must contain only \emph{negative} occurrences of $p^+$ and~$p^-$.

It is easy to check that ${\sim}p^+$ and $p^\circ$ are theory-minimal \emph{classical} solutions of $\mathbb{P}_\triangle^\cl$ and $\mathbb{P}_\circ^\cl$. They correspond to $\neg\triangle p$ and $\circ p$, respectively, which are (as expected) theory-minimal $\Ltriangle$- and $\Lcirc$- solutions of $\mathbb{P}^\triangle$ and $\mathbb{P}^\circ$. 
\end{example}

We finish the section by noting that in general, $\Ltriangle$-solutions are not uniquely generated from classical clauses (cf.~$p^\cl=(\triangle p)^\cl=p^+$) and that by Proposition~\ref{prop:triangleequivalence}, $\phi\simeq\phi^\flat$. Thus, for practical purposes, it makes sense to convert classical clauses to $\Ltriangle$ solutions $\tau$ s.t.\ $\tau=\tau^\flat$.
\section{Discussion and Future Work\label{sec:conclusion}}
We have studied abductive reasoning in the four-valued paraconsistent logic $\BD$, motivating and comparing $\Ltriangle$- and $\Lcirc$-solutions. Our complexity analysis (Table~\ref{tab:complexitysolutions}) provides an almost complete picture of the complexity of the main decision problems related to abduction.
%
% varioius notions of solution, abduction problems We have explored abductive reasoning from contradictory theories y 
%considering theore
%considered abduction in $\BDtriangle$ and $\BDcirc$. We showed the relation between $\Ltriangle$- and $\Lcirc$-solutions (Theorem~\ref{theorem:representablesolutions}) and proved complexity results for several decision problems considered in abductive reasoning (cf.~Table~\ref{tab:complexitysolutions} for the summary).
In particular, we established that the complexity of solution existence in $\BDtriangle$ and $\BDcirc$ is not higher than in the classical case~\cite{EiterGottlob1995,CreignouZanuttini2006,PichlerWoltran2010,PfandlerPichlerWoltran2015}. 
%In addition, recognising $\Ltriangle$-solutions is \emph{simpler} than classical or $\Lcirc$ ones as long as we do not care for minimality. This is because $\circ p$ is not equivalent to any $\Ltriangle$-term, whence we cannot force classical behaviour of the formulas in a~$\BD$ problem if we want to solve it in $\Ltriangle$.
Moreover, by exhibiting reductions of abduction in $\BDtriangle$ and $\BDcirc$ to abduction in $\CPL$, we have shown that existing procedures for generating abductive solutions in classical logic can be employed for paraconsistent abduction. 
%it possible to employ existing algorithms for $\CPL$ abduction. % to generate solutions in our paraconsistent setting. 

A few questions remain open. First, we do not know the exact complexity of theory-minimal solution recognition and relevance. One way to approach this would be to establish the complexity of closely related notion of theory prime implicants in $\CPL$~\cite{Marquis1995}. % presented by
%and then reduce them to theory-minimal solution recognition in $\BD$ abduction problems. 
There is also the question of how to embed $\BD$ abduction problems with $\Lcirc$-solutions into classical problems while preserving $\models_\BD$-minimal solutions (recall that Theorem~\ref{theorem:BDcircCPLabduction} only preserves theory-minimal solutions). % as all atomic $\Lcirc$-literals are encoded as literals in the language of classical logic, whence, we have $p\wedge\neg p\models_\BD\bullet p$ and $p^+\wedge p^-\not\models_\CPL{\sim}p^\circ$.
Also, since some $\BD$ abduction problems can be solved by arbitrary $\Lcirc$-terms but not atomic ones, it would be interesting to explore the computational properties of $\Lcirc$-solutions based upon non-atomic $\Lcirc$-terms. % as solutions, and how these can be computed. % to make as this will make problems solvable.

A more general direction for future work is to consider abduction in expansions of $\BD$. Of particular interest are \emph{functionally complete} expansions of $\BD$, e.g., the bi-lattice language expansion or an expansion with a~‘quarter-turn’ connective from~\cite{Ruet1996} (cf.~\cite{OmoriSano2015} for more details). An important technical question would be whether all $\Ltriangle$- and $\Lcirc$-solutions can be represented as terms in functionally complete languages. A~further challenge is to come up with an intuitive natural-language interpretation of literals and terms in such languages.
Another option is to consider theories containing implicative formulas (cf.~\cite{OmoriWansing2017} for details). This would allow us to define Horn-like fragments of the languages in question. As solution existence in \emph{classical} Horn abduction problems is $\np$-complete~\cite{CreignouZanuttini2006}, it makes sense to check whether abduction in Horn $\BD$ is also simpler than in the general case.

Additionally, we plan to consider \emph{modal} expansions of $\BD$. Abduction in classical modal logic is well-researched. In particular, \cite{Levesque1989} and~\cite{SakamaInoue2016} study abduction in classical epistemic and doxastic logics; \cite{MayerPirri1995} apply tableaux procedures to solution generation in $\mathbf{K}$, $\mathbf{D}$, $\true$, and $\mathbf{S4}$. \cite{Bienvenu2009} compares different definitions of prime implicates closely related to abductive solutions and provides complexity results and algorithms for prime implicate recognition and generation in multimodal $\mathbf{K}_n$; \cite{Nepomuceno-FernandezSoler-ToscanoVelazquez-Quesada2017} consider abduction in dynamic epistemic logic. Modal expansions of $\BD$ are also well known (cf.~\cite{Priest2008RSL} and~\cite{Drobyshevich2020}). There are also public announcement~\cite{Rivieccio2014} and dynamic~\cite{Sedlar2016} $\BD$ logics. Thus, it is natural to consider abductive reasoning in \emph{modal paraconsistent} framework and see whether classical decision procedures and complexity results can be transferred there.
\section*{Acknowledgements}
The authors were supported by the ANR AI Chair INTENDED (ANR-19-CHIA-0014) ($1^\textnormal{st}$ and $3^\textnormal{rd}$), JSPS KAKENHI (JP21H04905) and the JST CREST (JPMJCR22D3)~($2^\textnormal{nd}$), and an NII MOU grant  ($3^\textnormal{rd}$).

% The $1^\textnormal{st}$ and $3^\textnormal{d}$ authors were supported by the ANR AI Chair INTENDED (ANR-19-CHIA-0014). The $2^\textnormal{d}$ author was supported by the JSPS KAKENHI (JP21H04905) and the JST CREST (JPMJCR22D3). The third author was also supported by the NII Internship Program 2024.

\bibliographystyle{kr}
\bibliography{kr-sample}
% \end{document}
\newpage
\appendix
\section{Proofs of Section~\ref{sec:BD}}
\subsection{Proof of Proposition~\ref{prop:contraposition}\label{ssec:contrapositionproof}}
Let $\phi,\chi\in\Lcirc$ and $\varrho,\sigma,\tau\in\Ltriangle$. Then the following statements hold.
\begin{enumerate}
\item $\phi\models_\BD\chi$ iff $\neg\chi\models_\BD\neg\phi$.
\item $\varrho,\sigma\models_\BD\tau$ iff $\varrho,\neg\triangle\tau\models_\BD\neg\triangle\sigma$.
\item $\varrho,\sigma\models_\BD\tau$ iff $\varrho\models_\BD\neg\triangle\sigma\vee\tau$.
\end{enumerate}
\begin{proof}
We begin with Statement 1. Let $v$ be a~$\BD$ valuation. Define $v^\partial$ as follows: $v^\partial(p)=\neither$ if $v(p)=\both$; $v^\partial(p)=\both$ if $v(p)=\neither$; $v^\partial(p)=v(p)$ otherwise. It is easy to check by induction on $\psi\in\Lcirc$ that $v^\partial(\psi)=\neither$ if $v(\psi)=\both$; $v^\partial(\psi)=\both$ if $v(\psi)=\neither$; $v^\partial(\psi)=v(\psi)$ otherwise.

From here, it follows immediately that the contraposition holds. Indeed, assume that $v(\neg\chi)\in\{\true,\both\}$ and $v(\neg\phi)\notin\{\true,\both\}$. Hence, $v(\chi)\in\{\false,\both\}$ and $v(\phi)\notin\{\false,\both\}$. Consider the case when $v(\phi)=\true$ and $v(\chi)=\both$. Then, $v^\partial(\phi)=\true$ but $v^\partial(\chi)=\neither$, whence $\phi\not\models_\BD\chi$, as required. Other cases can be tackled similarly.

For Statement 2, assume that $v(\varrho)\in\{\true,\both\}$, $v(\neg\triangle\tau)=\true$ and $v(\neg\triangle\sigma)=\false$. It is clear that $v(\sigma)\in\{\true,\both\}$ but $v(\tau)\in\{\neither,\false\}$. Hence, $\varrho,\sigma\not\models_\BD\tau$, as required. The converse direction can be shown in the same manner.

Finally, Statement 3 can be verified by a~routine check.
\end{proof}
\subsection{Proof of Proposition~\ref{prop:triangleequivalence}\label{ssec:triangleequivalenceproof}}
We show that $\phi\simeq\phi^\flat$ for every $\phi\in\Lcirctriangle$.
\begin{proof}
Since $\circ$ is definable via $\triangle$, we can assume that $\phi$ is in $\DNF$. Now, let $l$, $l'$, and $l''$ be literals s.t.\ $\tau=\bigwedge l\wedge\bigwedge\triangle l\wedge\bigwedge\neg\triangle l''$. Using~\eqref{equ:triangleequivalence}, we obtain that $\tau\equiv\bigwedge l\wedge\triangle\left(\bigwedge l\right)\wedge\bigwedge\neg\triangle l''$. It is now easy to check that for every $\phi,\chi\in\Ltriangle$ and every $\circledast\in\{\wedge,\vee\}$, it holds that $\phi\!\circledast\!\triangle\chi\!\simeq\!\phi\!\circledast\!\chi$. Thus, $\tau\!\simeq\!\tau^\flat$ for every disjunct of~$\phi$.

It remains to show that if $\tau_1\simeq\tau'_1$ and $\tau_2\simeq\tau'_2$, then $\tau_1\circledast\tau_2\simeq\tau'_1\circledast\tau'_2$. For $\circledast=\wedge$, we have the following.
\begin{align*}
v(\tau_1\wedge\tau_2)\!\in\!\{\true,\both\}&\text{ iff }v(\tau_1)\!\in\!\{\true,\both\}\text{ and }v(\tau_2)\!\in\!\{\true,\both\}\\
&\text{ iff }v(\tau'_1)\!\in\!\{\true,\both\}\text{ and }v(\tau'_2)\!\in\!\{\true,\both\}\tag{by definition of $\simeq$}\\
&\text{ iff }v(\tau'_1\wedge\tau'_2)\!\in\!\{\true,\both\}
\end{align*}
And similarly for $\circledast=\vee$:
\begin{align*}
v(\tau_1\vee\tau_2)\!\in\!\{\true,\both\}&\text{ iff }v(\tau_1)\!\in\!\{\true,\both\}\text{ or }v(\tau_2)\!\in\!\{\true,\both\}\\
&\text{ iff }v(\tau'_1)\!\in\!\{\true,\both\}\text{ or }v(\tau'_2)\!\in\!\{\true,\both\}\tag{by definition of $\simeq$}\\
&\text{ iff }v(\tau'_1\vee\tau'_2)\!\in\!\{\true,\both\} \qedhere
\end{align*}
\end{proof}
\subsection{Proof of Proposition~\ref{prop:CPLtoBDtrianglecirc}\label{ssec:CPLtoBDtrianglecircproof}}
We show that $\CPL\models\phi$ iff $\BD\models\phi^\triangle$ iff $\BD\models\phi^\circ$.
\begin{proof}
We start with a~proof that $\phi$ is $\CPL$-valid iff $\phi^\circ$ is $\BD$-valid. Observe that $v(\phi^\circ)\in\{\true,\false\}$. Now let $\mathbf{v}$ be a~\emph{classical} valuation s.t.\ $\mathbf{v}(\phi)=\false$. We construct $v_\mathbf{v}$ as follows: $v_\mathbf{v}(p)=\true$ iff $\mathbf{v}(p)=\true$ and $v_\mathbf{v}(p)=\neither$ otherwise. It can be established by a~straightforward induction that $v_\mathbf{v}(\phi^\circ)=\mathbf{v}(\phi)$. For the converse direction, we let $v$ be a~$\BD$ valuation s.t.\ $v(\phi^\circ)=\false$ and define $\mathbf{v}_v(p)=v(\circ p)$. Again, it is easy to check that $v(\phi^\circ)=\mathbf{v}_v(\phi)$.

For $\phi$ and $\phi^\triangle$, we can proceed similarly. Let $\mathbf{v}(\phi)=\false$ and define $v_\mathbf{v}(p)=\mathbf{v}(p)$. It is clear that $\mathbf{v}(\phi)=v_\mathbf{v}(\phi^\triangle)$. For the converse direction, let $v(\phi^\triangle)=\false$ and define $\mathbf{v}_v(p)=v(\triangle p)$. Again, it is easy to verify that $v(\phi^\triangle)=\mathbf{v}_v(\phi)$.
\end{proof}
\subsection{Proof of Proposition~\ref{prop:CPLtoBD}\label{ssec:CPLtoBDproof}}
We show that $\phi\models_\CPL\chi$ iff $\phi\wedge\bigwedge\limits^{n}_{i=1}(p_i\vee\neg p_i)\models_\BD\chi\vee\bigvee\limits^{n}_{i=1}(p_i\wedge\neg p_i)$.
\begin{proof}
The ‘only if’ direction is straightforward as $p\wedge\neg p$ is classically unsatisfiable and $p\vee\neg p$ is classically valid. For the ‘if’ direction, let $v\left(\phi\wedge\bigwedge\limits^{n}_{i=1}(p_i\vee\neg p_i)\right)\in\{\true,\both\}$ and $v\left(\chi\vee\bigvee\limits^{n}_{i=1}(p_i\wedge\neg p_i)\right)\in\{\neither,\false\}$. It suffices to show that $v(p_i)\in\{\true,\false\}$ for all $p_i$'s. Indeed, otherwise, if $v(p_i)=\both$ for some $p_i$, then $v\left(\chi\vee\bigvee\limits^{n}_{i=1}(p_i\wedge\neg p_i)\right)\in\{\true,\both\}$, contrary to the assumption. And if $v(p_i)\!=\!\neither$ for some $p_i$, then $v\left(\phi\!\wedge\!\bigwedge\limits^{n}_{i=1}(p_i\!\vee\!\neg p_i)\right)\!\in\!\{\neither,\!\false\}$, contrary to the assumption.
\end{proof}
\section{Proofs of Section~\ref{sec:complexity}}
\subsection{Proof of Theorem~\ref{theorem:atomiccirctermPtime}\label{ssec:atomiccirctermPtimeproof}}
We prove the following lemma.
\begin{lemma}\label{lemma:atomiccirctermentailmentcriterion}
For any \emph{$\BD$-satisfiable} atomic $\Lcirc$-terms $\sigma$ and $\sigma'$, it holds that $\sigma\models_\BD\sigma'$ iff
\begin{enumerate}
\item[(a)]every literal $\circ p$ that occurs in $\sigma'$ also occurs in $\sigma$;
\item[(b)]for every literal $\bullet p$ occurring in $\sigma'$, $\bullet p$ occurs in $\sigma$ or $p$ and $\neg p$ occur in $\sigma$;
\item[(c)] for every propositional literal $l$ occurring in~$\sigma'$, $l$ occurs in $\sigma$ or $\overline{l}$ and $\bullet l$ occur in $\sigma$.
\end{enumerate}
\end{lemma}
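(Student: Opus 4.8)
The plan is to reduce the entailment to an independent, per-variable containment of ``designated value-sets'' and then verify conditions (a)--(c) by a short finite case analysis. First I would record the basic observation that a conjunction is at least true iff every conjunct is (immediate from the $\wedge$ table, since $\{\true,\both\}$ is closed under $\wedge$ and upward), so $v(\sigma)\in\{\true,\both\}$ iff $v(\ell)\in\{\true,\both\}$ for every literal $\ell$ of $\sigma$. Reading off the tables of Definition~\ref{def:BDsemantics}, each atomic literal constrains the value of its variable to a two-element subset of $\four$: writing $D(\ell)$ for the set of values $v(p)$ making $\ell$ at least true, we have $D(p)=\{\true,\both\}$, $D(\neg p)=\{\false,\both\}$, $D(\circ p)=\{\true,\false\}$, and $D(\bullet p)=\{\both,\neither\}$. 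For a variable $p$ set $A_\sigma(p)=\bigcap\{D(\ell)\mid\ell\text{ a }p\text{-literal of }\sigma\}$ (with $A_\sigma(p)=\four$ when $p$ does not occur in $\sigma$). Since $\BD$ valuations assign variables independently and $\sigma,\sigma'$ are satisfiable (so each $A_\sigma(p)\neq\varnothing$), the entailment $\sigma\models_\BD\sigma'$ holds iff $A_\sigma(p)\subseteq A_{\sigma'}(p)$ for every $p$, equivalently iff $A_\sigma(p)\subseteq D(\ell')$ for every literal $\ell'$ of $\sigma'$. The ``only if'' of this reduction is witnessed by choosing, for a variable where containment fails, some $a\in A_\sigma(p)\setminus A_{\sigma'}(p)$ and extending to a full valuation using nonemptiness of the other $A_\sigma(\cdot)$.

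It then suffices to characterise, for each of the four literal types $\ell'$, when $A_\sigma(p)\subseteq D(\ell')$; conditions (a)--(c) are exactly these characterisations. The combinatorial fact I would isolate is that $\both$ lies in $D(p)$, $D(\neg p)$ and $D(\bullet p)$ but not in $D(\circ p)$, so $\both\in A_\sigma(p)$ unless $\circ p$ occurs in $\sigma$. This gives (a) at once: $A_\sigma(p)\subseteq D(\circ p)=\{\true,\false\}$ forces $\both\notin A_\sigma(p)$, hence $\circ p\in\sigma$, and conversely $\circ p\in\sigma$ yields $A_\sigma(p)\subseteq\{\true,\false\}$. For (b) I would use additionally that the only ways to push $A_\sigma(p)$ inside $\{\both,\neither\}=D(\bullet p)$ are to include $\bullet p$, or to include both $p$ and $\neg p$ (whose intersection $\{\both\}$ reflects $p\wedge\neg p\models_\BD\bullet p$); every other combination leaves $\true$ or $\false$ in $A_\sigma(p)$. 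Condition (c) is the dual statement for a propositional literal $l$: forcing $A_\sigma(p)\subseteq D(l)$ requires including $l$ itself, or else both $\overline{l}$ and $\bullet l$ (again collapsing the value to $\both$, noting $\bullet\overline{l}$ and $\bullet l$ are the same literal).

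The verifications above are finite and mechanical once one enumerates the eight satisfiable value-sets $\four,\{\true,\both\},\{\false,\both\},\{\true,\false\},\{\both,\neither\},\{\both\},\{\true\},\{\false\}$ reachable as intersections of the $D(\ell)$'s, so I do not anticipate a genuine obstacle. The only points needing care are the bookkeeping for a variable occurring in $\sigma'$ but not $\sigma$ (where $A_\sigma(p)=\four\not\subseteq D(\ell')$, matching the failure of (a)--(c), since then no $p$-literal of the required kind occurs in $\sigma$), and confirming that every combination of $p$-literals of $\sigma$ is covered by the case split. Packaging the argument as ``$\sigma\models_\BD\sigma'$ iff $\forall p\,A_\sigma(p)\subseteq A_{\sigma'}(p)$, then dispatch the four literal types of $\sigma'$'' keeps the analysis short and makes the disjunctive shape of (b) and (c) transparent.
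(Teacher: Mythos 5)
Your proof is correct, and it takes a genuinely different route from the paper's. The paper splits the equivalence into two asymmetric halves: the direction from (a)--(c) to $\sigma\models_\BD\sigma'$ is dispatched by citing the entailments $p\wedge\neg p\models_\BD\bullet p$ and $l\wedge\bullet l\models_\BD\overline{l}$ (so each literal of $\sigma'$ is entailed by some literals of $\sigma$), while the converse is proved contrapositively by taking a satisfying valuation of $\sigma$ and modifying it in three ad hoc cases, one per failed condition. You instead prove a single uniform criterion --- $\sigma\models_\BD\sigma'$ iff $A_\sigma(p)\subseteq A_{\sigma'}(p)$ for every $p$, which is sound because atomic literals constrain their variables independently and each $A_\sigma(p)$ is nonempty by satisfiability of $\sigma$ --- and then both directions of (a)--(c) fall out of enumerating the eight realizable value sets. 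Your reduction is valid (you correctly note it needs only $\sigma$'s satisfiability, and you handle variables of $\sigma'$ absent from $\sigma$ via $A_\sigma(p)=\four$), and your per-literal characterisations check out against the truth tables. What your route buys is exhaustiveness: the paper's hand-built countermodels require delicate value choices, and indeed in its third case the prescribed assignment ($v(p)=\false$ when $l=p$) would falsify $\sigma$ if $\bullet p$ occurs in $\sigma$ without $\neg p$ --- the correct witness there is $\neither$, which your enumeration produces automatically since $\{\both,\neither\}\not\subseteq\{\true,\both\}$. A further payoff is that your $A_\sigma(p)$ is essentially the value-set encoding $\sigma^\Vmc$ that the paper introduces separately (Lemma~\ref{lemma:circprefixes}) to recognise $\models_\BD$-minimal $\Lcirc$-solutions, so your argument unifies the two lemmas; the paper's version is merely shorter on the page.
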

\begin{proof}
% We show that for any \emph{$\BD$-satisfiable} atomic $\Lcirc$-terms $\sigma$ and $\sigma'$, it holds that $\sigma\models_\BD\sigma'$ iff (a) every literal $\circ p$ that occurs in $\sigma'$ also occurs in $\sigma$; (b) for every literal $\bullet p$ occurring in $\sigma'$, $\bullet p$ occurs in $\sigma$ or $p$ and $\neg p$ occur in $\sigma$; and (c) for every propositional literal $l$ occurring in~$\sigma'$, $l$ occurs in $\sigma$ or $\overline{l}$ and $\bullet l$ occur in $\sigma$.
The ‘only if’ direction is evident since $p\wedge\neg p\models_\BD\bullet p$ and $l\wedge\bullet l\models_\BD\overline{l}$. For the ‘if’ direction, we reason by cases. Let $v$ be any satisfying valuation for $\sigma$. We are going to modify $v$ so that $v(\sigma)\in\{\true,\both\}$ but $v(\sigma')\notin\{\true,\both\}$. We consider three cases. (1) If there is some $\circ p$ in $\sigma'$ but not in $\sigma$, we set $v(p)=\both$. This gives $v(\sigma')=\false$. (2) If $\bullet p$ occurs in $\sigma'$ but neither $\bullet p$ nor $p$ together with $\neg p$ do, we set $v(p)=\true$ if $p$~occurs in $\sigma$ and $v(p)=\false$, otherwise. Again, we have $v(\sigma')=\false$. (3) Finally, if there is some $l$ in $\sigma'$ s.t.\ neither $l$ nor $\overline{l}$ together with $\bullet l$ occur in $\sigma$, we set $v(p)=\false$ if $l=p$ and $v(p)=\true$ if $l=\neg p$.
\end{proof}

Since it takes polynomial time to check whether (a)--(c) hold, Theorem~\ref{theorem:atomiccirctermPtime} follows.
\subsection{Proof of Lemma~\ref{lemma:BDtriangletoCPL}\label{ssec:BDtriangletoCPLproof}}
\begin{proof}
Let $\phi\not\models_\BD\chi$ and $v$ be a~$\BD$ valuation s.t.\ $v(\phi)\in\{\true,\both\}$ and $v(\chi)\notin\{\true,\both\}$. We set
\begin{align}\label{equ:vcl}
v^\cl(p^+)\!=\!\true&\text{ iff }v(p)\!\in\!\{\true,\both\}\nonumber\\v^\cl(p^-)\!=\!\true&\text{ iff }v(p)\!\in\!\{\false,\both\}
\end{align}
One can now easily check by induction that for every $\psi\in\Ltriangle$, $v(\psi)\in\{\true,\both\}$ iff $v^\cl(\psi^\cl)=\true$.

For the converse direction, we let $\mathbf{v}$ be a~\emph{classical} valuation s.t.\ $\mathbf{v}(\phi)=\true$ and $\mathbf{v}(\chi)=\false$. Now set
\begin{align}\label{equ:vfour}
\mathbf{v}^\four&=
\begin{cases}
\true&\text{ iff }\mathbf{v}(p^+)=\true\text{ and }\mathbf{v}(p^-)=\false\\
\both&\text{ iff }\mathbf{v}(p^+)=\true\text{ and }\mathbf{v}(p^-)=\true\\
\neither&\text{ iff }\mathbf{v}(p^+)=\false\text{ and }\mathbf{v}(p^-)=\false\\
\false&\text{ iff }\mathbf{v}(p^+)=\false\text{ and }\mathbf{v}(p^-)=\true
\end{cases}
\end{align}
Again, it is easy to check that $\mathbf{v}^\four(\psi^\cl)\in\{\true,\both\}$ iff $\mathbf{v}(\psi)=\true$. Note that since $\triangle p\simeq p$, the lack of injectivity of $^\cl$ is not problematic.
\end{proof}
\subsection{Proof of Theorem~\ref{theorem:triangletermsformulasPtime}}
We show that $\phi\models_\BD\chi$ can be decided in polynomial time when $\phi$ is a~\emph{satisfiable $\Ltriangle$-term} and $\chi\in\LBD$ is in $\NNF$.
\begin{proof}
By Lemma~\ref{lemma:BDtriangletoCPL}, we know that this is equivalent to $\phi^\cl\models_\BD\chi^\cl$. Now, we substitute every positive literal in $\phi^\cl$ with $\top$ and every negative with $\bot$ and apply reductions from~\eqref{equ:01reduction} to $\chi^\cl$. Note that the reductions can be conducted in polynomial time w.r.t.\ the length of $\chi$. Furthermore, it is clear that if $(\chi^\cl)^\sharp=\top$, then $\phi^\cl\models_\CPL\chi^\cl$. On the other hand, if $(\chi^\cl)^\sharp\neq\top$, then either $(\chi^\cl)^\sharp=\bot$ (in which case, the entailment evidently fails) or $(\chi^\cl)^\sharp=\psi$ for some ${\sim}$-free $\psi\in\LBD$. We construct a~falsifying valuation as follows: for every $p\in\Prop(\psi)$, we set $v(p)=\false$ and for every $q\in\Prop(\phi)$, we set $v(q)=\true$ iff ${\sim}q$ \emph{does not occur} in $\phi$. As reductions in~\eqref{equ:01reduction} preserve $\CPL$-equivalence, we have that $v(\phi^\cl)=\true$ and $v(\chi^\cl)=\false$. The result follows.
\end{proof}
\section{Proofs of Section~\ref{sec:solvingBDproblems}}
\subsection{Proof of Theorem~\ref{theorem:minimalrecognition}\label{ssec:minimalrecognitionproof}}
First, we need a~technical lemma.
\begin{lemma}\label{lemma:circprefixes}
Define
\begin{align*}
\sigma^\Vmc&=\{\true\both p\mid\text{only $p$ is in }\sigma\}\!\cup\!\{\false\both p\mid\text{only $\neg p$ is in }\sigma\}\cup\\
&\quad\hspace{.45em}\{\true p\mid p,\circ p\in\circLit(\sigma)\}\!\cup\!\{\false p\mid\neg p,\circ p\in\circLit(\sigma)\}\!\cup\\
&\quad\hspace{.45em}\{\both\neither p\mid\text{only $\bullet p$ is in }\sigma\}\!\cup\!\{\true\false p\mid\text{only $\circ p$ is in }\sigma\}\!\cup\\
&\quad\hspace{.45em}\{\both p\mid\text{any two out of }\{p,\neg p,\bullet p\}\text{ are in }\sigma\}
\end{align*}
% and further for two prefixes $\overline{\Xmbf}$ and $\overline{\mathbf{Y}}$ treated as sets,
% \begin{align*}
% \overline{\Xmbf}p\preceq_\four\overline{\mathbf{Y}}p&\text{ iff }\overline{\Xmbf}\subseteq\overline{\mathbf{Y}}
% \end{align*}

Then if we treat prefixes $\overline{\Xmbf}$ and $\overline{\mathbf{Y}}$ as sets, it holds that
\begin{align*}
\sigma\models_\BD\sigma'&\text{ iff }\forall p\in\Prop:
\left(\begin{matrix}
\text{if }\overline{\mathbf{Y}}p\in\sigma'^\Vmc,\text{ then}\\
\exists\overline{\Xmbf}(\overline{\Xmbf}p\in\sigma^\Vmc~\&~\overline{\Xmbf}\subseteq\overline{\mathbf{Y}})
\end{matrix}\right)
\end{align*}
\end{lemma}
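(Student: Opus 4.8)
The plan is to exploit the fact that atomic $\Lcirc$-terms impose \emph{independent, per-variable} constraints. The first step is to record, for a satisfiable atomic $\Lcirc$-term $\sigma$ and a variable $p$, the \emph{allowed set} $A_\sigma(p)\subseteq\four$ consisting of those values $a$ for which setting $v(p)=a$ makes every $p$-literal of $\sigma$ designated (value in $\{\true,\both\}$). Reading off the tables of $\neg$, $\circ$, $\bullet$, the single literals constrain $p$ thus: $p$ allows $\{\true,\both\}$, $\neg p$ allows $\{\false,\both\}$, $\circ p$ allows $\{\true,\false\}$, and $\bullet p$ allows $\{\both,\neither\}$; hence $A_\sigma(p)$ is the intersection of the allowed sets of the $p$-literals of $\sigma$ (with $A_\sigma(p)=\four$ when $p$ does not occur). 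Since a conjunction is designated iff all its conjuncts are (immediate from the $\wedge$-table) and every atomic literal mentions a single variable, grouping the literals of $\sigma$ by variable yields the decomposition $v(\sigma)\in\{\true,\both\}$ iff $v(p)\in A_\sigma(p)$ for every $p$.

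The second step is to check that $\sigma^\Vmc$ is exactly the tabulation of $p\mapsto A_\sigma(p)$: for each variable $p$ occurring in $\sigma$, the unique prefix $\overline{\Xmbf}$ with $\overline{\Xmbf}p\in\sigma^\Vmc$, read as a set, equals $A_\sigma(p)$. This is a finite case analysis over the possible sets of $p$-literals, and the satisfiability hypothesis is essential: by the characterization in the proof of Theorem~\ref{theorem:atomiccirctermPtime}, the combinations $\{p,\neg p,\circ p\}$ and $\{\circ p,\bullet p\}$ (and their supersets) cannot occur, leaving exactly ten consistent nonempty combinations of $p$-literals. I would tabulate these and verify that each falls under exactly one clause of the definition of $\sigma^\Vmc$, with the listed prefix equal to the corresponding intersection --- e.g.\ $\{p,\circ p\}$ gives $\{\true,\both\}\cap\{\true,\false\}=\{\true\}$, matching the clause $\true p$, whereas any two of $\{p,\neg p,\bullet p\}$ force the value $\both$, matching $\both p$. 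Well-definedness (exactly one entry per occurring variable) and nonemptiness of each $A_\sigma(p)$ both fall out of this enumeration.

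The third step reduces entailment to per-variable containment. By Step 1, $\sigma\models_\BD\sigma'$ iff for every valuation $v$, $(\forall p\ v(p)\in A_\sigma(p))\Rightarrow(\forall p\ v(p)\in A_{\sigma'}(p))$; as the variables are constrained independently and each $A_\sigma(p)$ is nonempty (satisfiability), this holds iff $A_\sigma(p)\subseteq A_{\sigma'}(p)$ for every $p$. For the nontrivial direction I would argue contrapositively: if $A_\sigma(p_0)\not\subseteq A_{\sigma'}(p_0)$, pick $a\in A_\sigma(p_0)\setminus A_{\sigma'}(p_0)$, set $v(p_0)=a$, and choose values in the nonempty $A_\sigma(p)$ for the remaining variables, giving $v(\sigma)$ designated but $v(\sigma')$ not. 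Finally I translate this into the lemma's existential phrasing via Step 2: $\overline{\mathbf{Y}}p\in\sigma'^\Vmc$ says $p$ occurs in $\sigma'$ with $A_{\sigma'}(p)=\overline{\mathbf{Y}}$, and the required $\overline{\Xmbf}p\in\sigma^\Vmc$ with $\overline{\Xmbf}\subseteq\overline{\mathbf{Y}}$ says $p$ occurs in $\sigma$ with $A_\sigma(p)\subseteq A_{\sigma'}(p)$. The only edge case is $p$ occurring in $\sigma'$ but not in $\sigma$, where $A_\sigma(p)=\four$; here the existential correctly fails, and indeed $\sigma\not\models_\BD\sigma'$, since every admissible prefix is a \emph{proper} subset of $\four$, so some value witnesses the failure.

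I expect the main obstacle to be the bookkeeping of Step 2: verifying that, under the satisfiability assumption, the seven clauses defining $\sigma^\Vmc$ partition the consistent literal-combinations and that each listed prefix coincides with the intersection of the per-literal allowed sets. Everything else follows routinely from the independence of the constraints and the nonemptiness guaranteed by satisfiability.
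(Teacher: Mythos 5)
Your proof is correct, but it takes a genuinely different route from the paper's. The paper disposes of Lemma~\ref{lemma:circprefixes} in one line, deriving it from Lemma~\ref{lemma:atomiccirctermentailmentcriterion} (established in the appendix proof of Theorem~\ref{theorem:atomiccirctermPtime}), which characterises $\sigma\models_\BD\sigma'$ between \emph{satisfiable} atomic $\Lcirc$-terms by three syntactic matching conditions: (a) every $\circ p$ of $\sigma'$ occurs in $\sigma$; (b) every $\bullet p$ of $\sigma'$ is matched by $\bullet p$ in $\sigma$ or by $p$ and $\neg p$ in $\sigma$; (c) every propositional literal $l$ of $\sigma'$ is matched by $l$ in $\sigma$ or by $\overline{l}$ and $\bullet l$ in $\sigma$. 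The prefix formulation is then simply a re-encoding of (a)--(c) in the value-set notation. You instead prove the semantic characterisation from scratch: you decompose designatedness of an atomic $\Lcirc$-term into independent per-variable constraints $A_\sigma(p)\subseteq\four$ (using that $\wedge$ preserves designatedness exactly componentwise), verify via the ten-case enumeration that $\sigma^\Vmc$ tabulates $p\mapsto A_\sigma(p)$ on occurring variables, and reduce entailment to componentwise containment, with the nonemptiness of each $A_\sigma(p)$ (i.e., satisfiability) powering the countermodel in the contrapositive direction. Your route is self-contained and explains \emph{why} the listed prefixes are what they are --- they are precisely the allowed-value sets --- at the cost of redoing the per-variable countermodel construction that the paper already packaged into Lemma~\ref{lemma:atomiccirctermentailmentcriterion}; the paper's route is shorter given that machinery, but leaves the translation between (a)--(c) and the prefix condition implicit. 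Two delicate points you handled correctly and that any full write-up should keep explicit: the lemma inherits the satisfiability hypothesis (without it $\sigma^\Vmc$ is not even well defined, since the inconsistent combinations containing $\{p,\neg p,\circ p\}$ or $\{\circ p,\bullet p\}$ have empty allowed sets and fall under no clause of the definition), and the edge case of $p$ occurring in $\sigma'$ but not in $\sigma$ is exactly where your observation that every admissible prefix is a proper subset of $\four$ is needed to make the biconditional go through.
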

\begin{proof}
Immediately from Lemma~\ref{lemma:atomiccirctermentailmentcriterion} (cf.~Section~\ref{ssec:atomiccirctermPtimeproof}).
\end{proof}
\begin{proof}[Proof of Theorem~\ref{theorem:minimalrecognition}]
We start by showing membership in $\DP$ for the $\Ltriangle$ case.  
Let $\sigma$ be an $\Ltriangle$-term. We can w.l.o.g.\ assume (recall Proposition~\ref{prop:triangleequivalence}) that all $\triangle$'s in $\sigma$ occur under $\neg$, i.e., $\sigma=\sigma^\flat$. It is clear from Lemma~\ref{lemma:BDtriangletoCPL} that given two $\Ltriangle$-terms $\sigma^\flat$ and $\sigma'^\flat$, we have $\sigma^\flat\models_\BD\sigma'^\flat$ iff $\triangleLit(\sigma^\flat)\supseteq\triangleLit(\sigma'^\flat)$. We can now proceed as follows. Given $\sigma$, we make a call to an $\np$ oracle that guesses (linearly many) valuations $v_\mathsf{sat}$, $v_\mathsf{proper}$, and $v_l$ (for each literal $l$ of $\tau$) and verifies that these valuations witness that (i) $\Gamma,\sigma\not\models_\BD\bot$, (ii) $\sigma\not\models_\BD\chi$, and (iii) $\sigma\not\models_\BD\sigma^{-l}$ for every literal $l$ of $\sigma$, where $\sigma^{-l}$ being the result of deleting one $\Ltriangle$-literal from $\sigma$. At the same time (as we do not need the results of other guesses), we make a $\conp$ check that $\Gamma,\sigma\models_\BD\chi$. It follows from the definition of $\models_\BD$-minimal solutions that $\sigma$ is a $\models_\BD$-minimal solution iff the $\np$ and $\conp$ checks succeed.

Now let $\sigma$ be an atomic $\Lcirc$-term. We proceed in the same manner as with $\Ltriangle$-terms but now we cannot just remove literals because of situations when one term entails another even though they do not have common atomic literals as in $p\wedge\neg p\models_\BD\bullet p$. To circumvent this problem, we transform $\sigma$ into $\sigma^\Vmc$ (defined in Lemma \ref{lemma:circprefixes}) and note that $\sigma\equiv\sigma'$ iff $\sigma^\Vmc=\sigma'^\Vmc$. Now, given $\sigma=\bigwedge\limits^{m}_{i=1}l_i$, we only need to check terms that can be obtained by removing a~literal (there are $m$ of those) and terms corresponding to $\sigma'^\Vmc$ obtained from $\sigma^\Vmc$ by replacing one $\overline{\Xmbf}p$ with some $\overline{\Ymbf}p$ s.t.\ $\overline{\Xmbf}\subseteq\overline{\Ymbf}$ (there are at most $3m$ of those).

For $\DP$-hardness, we reduce the re\-cog\-ni\-tion of classical prime implicants which is $\DP$-com\-p\-lete due to~\cite[Proposition~115]{Marquis2000HDRUMS} to recognition of $\models_\BD$-minimal $\LBD$-solutions (thus, the bound will work for both $\Ltriangle$ and $\Lcirc$). Namely, let $\chi\in\LBD$ and $\tau$ be a~term. We show that $\tau$ is a~prime implicant of $\chi$ iff $\tau$ is a~$\models_\BD$-minimal solution of
\begin{align*}
\mathbb{P}&=\left\langle\left\{\bigwedge\limits_{p\in\Prop(\chi\wedge q)}\!\!\!\!\!\!\!\!(p\vee\neg p),q\right\},(\chi\wedge q)\vee\!\!\!\!\bigvee\limits_{p\in\Prop(\chi)}\!\!\!\!\!\!\!\!(p\wedge\neg p),\Hmsf\right\rangle
\end{align*}
with $\Hmsf=\{r\mid r\in\Prop(\chi)\}\cup\{\neg r\mid r\in\Prop(\chi)\}$ and $q\notin\Hmsf$. Note that all improper solutions of $\mathbb{P}$ are classically unsatisfiable.

Now let $\tau$ be a~classical prime implicant of $\chi$. Then we have that $\tau\consvDashCPL\chi$ and there is no $\tau'$ s.t.\ $\tau\models_\CPL\tau'$ and $\tau'\consvDashCPL\chi$. By Proposition~\ref{prop:CPLtoBD}, we have that $\tau$ is indeed a~solution of $\mathbb{P}$ (observe from Definition~\ref{def:BDsemantics} that $\tau$ is consistent with any $\LBD$-theory because it only contains $\neg$ and $\wedge$). It is a~proper solution because it is classically satisfiable and does not contain $q$. It remains to show that $\tau$ is a~$\models_\BD$-minimal solution. For this, assume for contradiction that there is some weaker proper solution $\tau'$. But $\tau'$ must be classically satisfiable, whence, $\tau\models_\CPL\tau'$ (because term entailment in $\BD$ and $\CPL$ coincide\footnote{Observe that $\tau\models_\BD\tau'$ iff $\Lit(\tau)\supseteq\Lit(\tau')$ iff $\tau\models_\CPL\tau'$.} for the case of \emph{classically satisfiable} $\LBD$-terms). Moreover, by Proposition~\ref{prop:CPLtoBD}, we would have that $\tau'\models_\CPL\chi$ which would contradict the assumption of $\tau$ being a~\emph{prime} implicant.

The converse direction can be shown similarly. Let $\tau$ be a~$\models_\BD$-minimal solution of $\mathbb{P}$. Again, by Proposition~\ref{prop:CPLtoBD}, we have that $\tau\models_\CPL\chi$ and that there is no other $\tau'$ s.t.\ $\tau'\models_\CPL\chi$ and $\tau\models_\BD\tau'$ for else we would have $\tau\models_\BD\tau'$ and $\tau'$ be a~proper solution of $\mathbb{P}$ (again, recall that all proper solutions of $\mathbb{P}$ are classically satisfiable).
\end{proof}
\subsection{Proof of Theorem~\ref{theorem:theoryminimalrecognition}\label{ssec:theoryminimalrecognitionproof}}

Let $\mathbb{P}=\langle\Gamma,\chi,\Hmsf\rangle$ be a~$\BD$ abduction problem, and $\sigma$ is the atomic $\Lcirc$- or $\Ltriangle$-term we wish to test. 
We outline a $\Sigma^\Pmsf_2$ procedure for 
the complementary problem of testing whether $\sigma$ is \emph{not} a theory-minimal $\Ltriangle$-solution. 
We first guess either ‘not a proper solution’ or another term $\sigma'$ in the considered language. 
In the former case, due to the $\conp$- / $\DP$-completeness of proper solution recognition (Theorems \ref{theorem:anyproperBDtrianglesolution} and \ref{theorem:anyproperBDcircsolution}), 
we can make (one or two) calls to an $\np$ oracle to verify that
$\sigma$ is indeed not a proper solution (in which case we return yes). 
In the latter case, we can make a few calls to an $\np$ oracle to verify that 
$\sigma'$ is a proper solution such that $\Gamma, \sigma \models_\BD \sigma'$ and $\Gamma, \sigma' \not \models_\BD \sigma$ 
(returning yes if the calls show this to be the case). 
It is easy to see that the procedure we have just described will return yes 
iff the input $\sigma$ is not a theory-minimal solution, which yields the desired
membership in $\Pi^\Pmsf_2$ for the original problems. 

\section{Proofs of Section~\ref{sec:problemembeddings}}
\subsection{Proof of Lemma~\ref{lemma:BDcirctoCPLatomicterms}\label{ssec:BDcirctoCPLatomictermsproof}}
Let $\chi,\psi\in\LBD$ and $\phi$ be an \emph{atomic} $\Lcirc$-term s.t.\ $\Prop(\phi)\subseteq\Prop(\chi)\cup\Prop(\psi)$ and $\Xi=\Prop(\chi)\cup\Prop(\psi)$. Then
\begin{align*}
\phi,\chi\models_\BD\psi&\text{ iff }\phi^\circ_\Xi,\chi^\cl\models_\CPL\psi^\cl
\end{align*}
\begin{proof}
For the ‘if’ direction, assume that $\phi,\chi\not\models_\BD\psi$ and let $v$ be a~$\BD$-valuation that falsifies this entailment. We show that $v^\cl$ (recall~\eqref{equ:vcl}) falsifies the classical entailment. It is clear that $v^\cl(\chi^\cl)=\true$ and $v^\cl(\psi^\cl)=\false$, so, it remains to show that $v^\cl(\phi^\cl)=\true$. First, we set
\begin{align}\label{equ:vclpcirc}
\forall q\in\Xi:v^\cl(q^\circ)=\true&\Leftrightarrow v(q)\in\{\true,\false\}
\end{align}
Thus, $v({\sim}q^\circ\leftrightarrow(q^+\leftrightarrow q^-))=\true$. Now observe that since $v(\phi)\in\{\true,\both\}$, we have that $v(p_i)\in\{\true,\both\}$ and $v(p_{i'})\in\{\false,\both\}$ for $i\leq m$ and $i'\leq m'$. Thus, $v^\cl(p^+_i)=\true$ and $v^\cl(p^-_{i'})=\false$. In addition, we have that $v(p_j)\in\{\true,\false\}$ and $v(p_{j'})\in\{\both,\neither\}$ for $j\leq n$ and $j'\leq n'$. Thus, using~\eqref{equ:vclpcirc}, we obtain that $v^\cl(p^\circ_j)=\true$ and $v^\cl(p^\circ_{j'})=\false$ which gives us $v^\cl(\phi^\cl)=\true$, as required.

For the converse direction, let $\mathbf{v}$ be a~classical valuation s.t.\ $\mathbf{v}(\phi^\cl)=\mathbf{v}(\chi^\cl)=\true$ and $v(\psi^\cl)=\false$. We construct $\mathbf{v}^\four$ as in~\eqref{equ:vfour}. This gives us that $\mathbf{v}^\four(\chi)\in\{\true,\both\}$ but $v^\four(\psi)\notin\{\true,\both\}$. We check that $v^\four(\phi)\in\{\true,\both\}$. Since $\mathbf{v}(\phi^\cl)=\true$, we immediately obtain that $\mathbf{v}^\four(p_i)\in\{\true,\both\}$ and $\mathbf{v}^\four(p_{i'})\in\{\false,\both\}$. Additionally, as $\mathbf{v}(p^\circ_j)=\true$, we have that $\mathbf{v}(p^+_j)\neq v(p^-_j)$. I.e., $\mathbf{v}^\four(p_j)\in\{\true,\false\}$, whence $\mathbf{v}^\four(\circ p_j)=\true$, as required. A~similar argument can be used to show that $\mathbf{v}^\four(\bullet p_{j'})=\true$. The result now follows.
\end{proof}
\subsection{Proof of Theorem~\ref{theorem:BDcircCPLabduction}\label{ssec:BDcircCPLabductionproof}}
Let $\mathbb{P}=\langle\Gamma,\chi,\Hmsf\rangle$ be a~$\BD$ abduction problem and $\Xi=\Prop[\Gamma\cup\{\chi\}]$. Then $\phi$~is its (theory-minimal) proper $\Lcirc$-solution iff $\phi^{\sim}$ is a~(theory-minimal) proper solution of $\mathbb{P}^\cl_\circ\!=\!\left\langle\Gamma^\cl\!\cup\!\left\{\phi^{\leftrightarrow}_\Xi\right\},\phi^{\leftrightarrow}_{\Xi}\rightarrow\chi^\cl,\Hmsf^\cl_\circ\right\rangle$ with $\Hmsf^\cl_\circ=\{p^+\mid p\in\Hmsf\}\cup\{p^-\mid\neg p\in\Hmsf\}\cup\{p^\circ\mid\circ p\in\Hmsf\}\cup\{{\sim}p^\circ\mid\bullet p\in\Hmsf\}$.
\begin{proof}
Let $\phi$ be an atomic $\Lcirc$-term. We are going to show that the following statements hold for $\phi^\sim$ (cf.~Definitions~\ref{def:CPLabductiveproblem} and~\ref{def:atomiccirctermtoCPL}).
\begin{enumerate}
\item $\Gamma^\cl,\phi^\leftrightarrow_\Xi,\phi^\sim\not\models_\CPL\bot$ iff $\Gamma,\phi\not\models_\BD\bot$.
\item $\Gamma^\cl,\phi^\leftrightarrow_\Xi,\phi^\sim\models_\CPL\phi^\leftrightarrow_\Xi\rightarrow\chi^\cl$ iff $\Gamma,\phi\models_\BD\chi$.
\item 
$\phi^\sim\not\models_\CPL\phi^\leftrightarrow_\Xi\rightarrow\chi^\cl$ iff $\phi\not\models_\BD\chi$.
\item There is no proper solution $\phi'$ of $\mathbb{P}^\cl$ s.t.\ $\Gamma^\cl,\phi^\leftrightarrow_\Xi,\phi^\sim\models_\CPL\phi'$ but $\Gamma^\cl,\phi^\leftrightarrow_\Xi,\phi' \not \models_\CPL\phi^\sim$ iff there is no proper solution $\tau$ of $\mathbb{P}$ s.t.\ $\Gamma,\phi\models_\BD\tau$ but $\Gamma,\tau\not\models_\BD\phi$. 
\end{enumerate}

Point 1 follows immediately from Lemma~\ref{lemma:BDcirctoCPLatomicterms} because $\psi$ is \emph{classically unsatisfiable} (and similarly, $\BD$-unsatisfiable) iff it entails a~fresh variable $p^+$. But in this case, $\Gamma^\cl,\phi^\leftrightarrow_\Xi,\phi^\sim\not\models_\CPL p^+$ iff $\Gamma,\phi\not\models_\BD p$ for a~fresh $p$. This is equivalent to $\Gamma,\phi\not\models_\BD\bot$, as required.

To check Point 2, we observe that it is equivalent (via the deduction theorem and contraction) to the following entailment: $\Gamma^\cl,\phi^\leftrightarrow_\Xi,\phi^\sim\models_\CPL\chi^\cl$. Again, by Lem\-ma~\ref{lemma:BDcirctoCPLatomicterms}, this is equivalent to $\Gamma,\phi\models_\BD\chi$. Similarly, Point 3 is equivalent to $\phi^\sim,\phi^\leftrightarrow_\Xi\not\models_\CPL\chi^\cl$. By Lemma~\ref{lemma:BDcirctoCPLatomicterms}, this is equivalent to $\phi\not\models_\BD\chi$, as required.

Points 1--3 imply that $\phi\!\in\!\PropSol(\mathbb{P})$ iff $\phi^\sim\!\in\!\PropSol(\mathbb{P}^\cl_\circ)$. It remains to show that the same holds for \emph{theory-minimal solutions} too.

To show Point 4, assume for a contradiction that there is another proper solution $\phi'$ of $\mathbb{P}^\cl_\circ$ s.t.\ $\Gamma^\cl,\phi^\leftrightarrow_\Xi,\phi^\sim\models_\CPL\phi'$. Note from the construction of $\Hmsf^\cl_\circ$ that all negative literals in $\phi'$ are ${\sim}p^\circ$. Thus, there is some atomic $\Lcirc$-term $\tau$ s.t.\ $\tau^{\sim}=\phi'$ and $\tau^\leftrightarrow_\Xi=\phi^\leftrightarrow_\Xi$. It suffices to show that the following equivalence holds.
\begin{align*}
\Gamma^\cl,\phi^\leftrightarrow_\Xi,\phi^\sim\models_\CPL\tau^\sim&\text{ iff }\Gamma,\phi\models_\BD\tau
\end{align*}
Let $\Gamma^\cl,\phi^\leftrightarrow_\Xi,\phi^\sim\not\models_\CPL\tau^\sim$ and let further, $v$ be a~falsifying valuation. We define $v^\four(p)$ as in~\eqref{equ:vfour}. Now, it is clear from the proof of Lemma~\ref{lemma:BDtriangletoCPL} (cf.~Section~\ref{ssec:BDtriangletoCPLproof}) that $v^\four[\Gamma]\in\{\true,\both\}$. It remains to check that $v^\four(\phi)\in\{\true,\both\}$ but $v(\tau)\notin\{\true,\both\}$. For this, it suffices to show that $v^\four(\circ p)=\true$ iff $v(p^\circ)=\true$. But observe that since $v(\phi^\leftrightarrow_\Xi)=\true$, we have that $v(p^\circ)=\true$ iff $v(p^+)\neq v(p^-)$, i.e., (by definition of $v^\four$) iff $v^\four(p)\in\{\true,\false\}$, which is equivalent to $v^\four(\circ p)=\true$, as required.

Conversely, let $\Gamma,\phi\not\models_\BD\tau$ and $v$ be a~falsifying valuation. We define $v^\cl$ as in~\eqref{equ:vcl}. Again, it is clear that $v^\cl[\Gamma^\cl]=\true$. To verify that $v^\cl(\phi^\leftrightarrow_\Xi)=\true$ and $v^\cl(\phi^\sim)=\true$, we check that $v^\cl({\sim}p^\circ\leftrightarrow(p^+\leftrightarrow p^-))=\true$ for every $p^\circ\in\Hmsf$. For this, observe that $v(\circ p)=\true$ iff $v(p)\in\{\true,\false\}$, i.e., iff $v^\cl(p^+)\neq v^\cl(p^-)$. Thus, indeed, $v^\cl({\sim}p^\circ\leftrightarrow(p^+\leftrightarrow p^-))=\true$, as required.
\end{proof}
\end{document}